\documentclass[twocolumn,10pt]{article}

\usepackage{arxiv}

\usepackage[utf8]{inputenc} 
\usepackage[T1]{fontenc}    

\usepackage{url}            

\usepackage{amsfonts}       
\usepackage{nicefrac}       
\usepackage{microtype}      
\usepackage{graphicx}
\usepackage[authoryear,sort]{natbib}
\setcitestyle{authoryear,comma}  

\usepackage{amssymb}
\usepackage{amsmath}
\usepackage{amsthm}
\usepackage{tabularx}
\usepackage{booktabs} 
\usepackage{float}
\usepackage{makecell}
\usepackage{rotating}
\usepackage{xcolor} 
\usepackage[normalem]{ulem}
\usepackage{siunitx}
\usepackage{mathrsfs}  
\usepackage{mathtools}
\usepackage{graphicx}
\usepackage{array}
\usepackage[linktoc=page,
colorlinks=true,	
linkcolor=blue,
citecolor=blue,
urlcolor=blue]{hyperref}       

\usepackage{doi}

\usepackage{newtxtext}
\usepackage{newtxmath}
\usepackage{upgreek}
\usepackage{enumitem}
\usepackage{silence}
\theoremstyle{definition}
\newtheorem{lemma}{Lemma}
\newtheorem{proposition}{Proposition}

\newtheorem{remark}{Remark}
\newtheorem{corollary}{Corollary}

\newcommand{\Set}[1]{\mathcal{#1}} 
\newcommand{\Setnum}[1]{\mathbb{#1}} 
\newcommand{\Group}[1]{\mathcal{#1}} 

\newcommand{\ve}[1]{\boldsymbol{#1}} 
\newcommand{\te}[1]{\mathbf {#1}} 
\newcommand{\V}[1]{\uline{\boldsymbol{#1}}} 
\newcommand{\M}[1]{\uuline{\boldsymbol{#1}}} 
\newcommand{\nt}{n_\textrm{steps}} 
\newcommand{\nn}{n_{\textrm{nodes}}} 
\newcommand{\nele}{n_\textrm{el}} 
\newcommand{\niter}{n_\textrm{iter}} 
\newcommand{\nqp}{n_{\textrm{qp}}} 
\newcommand{\nf}{n_{\textrm{force}}} 
\newcommand{\tvar}[1]{{}^{s}\!{#1}}
\newcommand{\tsvar}[1]{{}^{s}\hspace{-1pt}{#1}} 
\newcommand{\tivar}[1]{{}^{s,i}\!{#1}}

\newcommand{\ts}[2]{\mathcal{L}^{#1}_{#2}} 
\newcommand{\tss}[1]{\mathcal{L}^{#1}_{\textrm{sym}}} 
\newcommand{\tsa}[1]{\mathcal{L}^{#1}_{\textrm{skw}}} 
\newcommand{\hms}[1]{\mathbb{R}^{#1}} 
\newcommand{\tr}{\operatorname{tr}} 
\newcommand{\sym}{\operatorname{sym}} 
\newcommand{\cl}{\operatorname{cl}} 
\newcommand{\diffp}[2]{\frac{\partial #1}{\partial #2}} 
\newcommand{\diff}[2]{\frac{\textrm{d} #1}{\textrm{d} #2}} 
\newcommand{\point}{\text{ .}}
\newcommand{\comma}{\text{ ,}}
\newcommand{\diss}{\varPhi^*}

\newcommand{\hfeeq}{\varPsi^{\textrm{eq}}}
\newcommand{\hfeneq}{\varPsi^{\textrm{neq}}}
\newcommand{\hfeeqpann}{\varPsi^{\textrm{eq,PANN}}}
\newcommand{\hfeeqnn}{\varPsi^{\textrm{eq,NN}}}
\newcommand{\hfeneqpann}{\varPsi^{\textrm{neq,PANN}}}
\newcommand{\hfeneqnn}{\varPsi^{\textrm{neq,NN}}}
\newcommand{\disspann}{\varPhi^{*\textrm{,PANN}}}
\newcommand{\dissnn}{\varPhi^{*\textrm{,NN}}}

\newcommand{\Free}{\mathcal{FREE}} 
\newcommand{\Force}{\mathcal{FORCE}} 

\title{Calibration of neural viscoelastic models via full-field data}

\author{
	Brain M. Riemer\\
	Institute of Solid Mechanics\\
	TU Dresden,
	01062 Dresden, Germany \\
	\And
	Markus K\"{a}stner\\
	Institute of Solid Mechanics\\
	TU Dresden, 
	01062 Dresden, Germany \\
    \And
    Karl A. Kalina$^{*}$\\
	Institute of Solid Mechanics\\
	TU Dresden,
	01062 Dresden, Germany \\
}

\renewcommand{\shorttitle}{Calibration of neural viscoelastic models via full-field data}

\hypersetup{
	pdftitle={Preprint\_RiemerEtAl\_2026b},
	pdfsubject={},
	pdfauthor={RiemerEtAl},
	pdfkeywords={}
}

\theoremstyle{definition}

\usepackage[font=small]{caption} 

\begin{document}
\twocolumn[
\begin{@twocolumnfalse}
    \maketitle
    \begin{abstract}
        We propose an unsupervised learning framework for calibrating a physics-augmented neural network (PANN) for small-strain viscoelasticity via full-field data. It only requires quantities that are directly accessible in real experiments for training, namely global reaction forces and surface displacements. The underlying PANN is embedded in the generalized standard materials theory, in which two scalar-valued potentials render the constitutive model thermodynamically consistent by construction, while invariant-based representations of the free energy and the dual dissipation potential additionally ensure material symmetry. Considering a thin specimen under the plane stress assumption, we formulate a constrained optimization problem based on the equilibrium gap method in combination with quasi-Newton optimizers and automatic differentiation. Thereby, the unknown out-of-plane strain follows from the plane stress condition and the evolution of the internal variables is captured by an implicit time integration scheme. The resulting system of nonlinear equations is solved via a local Newton iteration at quadrature point and time step. To drastically reduce the computational cost of training, the backward adjoint method is employed to compute the gradient of the target loss, instead of backpropagating through all Newton iteration steps. 
        The proposed framework is demonstrated for synthetic data, including noisy displacements and forces, showing excellent agreement across a wide range of deformation rates and load paths.
    \end{abstract}
    \keywords{viscoelasticity \and generalized standard materials \and physics-augmented neural networks \and full-field data \and equilibrium gap method}	
    \vspace{0.5cm}
\end{@twocolumnfalse}
]
{
\renewcommand{\thefootnote}{}
\footnotetext{$^{*}$Corresponding author, email: \texttt{karl.kalina@tu-dresden.de}.}
}
\section{Introduction}
\label{sec:intro}
Over the past century, extensive research has established the physical and mathematical principles that constitutive models should obey \citep{silhavy_mechanics_1997,Haupt2000,holzapfel_nonlinear_2000}, giving rise to a wide range of so-called \emph{classical constitutive models}. However, when applied to materials that exhibit highly nonlinear behavior, such models may be too restrictive to accurately fit the data and may need to be modified for each new set of experimental data. To address these shortcomings, \emph{machine learning} approaches---particularly \emph{neural networks (NNs)}---have emerged as powerful tools for constitutive modeling \citep{Dornheim2023,Fuhg2024b}, offering the flexibility to capture complex material responses while automating the modeling process.

In the following, we first give an overview of neural constitutive modeling for 
both elasticity and inelasticity. Building on this, we then review the most common 
strategies for calibrating such data-driven material models from full-field data. 
\subsection{Literature review on neural constitutive modeling}
\label{ssec:nn_review}
In their seminal work from the early 1990s, \cite{Ghaboussi1991} 
were the first to employ NNs as constitutive models.  Inspired by this work, a number of follow-up studies were conducted. However, after these initial steps, the
field was not actively pursued for quite some time afterward. With the recent rise of machine learning and gains in computational efficiency, however, a variety of data-driven techniques have rapidly gained momentum in mechanics, as reviewed in 
\citep{Bock2019,Dornheim2023,Fuhg2024b}.

A significant step forward in neural constitutive modeling---and in scientific 
machine learning more broadly---is the integration of fundamental physical 
concepts, variously termed \emph{physics-informed} \citep{Raissi2019}, 
\emph{mechanics-informed} \citep{Asad2023}, \emph{physics-augmented} 
\citep{linden_neural_2023,Klein2024}, \emph{physics-based} \citep{Aldakheel2025}, 
\emph{physics-constrained} \citep{Kalina2023}, or \emph{thermodynamics-based} 
\citep{Masi2021}. Physical knowledge can be embedded in two ways \citep{rosenkranz_comparative_2023}: \emph{strongly}, 
through network architectures tailored to the problem 
\citep{Kalina2022a,Linka2021}, or \emph{weakly}, through problem-specific loss 
functions used during training \citep{Masi2021,Geiger2025}. As demonstrated 
in \citep{Masi2021,linden_neural_2023,Fuhg2023}, such models substantially enhance 
extrapolation capability. In the following, we provide a brief overview of the 
current state of the art in neural constitutive modeling.

Numerous works model elasticity with NNs, where the most common approach is to use 
architectures with the \emph{hyperelastic potential} as output and 
\emph{invariants} as inputs, e.g., \citep{Linka2021,Klein2021,Fuhg2022b,linden_neural_2023,Tac2024a,Bahmani2024,Benady2024,Peirlinck2024}. 
Direct calibration from stress-strain tuples is enabled via \emph{Sobolev 
training} proposed by \citet{Czarnecki2017} and  adopted to mechanics by \citet{Vlassis2020}, in which the loss function involves 
the gradient of the energy with respect to the deformation. Many works further 
employ \emph{polyconvex} NNs 
\citep{Klein2021,Tac2022a,Chen2022,Tac2024a,Bahmani2024,Jadoon2025a,Dammass2025b,Vijayakumaran2025,Geuken2025a} among many others, which enhance extrapolation \citep{linden_neural_2023,Kalina2024} and guarantee 
\emph{rank-one convexity} and thus \emph{ellipticity} 
\citep{Schroder2010}. This is most commonly realized via the \emph{fully 
input-convex neural networks (FICNNs)} of \citet{amos_input_2017}.

Besides neural elasticity models, a large number of works on \emph{inelasticity} have also emerged in recent years. A key requirement for any inelastic material model is compatibility with the \emph{Clausius-Duhem inequality}, which demands non-negative dissipation rates 
$D\ge 0$. 
A first group of data-driven models \emph{enforces} this through 
tailored loss terms. An influential example is the 
\emph{thermodynamics-based artificial neural network (TANN)} by \citet{Masi2021}, which realizes neural inelasticity. 
Despite its innovative character, the approach has two drawbacks: \emph{thermodynamic consistency is only weakly enforced}, and knowledge of the internal variables is required for 
training, which was solved later via integration for constitutive equations \citep{Masi2024}. A related approach for inelasticity is presented in \citep{He2022}: in 
contrast to the TANN, the internal state variables capturing the path-dependency 
are inferred from the hidden state of a recurrent neural network 
(RNN). 

Another class of \emph{neural inelasticity} models follows the 
\emph{structure of classical elasto-plastic formulations} more closely, e.g., 
\citep{Settgast2020,Malik2021,Vlassis2021,Vlassis2021b}, but does not guarantee 
$D\ge 0$ by construction. This limitation is overcome in 
\citep{Meyer2023a,Fuhg2023}, where \emph{thermodynamically consistent neural 
elasto-plasticity models} for small deformations are obtained, e.g., by modeling 
the dissipation or plastic potential via FICNNs \citep{Fuhg2023}. In both cases, 
training requires no explicitly prescribed internal variables; instead, the 
\emph{return-mapping problem} is solved within each optimization step. Extensions 
to finite deformations \citep{Boes2024,Jadoon2025} and, most recently, to 
finite-strain viscoelastic-plastic behavior with combined nonlinear kinematic and 
isotropic hardening \citep{vanderVelden2026} build on these ideas.

Beyond elasto-plasticity, several groups have also carried out extensive research 
on \emph{neural viscoelasticity models} grounded in a rigorous physical framework. 
An important contribution is that of \citet{Huang2022}, 
embedded in the \emph{generalized standard material (GSM)} framework\footnote{The GSM framework itself was introduced by 
\citet{Halphen1975} and traces back to pioneering works 
such as those of \citet{Coleman1963,Coleman1967} and \citet{Biot1965}, among others.}: 
thermodynamic consistency is ensured through a \emph{dissipation potential} that 
is convex with respect to the internal variables' rates, normalized, and 
stationary at zero rates---or, equivalently, through a \emph{dual dissipation 
potential} with analogous properties that depends on the thermodynamically 
conjugated forces. 
Meanwhile, several data-driven approaches building up on a small-strain GSM framework exist, 
e.g., \citep{rosenkranz_viscoelasticty_2024,Flaschel2025,Shi2026}. In contrast to 
\citep{Huang2022}, these require no prescribed internal variables during 
training---only their number must be specified. As discussed and benchmarked by 
\citet{rosenkranz_viscoelasticty_2024}, two main strategies exist: solving a \emph{constrained 
optimization problem} subject to the time-discretized evolution equation(s), which 
entails time integration over the entire sequence, or employing an 
\emph{auxiliary network}---either FNN \citep{Asad2023} or RNN \citep{rosenkranz_viscoelasticty_2024}---to 
\emph{provide the internal variable(s) during training}. Notably, in the latter 
case the auxiliary network is used only during training; afterwards, the evolution 
equation is solved in prediction mode.

Several works also address the more general case of \emph{finite-strain viscoelasticity}. 
A transversely isotropic model based on the multiplicative split of the 
deformation gradient and using \emph{neural ordinary differential equations 
(NODEs)} is proposed in \citep{Tac2023}, while a related GSM-based model assuming 
incompressibility and the multiplicative decomposition represents the thermodynamic 
potentials via FICNNs \citep{kalina_physics-augmented_2026}. Isotropic models with a multiplicative 
split and a co-rotational formulation are presented in 
\citep{Holthusen2024,Holthusen2024a,Holthusen2026}. Here, \citet{Holthusen2026} 
introduce a dual dissipation potential that ensures thermodynamic consistency 
under a less restrictive convexity requirement---the potential needs only be 
convex, stationary, and normalized in a modified invariant set rather than with 
respect to the thermodynamic forces themselves. An extension to transversely 
isotropic viscoelasticity follows in \citep{Holthusen2026a}. A further GSM-based 
finite-strain model, which does not assume the multiplicative split, is given in 
\citep{Asad2023,Asad2026}. 

Other thermodynamically consistent models avoid the two-potential framework 
altogether, including formulations based on deformation rates 
\citep{Upadhyay2026}, the NN-based model \citep{Abdolazizi2026}\footnote{This paper is a thermodynamically consistent follow-up to the work \citep{Abdolazizi2023a} that builds on the \emph{generalized Prony series}.} grounded in the 
framework of \citet{Liu2021b}, and deep rheological elements modeling the viscosity 
via NNs \citep{Califano2026}. 
Finally, a physics-augmented NN framework for \emph{strain-induced 
crystallization}, founded on a modified GSM framework describing the evolution of 
crystallinity, is presented in \citep{Friedrichs2026}.
\subsection{Literature review on full-field calibration techniques}
\label{sect:full-field_review}
Data for calibrating neural constitutive models can be obtained in various ways. 
Depending on their origin and spatial resolution, the sources fall into three 
categories: \emph{lower-scale simulations}, \emph{conventional experiments}, and 
\emph{full-field experiments}. We focus on the latter in the following.

Beyond global quantities such as \emph{reaction forces}, full-field setups provide 
spatially resolved displacement fields via \emph{full-field measurement 
techniques}. Unlike conventional experiments, the specimens are deliberately 
designed to produce \emph{inhomogeneous local fields}, thereby shifting 
constitutive modeling from a limited-data to a large-data regime 
\citep{Pierron2023,Fuhg2024b}. However, stress-strain pairs cannot be determined 
directly in this setting: while inhomogeneous displacement fields are accessible 
through camera systems with \emph{digital image correlation (DIC)} in two 
dimensions \citep{Pierron2020} or \emph{in-situ computed tomography (CT)} combined 
with \emph{digital volume correlation (DVC)} in three dimensions 
\citep{Lenoir2007}, the corresponding stress fields are not. Identifying the 
model parameters therefore requires solving an \emph{inverse problem}.

Numerous methods for inverse parameter identification have been developed over the 
past decades; comprehensive overviews are given in 
\citep{Avril2008,Roux2020,Romer2024,Chen2025}. In 
\citep{Mahnken1996,Sakaridis2024}, for instance, the parameters of 
classical models are identified by iteratively running finite element simulations 
of the experiment and minimizing the discrepancy between measured and simulated 
responses---an approach known as \emph{finite element model updating (FEMU)} 
\citep{Avril2008,Romer2024}. Further methods tailored specifically to inverse 
problems include the \emph{virtual fields method (VFM)} 
\citep{Grediac1989,pierron_virtual_2012}, the \emph{equilibrium gap method (EGM)} 
\citep{Claire2004}\footnote{\label{foot:EGM_VFM}The EGM can be interpreted as a 
special case of the VFM in which the virtual fields are built from locally 
supported finite element test functions, cf. \citep{Avril2008,Romer2024}.}, and the 
\emph{modified constitutive relation error (mCRE)} method.

\begin{remark}
    \label{rem:ROI_data}
	For inverse methods that obtain displacement fields from forward 
	simulations, such as FEMU or the mCRE, full-field displacement 
	measurements are not strictly required: the loss function can be formulated 
	purely in terms of global quantities such as reaction forces. When available, however, local displacement data is often 
	included to improve identification accuracy. In contrast, the EGM and VFM rely 
	directly on full-field displacement data over the entire domain to compute 
	deformation measures and, subsequently, stresses, so that no forward 
	simulation of the boundary value problem is needed. These approaches are 
	therefore typically far more efficient, but may be more sensitive to 
	measurement noise.
\end{remark}

Recently, inverse identification techniques have been transferred to a 
data-driven context, and new approaches have emerged. The \emph{EUCLID framework (efficient unsupervised constitutive law identification \& discovery)} by \citet{Flaschel2021,Flaschel2023}, for instance, extends the EGM/VFM by simultaneously determining parameters and suitable constitutive models from a catalog via sparse regression, with experimental validation on 2D DIC data given in \citep{Abbasi2026a}. NN-EUCLID \citep{Thakolkaran2022} replaces the model 
catalog with a neural constitutive model, while CANN-EUCLID \citep{Alheit2026} 
calibrates a constitutive artificial neural network (CANN) following 
\citet{Linka2023}. Related strategies discover neural yield surfaces 
\citep{Moon2026} and train Kolmogorov-Arnold networks (KANs) \citep{Thakolkaran2025}. 
Applications include 3D displacement fields from bulge inflation \citep{Meng2025} 
and 3D DVC data from a printed structure \citep{Bourdyot2026}. Along the same 
lines, \citet{Ferreira2026} calibrate various neural elasto-plastic models, 
including RNN-based ones, via EGM, whereas \citet{Lourenco2024} integrate 
RNN models into the VFM. An approach combining the statistical finite element method (stat-FEM) with EUCLID is introduced in \citep{knauf_narouie_unsupervised_2026}. Thereby, the
integration of the stat-FEM reduces sensitivity to noise and sparse displacement data.
Further approaches address anisotropic elasticity 
\citep{Li2026} and heterogeneous samples \citep{Shi2025a,Tac2026,Chaurasiya2026}.
The \emph{NN-mCRE} approach 
extends the mCRE concept, replacing the classical constitutive model with a physics-augmented neural network (PANN) 
\citep{Benady2024,Benady2024a}. PANNs are also combined with FEMU \citep{Wu2025}, 
as are spline-based constitutive models \citep{Wiesheier2024,Wiesheier2026} and 
CANNs for incompressible hyperelasticity \citep{Knipper2026}. In the 
same manner, neural elasto-plastic \citep{Gavris2025} and neural cohesive zone 
models \citep{Gavris2026} are calibrated.
\subsection{Objectives and contributions of this work}
\label{ssec:objectives}
As discussed in the literature overview given above, numerous \emph{neural 
viscoelasticity models} exist that combine modern machine learning methods with a 
sound physical basis. Among these, NN models that employ invariants and are 
embedded into the GSM framework appear very promising, as they enforce material 
symmetry and thermodynamic consistency by construction. However, irrespective of 
the specific architecture, the vast majority of these approaches are calibrated in 
a \emph{supervised} manner, i.e., they rely on stress-strain tuples that need to be derived via suitable assumptions from standard experiments, e.g., uniaxial tension tests with 
dogbone specimens.

There are only a few studies that calibrate highly flexible yet physically consistent data-driven models such as
\emph{PANNs} for \emph{viscoelasticity} in a fully
\emph{unsupervised} manner using \emph{full-field data}, thereby enabling access
to an extensive database. Such a model should not only maintain thermodynamic
consistency, objectivity, and material symmetry in its design, but also
capture the evolution of the \emph{internal variables} using a suitable \emph{time integration scheme}
without specifying them a priori.

We therefore present such a framework for the small-strain setting in this article. To this end, we adopt the 
PANN for isotropic small-strain viscoelasticity introduced 
in~\citep{rosenkranz_viscoelasticty_2024}, which builds on the GSM framework and employs two scalar-valued potentials, rendering the model 
\emph{thermodynamically consistent} by construction. Considering a thin specimen under a 
\emph{plane stress assumption}, we formulate a constrained optimization problem based on 
the EGM in combination with quasi-Newton optimizers and automatic 
differentiation. The evolution of the internal variables is captured via an \emph{implicit time 
	integration} scheme. Together with the plane stress condition $\sigma_{33}=0$, 
from which the unknown out-of-plane strain $\varepsilon_{33}$ is determined, this 
yields a system of nonlinear equations that is solved simultaneously by a local 
Newton iteration at each time step and quadrature point.
To drastically reduce the computational cost of training, the \emph{\emph{backward adjoint method}} is employed to compute the gradient of the target loss.

The organization of the remaining paper is as follows: In Sect.~\ref{sec:fundamentals}, the continuum mechanical basics and the underlying GSM framework for viscoelasticity are presented. After this, a PANN framework for the description of the potentials is introduced in Sect.~\ref{sec:pann}. This is followed by Sect.~\ref{sec:equilibrium_gap_method} introducing the EGM that is used for unsupervised training of the PANN.
The developed approach is exemplarily applied to several examples in Sect.~\ref{sec:numerical_examples}. After a discussion of the results, the paper is closed by concluding remarks and an outlook to necessary future work in Sect.~\ref{sec:conclusion}. 

In this work we use the notation specified in App.~\ref{app:notation}.
\section{Fundamentals}
\label{sec:fundamentals}
In this section, we first recall the continuum-mechanical relations for the small-strain setting. On this basis, we present a viscoelasticity framework within the class of GSMs.

\subsection{Continuum mechanics}
\label{ssec:continuum_mechanics}
Throughout this work, we restrict ourselves to \emph{small strains} and \emph{infinitesimal rigid-body rotations} of deformable bodies with the domain $\Set{B}\subset\Setnum{R}^{3}$ denoted as reference configuration with closure $\cl \Set{B}:=\Set{B} \cup \partial \Set{B}$, where $\partial\Set{B}$ is the corresponding boundary. We consider the  motion of the body at times $t\in \Set{T}:=[t_0,t_1]$, $t_0,t_1\in\Setnum{R}$.
Within this setting, the displacement field $\ve{u}: \cl\Set{B} \times \Set{T} \to \ts{1}{}, \; (\ve x,t)\mapsto \ve u(\ve x,t)$ yields the linearized strain tensor
\begin{equation}
    \ve{\upvarepsilon}(\ve x,t) :=  \sym\left(\nabla\ve{u}(\ve x,t)\right) \in\tss{2} \point
    \label{eq:linear_strain}
\end{equation}
The work-conjugate stress measure is the symmetric Cauchy stress, which is a field in the body, i.e., $\ve{\upsigma}: \cl\Set{B} \times \Set{T} \to \tss{2}, \; (\ve x,t)\mapsto \ve \upsigma(\ve x,t)$. Furthermore, $\ve{t}(\ve x, t, \ve n) = \ve{\upsigma}(\ve x,t)\cdot\ve{n}$ denotes the traction vector with the outward unit normal $\ve{n}\in\mathcal N:=\{\ve n \in\ts{1}{} \,|\, |\ve n| = 1\}$. Assuming that both volume and inertial loads are negligible, the balance of linear momentum reduces to
\begin{equation}
    \nabla\cdot\ve{\upsigma}^\top (\ve{x},t)= \ve{0}
    \qquad \forall\, \ve{x}\in\Set{B}, t\in \Set{T} \point
    \label{eq:balance_linear_momentum}
\end{equation}
The boundary $\partial \Set{B}$ of the deformable body is partitioned into a Dirichlet part $\partial\Set{B}^{\textrm{Dir}}$, on which the displacement $\ve{\hat{u}}\in\ts{1}{}$ is prescribed, and a Neumann part $\partial\Set{B}^{\textrm{Neu}}$ with prescribed traction $\ve{\hat{t}}\in\ts{1}{}$. The two parts are complementary and disjoint, i.e.,
$\partial\Set{B} = \partial\Set{B}^{\textrm{Dir}} \cup \partial\Set{B}^{\textrm{Neu}}$ and $\partial\Set{B}^{\textrm{Dir}} \cap \partial\Set{B}^{\textrm{Neu}} = \emptyset$. For a comprehensive treatment of continuum mechanics, we refer the reader to the textbooks of \citet{silhavy_mechanics_1997, Haupt2000, holzapfel_nonlinear_2000, altenbach_kontinuumsmechanik_2018}.
\subsection{Small-strain viscoelasticity}
\label{ssec:small_strain_viscoelasticity}
Within the scope of this work, we solely consider viscoelastic material behavior at \emph{small strains} with \emph{one internal variable} $\te{q}: \cl\Set{B} \times \Set{T} \to \tss{2}, \; (\ve x,t) \mapsto \te q(\ve x,t)$ of strain type. 
For simplicity, the functional dependencies on $\ve x$ and $t$ are omitted in the following.
\subsubsection{Generalized standard materials}
\label{ssec:gsms}
A well-established approach to formulate a constitutive model for viscoelasticity utilizes the theory of GSMs by \citet{Halphen1975} that traces back to pioneering works such as those of \citet{Coleman1963,Coleman1967} and \citet{Biot1965} and many others.

To define the model, we start to introduce the \emph{Helmholtz free energy (HFE)} 
\begin{equation}
    \varPsi: \tss{2}\times\tss{2} \to \Setnum{R}_{\geq0}, \; (\ve{\upvarepsilon},\te{q}) \mapsto \varPsi(\ve{\upvarepsilon},\te{q})
    \label{eq:hfe}
\end{equation}
from which the Cauchy stress and the \emph{driving force} $\te{A}$ of stress type, which is conjugated to $\te q$, follow as
\begin{align}
    \ve{\upsigma} := \diffp{\varPsi}{\ve{\upvarepsilon}} 
    \quad \text{and} \quad 
    \te{A} := -\diffp{\varPsi}{\te{q}} \in \tss{2} \comma
    \label{eq:def_derivatives_psi}
\end{align}
respectively. For $\ve{\upvarepsilon}=\te 0$ and $\te q=\te 0$, the normalization conditions $\partial \varPsi/\partial \ve{\upvarepsilon}|_{(\te 0, \te 0)} = \te 0$ and $\partial \varPsi/\partial \te q|_{(\te 0, \te 0)} = \te 0$ should hold.

In addition to the HFE, we introduce the \emph{dual dissipation potential (DDP)}\footnotemark
\footnotetext{In general the DDP can be introduced as a function of $\te{A},\ve{\upvarepsilon}$ and $\te{q}$, see \citep{rosenkranz_viscoelasticty_2024}. Throughout this work, we neglect the explicit dependence on the internal variables, as this is also a common assumption.
Please also note that, by using a Legendre-Fenchel transformation, a GSM formulation based on the dissipation potential can be obtained, cf. \citep{rosenkranz_comparative_2023}.}
\begin{equation}
    \diss: \tss{2}\times\tss{2} \to \Setnum{R}_{\geq0}, \; (\te{A}, \ve{\upvarepsilon}) \mapsto \diss(\te{A}, \ve{\upvarepsilon}) \point
    \label{eq:ddp}
\end{equation}
To imply \emph{thermodynamic consistency}, i.e., compatibility with the Clausius-Duhem inequality, by the tensor-valued evolution equations
\begin{align}
    \dot{\te{q}} = \diffp{\diss}{\te{A}} \label{eq:def_q_dot}
\end{align}
derived from $\varPsi$ and $\diss$, the following three well-known properties are imposed:
\begin{subequations}
    \begin{align}
        \diss \textrm{ is convex w.r.t. } \te{A} , \label{eq:ddp_convex} \\
        \diss(\te{0},\ve{\upvarepsilon}) = 0 \quad\forall\,\ve{\upvarepsilon}\in\tss{2} \textrm{, and} \label{eq:diss_is_zero} \\
        \left.\diffp{\diss}{\te{A}}\right|_{\te{A}=\te{0},\ve{\upvarepsilon}} = \te{0} \quad\forall\,\ve{\upvarepsilon} \in\tss{2} \point\label{eq:ddp_minimum}
    \end{align}%
\end{subequations}
With that it is straight forward to show that non-negative dissipation rates $D\ge 0$ are ensured by construction, cf. \citep[p. 569]{ottosen_mechanics_2005}.
\subsubsection{Material symmetry}
Beyond \emph{thermodynamic consistency}, the constitutive model is required to comply with the \emph{principle of material symmetry}, thereby reflecting the structure of
the material at a smaller length scale~\citep{neumann_vorlesungen_1885, riemer_construction_2026}. To this end, both the HFE and the DDP are formulated in terms of \emph{scalar-valued
invariants}
\begin{equation}
    I_\alpha: \: \tss{2}\times\dots\times\tss{2} \to \Setnum{R},
    (\te{s}_1\dots\te{s}_n) \mapsto I_\alpha(\te{s}_1\dots\te{s}_n) \comma
    \label{eq:def_invariant}
\end{equation}
$\alpha\in\{1,\dots,m\}$, $m\in\Setnum{Z}_{\geq3}$, which depend in turn on the symmetric 2nd order tensors $\te{s}_1,\dots,\te{s}_n\in\tss{2}$,
$n\in\Setnum{Z}_{\geq1}$. Each argument $\te{s}_\beta$, $\beta\in\{1,\dots,n\}$, may be identified with either $\ve{\upvarepsilon}$, $\te{q}$, or $\te{A}$. By construction, every invariant $I_\alpha$ in Eq.~\eqref{eq:def_invariant} satisfies the \emph{invariance condition}
\begin{equation}
    I_\alpha(\te{s}_1,\dots,\te{s}_n) = I_\alpha(\te{Q}\cdot\te{s}_1\cdot\te{Q}^\top,\dots, \te{Q}\cdot\te{s}_n\cdot\te{Q}^\top ) \:\forall\:\te{Q}\in\Group{G}
    \label{eq:invariant_condition}
\end{equation} 
where $\Group{G}\subseteq\Group{O}(3)$ denotes the symmetry group. As the present study is concerned exclusively with \emph{isotropic materials}, $\Group{G}=\Group{O}(3)$. The corresponding invariants are then referred to as \emph{isotropic invariants}.\footnote{In the anisotropic case, $\Group{G}\subset\Group{O}(3)$. Nonetheless through the introduction of \emph{structural tensors} in conjunction with the principle of \emph{isotropic extension}, results from \emph{isotropic invariants} can be used, see \citet{lokhin_nonlinear_1963, zheng_tensors_1993, xiao_isotropic_1996, apel_approaches_2004, riemer_construction_2026}.} 
In order to distinguish the invariants entering the HFE from those entering the DDP, each is endowed with the superscript $(\bullet)^{\varPsi}$ or $(\bullet)^{\diss}$, respectively. The two potentials are thus reintroduced as
\begin{subequations}
    \begin{align}
        \varPsi&:\tss{2}\times\tss{2} \to \Setnum{R}_{\geq0}, \; (\ve{\upvarepsilon},\te{q}) \mapsto \varPsi(\mathcal{I}^{\varPsi}(\ve{\upvarepsilon},\te{q}))
        \label{eq:hfe_invar} \\
        \diss&: 
        \tss{2}\times\tss{2} \to \Setnum{R}_{\geq0}, \; (\te{A}, \ve{\upvarepsilon}) \mapsto \diss(\mathcal{I}^{\diss}(\te{A}, \ve{\upvarepsilon})
        ) \, ,
        \label{eq:ddp_invar}
    \end{align}%
\end{subequations}
where the respective invariants are collected by the tuples $\mathcal{I}^{\varPsi} := (I_1^{\varPsi},\dots,I_{n_{\varPsi}}^{\varPsi})\in \Setnum{R}^{n_\varPsi}$ and $\mathcal{I}^{\diss} := (I_1^{\diss},\dots,I_{n_{\diss}}^{\diss})\in \Setnum{R}^{n_{\diss}}$, $n_{\varPsi}, n_{\diss}\in\Setnum{Z}_{\geq3}$.
\section{Physics-augmented neural network model}
\label{sec:pann}
Having introduced the GSM framework underlying the model, we now specify the potentials, introduced in Eqs.~\eqref{eq:hfe_invar} and \eqref{eq:ddp_invar}, by means of PANNs. To this end, three NNs are employed. Most of the modeling assumptions are adopted from \citet{rosenkranz_viscoelasticty_2024}; there, however, a formulation based on the dissipation potential is used. We emphasize that the PANN presented in the following is of \emph{Maxwell type} and therefore does not represent the most general form of a viscoelastic model---not even within the framework already partially restricted in Sect.~\ref{ssec:small_strain_viscoelasticity}, for instance through the use of a single internal variable.

\subsection{Helmholtz free energy}
The HFE is assumed to be convex w.r.t. $\ve{\upvarepsilon}$ and to be
additively decomposed into an \emph{equilibrium part} $\hfeeqpann$ and a \emph{non-equilibrium part} $\hfeeqpann$ with
\begin{align}
    \hfeeqpann&: (\ve{\upvarepsilon}) \mapsto 
    \hfeeqpann(\mathcal{I}^{\hfeeq}(\ve{\upvarepsilon})) \text{ and} \\ \hfeneqpann&: (\ve{\upvarepsilon},\te{q}) \mapsto \hfeneqpann(\mathcal{I}^{\hfeneq}(\underbrace{\ve{\upvarepsilon}-\te{q}}_{=:\te p})) \comma
\end{align}
where $\te{p}$ denotes the elastic strain.
We thus introduce two PANNs, one for each part of the HFE.
In the equations above, the tuples containing \emph{isotropic invariants} for the \emph{equilibrium} and \emph{non-equilibrium parts} are defined by 
\begin{align}
    \mathcal{I}^{\hfeeq}(\ve \upvarepsilon) &:= (I_1^{\hfeeq}(\ve \upvarepsilon),I_2^{\hfeeq}(\ve \upvarepsilon), I_3^{\hfeeq}(\ve \upvarepsilon)) \text{ and} \\
     \mathcal{I}^{\hfeneq}(\te p) &:= (I_1^{\hfeneq}(\te p),I_2^{\hfeneq}(\te p), I_3^{\hfeneq}(\te p)) \comma
 \end{align}
with the \emph{isotropic invariants} of a symmetric 2nd order tensor $\te s$---in our case $\ve \upvarepsilon$ or $\te p$---given by 
\begin{align}
    I_1^\bullet(\te s)&:= \tr(\te s) \; , \; 
    I_2^\bullet(\te s):= \tr((\mathbb{I}^{\textrm{dev}}:\te s)^2) \; \textrm{and} \label{eq:iso_invs}\\
    I_3^\bullet(\te s)&:= \tr(\te s^4) \; . \nonumber
\end{align}
As shown in App.~\ref{app:functional_basis_trace}, the invariants given in Eq.~\eqref{eq:iso_invs} form a \emph{functional basis} for the group $\Group{O}(3)$ if $\te s\in \tss{2} \cap \ts{2}{\tr\neq 0}$ is a symmetric and non-deviatoric tensor with $\ts{2}{\tr\neq 0}:=\{\te{r}\in\ts{2}{}| \tr(\te{r})\neq0\}$. Furthermore, the invariants are convex in $\te s$.\footnote{Convexity w.r.t. $\te{s}\in\tss{2}$ for invariants $\tr(\te{s})$, $\tr(\te{s}^2)$ and $I_3=\tr(\te{s}^4)$ is shown in \citep{rosenkranz_viscoelasticty_2024}. Since $\mathbb{I}^{\textrm{dev}}:\te{s}$ is linear in $\te{s}$, $I_2 = \tr((\mathbb{I}^{\textrm{dev}}:\te{s})^2)$ inherits the convexity property of $\tr(\te{s}^2)$~\citep[Sect. 3.2.2]{boyd_convex_2004}.}

The PANN for the \emph{equilibrium part} is given by
\begin{align}
    \hfeeqpann(\mathcal{I}^{\hfeeq}(\ve{\upvarepsilon}))
    &= \hfeeqnn(\mathcal{I}^{\hfeeq}(\ve{\upvarepsilon})) \label{eq:hfeeqpann_contributions} \\
    &- \hfeeqnn(\mathcal{I}^{\hfeeq}(\te{0})) \nonumber\\
    &- \left.\diffp{\hfeeqnn}{I_1^{\hfeeq}}\right|_{I_1^{\hfeeq}(\te{0})} 
    \!\!I_1^{\hfeeq}(\ve{\upvarepsilon}) \point \nonumber
\end{align}
The value $\hfeeqnn(\mathcal{I}^{\hfeeq}(\ve{\upvarepsilon}))$ is the scalar-valued output of a component-wise non-decreasing FICNN \citep{amos_input_2017,Klein2021,kalina_physics-augmented_2026} 
\begin{equation}
    \hfeeqnn: \ve \upvarepsilon \mapsto \hfeeqnn(\mathcal{I}^{\hfeeq}(\ve \upvarepsilon)) 
    \label{eq:hfeeqnn}
\end{equation}
with the introduced strain-type invariants as input.
 Following \citet{linden_neural_2023}, two correction terms are subtracted from the network output $\hfeeqnn(\mathcal{I}^{\hfeeq}(\ve{\upvarepsilon}))$ to ensure that both energy and stress contribution from the \emph{equilibrium part} vanish in the undeformed state $\ve{\upvarepsilon}=\te{0}$.
The entire PANN for the \emph{equilibrium part} is visualized in Fig.~\ref{fig:psi_eq_PANN}.
\begin{figure*}[ht]
  \centering
  \includegraphics[clip]{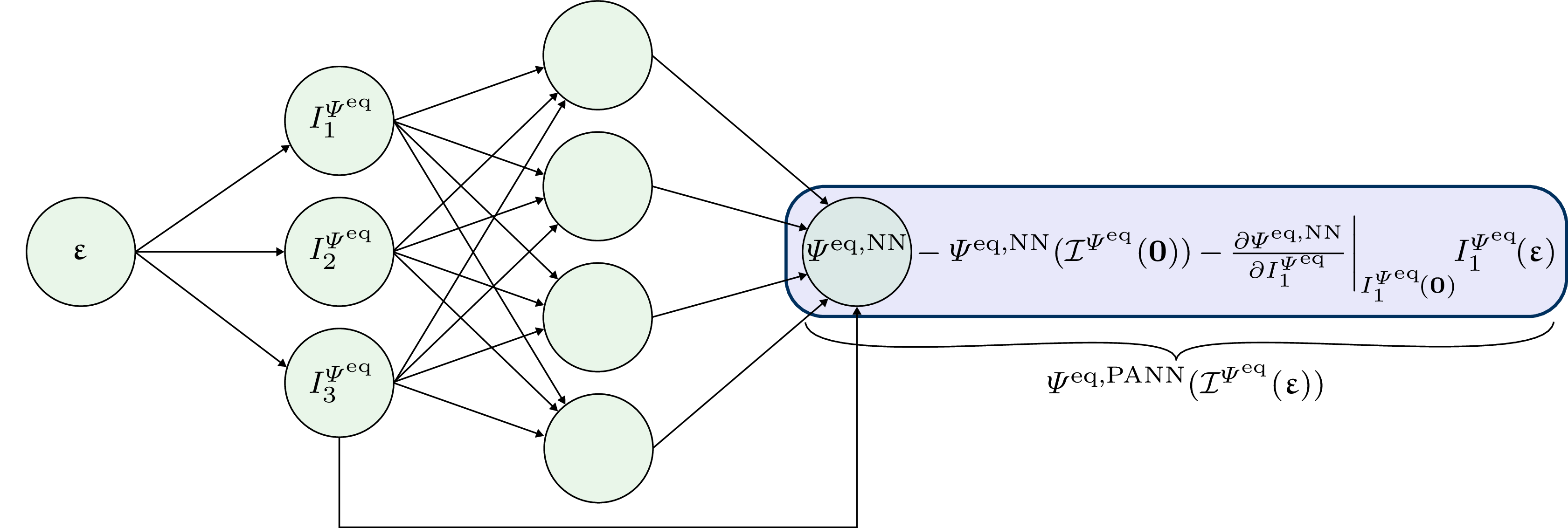}
  \caption{PANN model $\hfeeqpann$ for the \emph{equilibrium part} of the HFE. Here, for example, the NN has one hidden layer with four neurons. Additionally, we use an architecture with skip connections. Two correction terms $\hfeeqnn(\mathcal{I}^{\hfeeq}(\te{0}))$ and $\left.\diffp{\hfeeqnn}{I_1^{\hfeeq}}\right|_{I_1^{\hfeeq}(\te{0})} 
    \!\!I_1^{\hfeeq}(\ve{\upvarepsilon})$ are subtracted from the network output $\hfeeqnn(\mathcal{I}^{\hfeeq}(\ve{\upvarepsilon}))$, thereby guaranteeing a vanishing energy and stress contribution in the undeformed state $\ve{\upvarepsilon}=\te{0}$.}
  \label{fig:psi_eq_PANN}
\end{figure*}

\begin{remark}
\label{rem:convexity}
    Since the invariants are convex in $\ve \upvarepsilon$, $\hfeeqnn$ is convex and component-wise non-decreasing, and $I_1^{\hfeeq}$ is linear in $\ve \upvarepsilon$, the chosen potential $\hfeeqpann$ given in Eq.~\eqref{eq:hfeeqpann_contributions} is convex in $\ve \upvarepsilon$. 
\end{remark}

Analogous to the \emph{equilibrium part}, we define the PANN for the \emph{non-equilibrium part} by
\begin{align}
    \hfeneqpann(\mathcal{I}^{\hfeneq}(\te p))
    &= \hfeneqnn(\mathcal{I}^{\hfeneq}(\te{p})) \label{eq:hfeneqpann_contributions} \\
    &- \hfeneqnn(\mathcal{I}^{\hfeneq}(\te{0})) \nonumber\\
    &- \left.\diffp{\hfeneqnn}{I_1^{\hfeneq}}\right|_{\te{0}} 
    I_1^{\hfeneq}(\te p) \comma \nonumber
\end{align}
with a second component-wise non-decreasing FICNN
\begin{equation}
    \hfeneqnn: \te p \mapsto \hfeneqnn(\mathcal{I}^{\hfeneq}(\te p)) \point
    \label{eq:hfeneqnn}
\end{equation}
Since only the \emph{elastic strain} $\te p = \ve{\upvarepsilon} - \te{q}$ enters the invariants, the constitutive model is of \emph{Maxwell type}. As $\te p$ is linear in $\ve \upvarepsilon$, convexity of $\hfeneqpann$ in the strain follows from the argumentation given in Remark~\ref{rem:convexity}.
\subsection{Dual dissipation potential}
For the DDP, an architecture similar to that of the two parts of the HFE is employed. To this end, we introduce an \emph{isotropic invariant} set for $(\te A, \ve \upvarepsilon)$ that consists of the invariants 
\begin{equation}
    \begin{aligned}
        I_1^{\diss}\!(\te{A})\!&:= \tr(\te{A}) ,\quad
        I_2^{\diss}\!(\te{A})\!:= \tr((\mathbb{I}^{\textrm{dev}}:\te{A})^2) ,\\
        I_3^{\diss}\!(\te{A})\!&:= \tr(\te{A}^4), \quad
        I_4^{\diss}\!(\ve{\upvarepsilon})\!:= \tr(\ve{\upvarepsilon}) , \\
        I_5^{\diss}\!(\ve{\upvarepsilon})\!&:= \tr((\mathbb{I}^{\textrm{dev}}:\ve{\upvarepsilon})^2) \quad\textrm{and}\quad
        I_6^{\diss}\!(\ve{\upvarepsilon})\!:= \tr(\ve{\upvarepsilon}^4) \point
    \end{aligned}
    \label{eq:disspann}
\end{equation}
We collect the invariants given in Eq.~\eqref{eq:disspann} in the tuple $\mathcal{I}^{\diss}(\te A, \ve \upvarepsilon):= \left(I_1^{\diss}(\te A),\ldots, I_6^{\diss}(\ve{\upvarepsilon})\right)$. Note that the invariants in $\mathcal{I}^{\diss}$ are convex in $\te A$ and $\ve \upvarepsilon$.
\begin{remark}
    We want to point out that the invariants in Eq.~\eqref{eq:disspann} that are used to describe the DDP do not form a \emph{functional basis}. This is the case because of Corollary~\ref{corollary:not_functional_basis_deviatoric} and missing \emph{mixed invariants}, i.e., invariants that depend on both $\te{A}$ and $\ve{\upvarepsilon}$, cf. \citep{smith_isotropic_1971, boehler_irreducible_1977}.
\end{remark}

The \emph{PANN for the DDP} is defined as
\begin{align}
    \disspann(\mathcal{I}^{\diss}(\te{A}, \ve{\upvarepsilon}))
    &= \dissnn(\mathcal{I}^{\diss}(\te{A}, \ve{\upvarepsilon})) \label{eq:disspann_contributions} \\
    &- \dissnn(\mathcal{I}^{\diss}(\te{0}, \ve{\upvarepsilon})) \nonumber\\
    &- \left.\diffp{\dissnn}{I_1^{\diss}}\right|_{I_1^{\diss}(\te{0},\ve{\upvarepsilon})} 
    I_1^{\diss}(\te{A}, \ve{\upvarepsilon}) \comma \nonumber
\end{align}
where the introduced NN is again a component-wise non-decreasing FICNN with the invariants collected in $\mathcal{I}^{\diss}$ acting as inputs. The NN is given by
\begin{equation}
    \dissnn: (\te A, \ve \upvarepsilon) \mapsto  \dissnn(\mathcal{I}^{\diss}(\te A, \ve \upvarepsilon)) \point
    \label{eq:dissnn}
\end{equation}
Convexity of $\disspann$ in $\te A$ and $\ve \upvarepsilon$ follows by applying the argumentation given in Remark~\ref{rem:convexity}.
\begin{remark}
    It should be noted that the convexity of the DDP in the strain $\ve \upvarepsilon$ is not strictly required in the GSM framework.    
    It would therefore also be possible to use a partially input-convex neural network (PICNN), cf. \citep{amos_input_2017,rosenkranz_viscoelasticty_2024,Friedrichs2026}. 
    Such an approach is more flexible, but the number of trainable variables in the network increases. Since the selected PANN model, which is based exclusively on FICNNs, has proven to be sufficiently flexible for the examples considered, we will not discuss PICNNs further here.
\end{remark}
\begin{remark}
    \label{rem:parameters}
    Due the required convexity and component-wise non-decreasingness of the NN potentials $\hfeeqnn$, $\hfeneqpann$, and $\dissnn$, the weights of the three individual FICNNs have to be non-negative. A detailed discussion about this can be found in \citep{amos_input_2017,Klein2021, rosenkranz_viscoelasticty_2024}. To keep the notation clear, all trainable parameters, i.e., all weights and biases of all three FICNNs, are summarized in $\V{\theta}\in\mathcal{P}_\theta$,  where the set $\mathcal{P}_\theta\subset\Setnum{R}^{n_\theta}$, $n_\theta\in\Setnum{Z}_{>0}$. The set $\mathcal{P}_\theta$ contains the weight constraints.
    For all FICNNs we use the convex and non-decreasing softplus activation function.
\end{remark}
\section{Equilibrium gap method}
\label{sec:equilibrium_gap_method}
As discussed in the Sect.~\ref{sect:full-field_review}, numerous approaches are available for the calibration of constitutive models via full-field data. In the present work, we adopt a special case of the VFM, the EGM \citep{Claire2004}, to identify the material model's parameter in an unsupervised manner. The parameters, collected in $\V \theta$, are in our case weights and biases of the FICNNs introduced in Sect.~\ref{sec:pann}. The methodology is therefore similar to the NN-EUCLID framework by \citet{Thakolkaran2022}. In our case, however, we consider a neural viscoelastic model instead of a hyperelastic one.

Data for the EGM can be collected via experiments with full-field displacement measurement \citep{Abbasi2026a,jailin_experimental_2024}, where the displacements are typically obtained via optical techniques such as DIC.
In the EGM, the displacement is required not just at a few discrete locations but at a sufficiently large number of points distributed across the region of interest (ROI), cf. Remark~\ref{rem:ROI_data}. 
Since in real setups DIC only allows to observe displacements on the sample surface---and in many setups only on one side---we will restrict ourselves to thin samples in the following. In this case, the ROI is therefore a 2D domain denoted by $\varOmega\subset\Setnum R^2$ with boundary $\partial \varOmega$, i.e., the closure is $\cl \varOmega := \varOmega \cup \partial \varOmega$.

We are working on the basis of the following assumptions and assume that the data listed below is known after the raw data has been preprocessed:
\begin{enumerate}[label=(\Roman*)]
    \item The specimen is thin compared with its in-plane characteristic length, so that a \emph{plane stress state} is assumed throughout the entire domain.
    \item A set of nodes with reference coordinates $(x_1^\lambda,x_2^\lambda)\in\varOmega \cup \partial \varOmega$, $\lambda\in\{1,\ldots,\nn\}$, and a \emph{triangulation} thereof into  $e\in\{1,\ldots,\nele\}$ elements is given.
    \item \emph{In-plane displacements} $\tsvar{u}_1^\lambda$, $\tsvar{u}_2^\lambda \in \Setnum{R}$  at the  nodes are known for all time steps $s\in\{1,\ldots,\nt\}$.
    \item The \emph{in-plane displacement field} is approximated as continuous and piecewise linear over the \emph{triangulation}, i.e., constant strain triangle (CST) elements are used.
    \item The \emph{global reaction forces} $\tvar{F}^\kappa$ normal to respective boundaries $\partial \varOmega_\kappa$ with applied loading are known, $\kappa\in\{1,\dots,\nf\}$ with $\nf\in\Setnum{Z}_{\geq1}$ denoting the number of reaction forces.
\end{enumerate}
The core idea of the EGM is to enforce the discretized weak form of the balance of linear momentum via loss terms that \emph{minimize the deviation from equilibrium}. More strictly speaking, within the discretized (finite element) setting, this means enforcing equilibrium of the \emph{assembled nodal forces} only approximately, i.e., as a soft constraint via the loss term, rather than exactly.

\subsection{Two-dimensional finite element discretization}
\label{ssec:fem}
The optimization problem formulated for the EGM requires some parts of the finite element method (FEM). To clearly state that the quantities that are introduced in this section are \emph{two-dimensional}, we switch to index notation and use the lower case Latin letters $a,b,c,d\in\{1,2\}$ as subscripts.

Let $\cl \varOmega^{\textrm{CST}}:=\{(\xi_1,\xi_2)\in\Setnum{R}^2 | 0\leq\xi_1\leq1,\:0\leq\xi_2\leq1-\xi_1\}$ denote the closure (domain and boundary) of a CST element and
\begin{equation}
	N^\alpha: \cl \varOmega^{\textrm{CST}} \to \Setnum{R}, \quad (\xi_1,\xi_2) \mapsto N^\alpha(\xi_1,\xi_2)
	\label{eq:shape_function_map}
\end{equation}
the corresponding shape functions. The reference \emph{in-plane} coordinates of the nodes with local numbers $\alpha\in\{1,2,3\}$ of an element $e\in\{1,2,\dots,\nele\}$ are given by $x^{e\alpha}_b \in \Setnum{R}$. Hence, the coordinates of the elemental Jacobian are defined by
\begin{equation}
	J_{ab}^e:=\sum_{\alpha=1}^3 \diffp{N^\alpha(\xi_1,\xi_2)}{\xi_a}  x^{e\alpha}_b = \text{const} \point
	\label{eq:jacobian_element}
\end{equation}
Due to the use of CST triangles, $J_{ab}^e$ is element-wise constant. The determinant of the Jacobian's matrix representation $\M J^e\in\Setnum{R}^{2\times 2}$ is given by $J^e := \det \M J^e$.
The coordinates of the so-called B-matrix of element $e$ and the node with local node number $\alpha$ are defined in index notation by
\begin{equation}
	B^{e \alpha}_{abc} := \sym
	\left( 
	\diffp{N^\alpha(\xi_1,\xi_2)}{\xi_d} (J^{e}_{da})^{-1} \delta_{bc} 
	\right) = \text{const}\comma
	\label{eq:b_matrix}
\end{equation}
where $\sym(\bullet)$ denotes the symmetrization w.r.t. the indices $a$ and $b$. Note that this quantity is also element-wise constant for the special choice of CST elements. 
\begin{remark}
	\label{rem:evaluation_at_quad_point}
	The quantities $J_{ab}^e$ and $B^{e \alpha}_{abc}$ are evaluated at quadrature or Gauss points. To note this, an index $g\in\{1,2,\dots,\nqp\}$ could be added, i.e., $J_{ab}^{eg}$ and $B^{e g \alpha}_{abc}$. However, as we only consider \emph{CST elements}, the number $\nqp\in\Setnum{Z}_{>0}$ of Gauss points is set to $\nqp=1$ and the quadrature point to $(\xi_1,\xi_2)=(\frac{1}{3}, \frac{1}{3})$. Hence, rendering the additional index $g$ not necessary. We rather keep notation simple and impose that the index $e$ marks not only the element, but also the evaluation at the single quadrature point of that element. Exceptions to that are local node values that have another index $\alpha$, such as the elemental nodal displacements $\tsvar{u}^{e\alpha}_a$ at a time step $s\in\{1,2,\dots,\nt\}$.
\end{remark}

To obtain the \emph{in-plane displacement fields} $\tsvar{u}_1(x_1,x_2)$ and $\tsvar{u}_2(x_1,x_2)$ at a time step $s$, we apply assumptions (III) and (IV). Thus, the \emph{in-plane displacements} in each CST element $e$ are approximated by 
\begin{align}
	\tsvar{u}_a^e(\xi_1,\xi_2):=\sum_{\alpha=1}^3 \tsvar{u}_a^{e\alpha} N^\alpha(\xi_1,\xi_2) \point
\end{align}
Hence, the \emph{in-plane strains} of an element are given by
\begin{equation}
	\tvar{\varepsilon}^{e}_{ab} := \sum_{\alpha=1}^3 \tsvar{u}^{e\alpha}_c B^{e \alpha}_{abc} = \text{const} \point
	\label{eq:epsilon_fem2d}
\end{equation}
The \emph{local nodal forces} of node $\alpha$ in element $e$ follow from the integral
\begin{align}
	\tvar{f}^{e\alpha}_a :\!\!&=
	\int\limits_{\varOmega^\text{CST}}
	{\sigma}_{bc}(\tvar{\ve \upvarepsilon}(\xi_1,\xi_2), \tsvar{\te{q}}(\xi_1,\xi_2)) B^{e \alpha}_{cba} J^e h \, \mathrm{d}A \nonumber\\
	&=
	\tvar{\sigma}^{e}_{bc} B^{e \alpha}_{cba} J^e \frac{h}{2} \comma
	\label{eq:local_nodal_forces}
\end{align}
by numerical integration, where $h\in\Setnum{R}_{>0}$ is the constant reference thickness. To compute the \emph{local nodal forces} $\tvar{f}^{e\alpha}_a$, the \emph{in-plane stresses} $\tvar{\sigma}^{e}_{ab} = \tvar{\sigma}_{ab}(\tvar{\ve \upvarepsilon}^e, \tsvar{\te{q}}^e)$ in each element are required, where $\tvar{\ve \upvarepsilon}^e$ and $\tsvar{\te{q}}^e$ are strain tensor and internal variable at time step $s$ in element $e$ and $\tvar{\sigma}_{ab}=\tvar{\ve \upsigma} :(\ve e_a \ve e_b)$. Since we only consider CST elements, strain and also internal variable are spatially constant in each element for a fixed time. Thus, the numerical integration according to Eq.~\eqref{eq:local_nodal_forces} is exact.
We discuss how $\tvar{\ve \upvarepsilon}^e$ and $\tsvar{\te{q}}^e$ are determined in the considered \emph{plane stress} setup to derive the \emph{in-plane stresses} $\tvar{\sigma}^{e}_{ab}$
in the next Section~\ref{ssec:plane_stress}. 

Finally, we get the \emph{assembled nodal forces} at a node with global number $\lambda\in\{1,\ldots,\nn\}$ from the assembly process
\begin{equation}
    \tvar{f}^\lambda_a := \sum_{e=1}^{\nele} \sum_{\alpha=1}^{3} \tvar{f}^{e\alpha}_a A^{\lambda e \alpha} \point
    \label{eq:global_nodal_forces}
\end{equation}
The quantity $A^{\lambda e \alpha}\in\{0,1\}$ in Eq.~\eqref{eq:global_nodal_forces} takes either the value 1 or 0 depending on the connectivity given by the mesh. 
\subsection{Stress determination for a single time step}
\label{ssec:plane_stress}
In order to calculate the \emph{local nodal forces} $\tvar{f}^{e\alpha}_a$ at a time step $s$, the \emph{in-plane stresses} $\tvar{\sigma}_{ab}^{e}$ must be determined by means of a material model, such as the PANN introduced in Sect.~\ref{sec:pann}.
According to Eq.~\eqref{eq:def_derivatives_psi}, the calculation of stress $\tvar{\ve \upsigma}^{e}$ is possible under the condition that both \emph{three-dimensional strain} $\tvar{\ve{\upvarepsilon}}^{e}$ and internal variable $\tsvar{\te{q}}^e$ are known for a time step $s$. 
The latter can be determined via \emph{implicit time integration} by finding the root of the residual
\begin{equation}
	{}^{s,i}\te{R}^e :=\tsvar{\te{q}}^e - {}^{s-1}\hspace{-1pt}\te{q}^e -
	\tvar{\Delta t} \left. \diffp{\diss}{\te{A}}
	\right|_{\te{A}=\tivar{\te{A}}^e, \ve{\upvarepsilon}=\tvar{\ve{\upvarepsilon}}^e} \comma
	\label{eq:residual_internal_variable}
\end{equation}
where $\tvar{\Delta t} := {}^{s}t - {}^{s-1}t$ denotes the time step size between two consecutive times ${}^{s}t$ and ${}^{s-1}t$. The coordinates of the residual are given by ${}^{s,i}\!R^e_{kl}$, $k,l\in\{1,2,3\}$.
As noted in the Eq.~\eqref{eq:residual_internal_variable} above, the \emph{three dimensional strain} $\tvar{\ve{\upvarepsilon}}^{e}$ is also required in this case. However, Eq.~\eqref{eq:epsilon_fem2d}, and the \emph{in-plane nodal displacements} $\tsvar{u}^{e\alpha}_a$, give rise to only the \emph{in-plane strains} $\tvar{\varepsilon}^{e}_{ab}$. 
By using the \emph{plane stress assumption} $(\textrm{I})$, we can determine the missing out-of-plane strain component $\tvar{\varepsilon}^{e}_{33}$ and hence the full strain tensor $\tvar{\ve{\upvarepsilon}}^{e}$. 
In a real experimental setup, this quantity could be determined by stereo measurement using a second DIC system, cf.~\citep{linden_dual-stage_2025}. On the other hand, $\tvar{\varepsilon}^{e}_{33}$ can also be calculated from the known constraint for a \emph{plane stress state}. Thus, we introduce a second residual
\begin{equation}
	\tivar{R}^e := \tivar{\sigma}^{e}_{33} = \left.\left(\diffp{\varPsi}{\ve{\upvarepsilon}}\right)_{33}\right|_{\ve{\upvarepsilon}=\tivar{\ve{\upvarepsilon}^{e}}, \te{q}=\tivar{\te{q}}^{e}} \point
	\label{eq:residual_stress33}
\end{equation}
Accordingly, the two residuals in Eqs.~\eqref{eq:residual_internal_variable}~and~\eqref{eq:residual_stress33} are solved in a \emph{monolithic} manner by means of a Newton iteration at each quadrature point.\footnote{\label{foot:least_squares}In the combined Newton iteration of the two residuals in  Eqs.~\eqref{eq:residual_internal_variable}~and~\eqref{eq:residual_stress33}, a \emph{least squares method} is used to calculate the incremental update. Owing to the non-symmetric coefficient matrix, a standard linear-system solver was less robust compared to the \emph{least squares method} during calibration.}

At the first times step $s=1$, it holds ${}^{1}\!\ve{\upvarepsilon}^{e} = {}^{1}\te{q}^{e} := \te{0}$. 
To start the iteration ($i=1$) for a subsequent time step, we set ${}^{s,1}\!{\varepsilon}^{e}_{33} := {}^{s-1}\!{\varepsilon}^{e}_{33}$ and $\tsvar{\te{q}}^{e} := {}^{s-1}\hspace{-1pt}\te{q}^{e}$. 
The solution is regarded as converged if the residual in Eq.~\eqref{eq:residual_internal_variable} fulfills $\max_{ekl}|{}^{s,i}\!R_{kl}^{e}| < \epsilon^{\te{q}}_{\textrm{tol}}\in\Setnum{R}_{>0}$
and the residual in Eq.~\eqref{eq:residual_stress33} $\max_{e}|\tivar{R}^{e}| < \epsilon^\sigma_{\textrm{tol}}\in\Setnum{R}_{>0}$. We then set $\tvar{\varepsilon}^{e}_{33} := \tivar{\varepsilon}^{e}_{33}$, $\tsvar{\te{q}}^{e} := {}^{s,i}\te{q}^{e}$, ${}^{s}\te{R}^e:={}^{s,i}\te{R}^e$ and $\tvar{R}^e:=\tivar{R}^e$.
\subsection{Optimization problem}
\label{ssec:optimization problem}
With the necessary concepts from the finite element discretization being introduced, we can start to formulate an optimization problem for inversely identifying the parameters $\V \theta$. Thus, the main task is to formulate a suitable criterion---a loss function---whose minimization leads to a well-calibrated material model, in our case a viscoelastic PANN. Parallel to the description given in the following paragraphs, Fig.~\ref{fig:egm} summarizes the individual steps involved in the calculation of the loss.

First, the Jacobian matrices $\M J^e$ and the determinants $J^e$, the coordinates $B^{e \alpha}_{abc}$ of all B-matrices and from that the \emph{in-plane strains} $\tvar{\varepsilon}^{e}_{ab}$ are calculated from the given full-field data. This step needs to be performed only once before the optimization begins.
The next steps are performed in each iteration of the optimizer.
For the current parametrization $\V{\theta}\in\Set{P}_\theta$ of the PANN, we compute the \emph{three-dimensional stress} 
\begin{align}
   \tvar{\ve{\upsigma}}^{e}(\V \theta) = {\ve{\upsigma}}(\tvar{\varepsilon}^e_{11},\tvar{\varepsilon}^e_{22}, \tvar{\varepsilon}^e_{12}, \tvar{\varepsilon}^e_{33}(\V \theta), \tsvar{\te q}^e(\V \theta), \V \theta) \label{eq:depend_sigma}
\end{align}
in each element and time step as explained in Sect.~\ref{ssec:plane_stress}. 
With that, \emph{local nodal forces} $\tvar{f}^{e\alpha}_a(\V \theta)$ and, consequently, \emph{assembled nodal forces} $\tvar{f}^\lambda_a(\V \theta)$ are calculated using Eqs.~\eqref{eq:local_nodal_forces}~and~\eqref{eq:global_nodal_forces}. 
\paragraph{Sorting assembled nodal forces}
To formulate the loss function, we must distinguish between two types of nodes:
\begin{itemize}
    \item those located in the domain or on boundaries with homogeneous Neumann boundary conditions, and
    \item those located on boundaries with imposed reaction forces.
\end{itemize}
To do so, we identify $\nf\in\Setnum{Z}_{\ge 1}$ boundaries $\partial\varOmega_{\kappa}^\text{disc}\subset \partial\varOmega^{\textrm{disc}}$ of the discretized specimen $\cl\varOmega^{\textrm{disc}} = \varOmega^{\textrm{disc}} \cup \partial\varOmega^{\textrm{disc}}$ at which \emph{global reaction forces} are applied. 
Thereby, $\kappa\in\{1,\dots,\nf\}$ is the index of the boundary at which a \emph{global reaction force} ${}^s \!\ve{F}^\kappa\in\ts{1}{}$ with the normal component $\tvar{F}^\kappa := \tvar{F}^\kappa_a  n^\kappa_a$ is known $(\textrm{V})$, see Fig.~\ref{fig:egm}. 
The remaining part $\cl\varOmega^{\textrm{disc}} \setminus \bigcup_{\kappa=1}^{\nf}\partial\varOmega_{\kappa}^\text{disc}$ of the discretized geometry is thus not associated with \emph{reaction forces} resulting from applied loading.

To identify the nodes laying on the force boundaries and those laying in the domain or on free boundaries, we introduce the sets 
\begin{align}
    \Force^\kappa&:=\left\{ \lambda \, | \, (x_1^\lambda, x_2^\lambda) \in \partial \varOmega_\kappa^\text{disc} \right\} \text{ and } \\
    \Free&:=\left\{ \lambda \, | \, (x_1^\lambda, x_2^\lambda) \in \cl\varOmega^{\textrm{disc}} \setminus \cup_{\kappa=1}^{\nf}\partial\varOmega_{\kappa}^\text{disc} \right\} \comma
\end{align}
respectively.

\paragraph{Loss function}
\begin{figure*}[ht]
  \centering
  \includegraphics[clip]{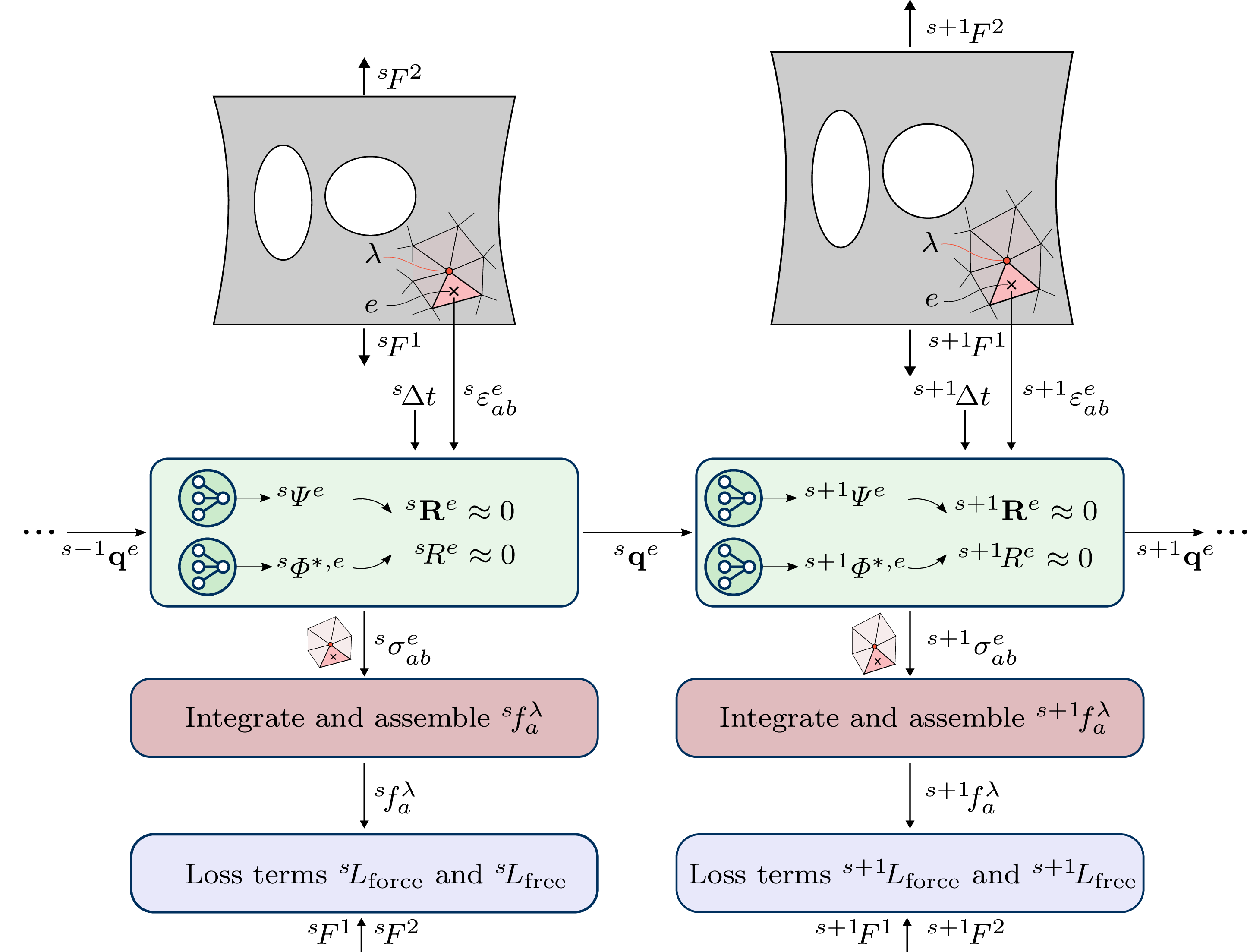}
  \vspace{0.5cm}
  \caption{\emph{Forward pass} and principle calculation of the loss contribution for two adjacent time steps. For a time step $s$, \emph{global reaction forces} $\tvar{F}^\kappa$ and \emph{in-plane strains} $\tvar{\varepsilon}_{ab}^e$ at the quadrature points are provided by the training data, while the internal variable ${}^{s-1}\hspace{-1pt}\te{q}^e$ at the quadrature point is given from the last time step. The nonlinear equations ${}^{s}\te{R}^e = \te{0}$ (Eq.~\eqref{eq:residual_internal_variable}) and $\tvar{R}^e = 0$ (Eq.~\eqref{eq:residual_stress33}), arising from the \emph{implicit time integration} and the \emph{plane-stress condition}, are solved in a Newton iteration, thereby 
		determining the internal variable $\tsvar{\te{q}}^e$ and the out-of-plane strain $\tvar{\varepsilon}_{33}^e$ of the current time step. After doing this for every time step, the \emph{in-plane stresses} $\tvar{\sigma}^e_{ab}$, \emph{assembled nodal forces} $\tvar{f}^\lambda_a$, and finally the two loss contributions are computed in a vectorized manner over all time steps. Functional dependencies are omitted for clarity in this figure.}
  \label{fig:egm}
\end{figure*}

This finally enables us to formulate two key principles of the EGM:
\begin{enumerate}[label=(\alph*)]
    \item The sum of \emph{assembled nodal forces} $\tvar{f}^\lambda_a(\V \theta) n^\kappa_a$ with $\lambda \in\Force^\kappa$ must equal the global reaction force $\tvar{F}^\kappa$, and
    \item \emph{assembled nodal forces} $\tvar{f}^\lambda(\V \theta)$ with $\lambda\in\Free$ must vanish.
\end{enumerate}
Both principles (a) and (b) follow directly from the \emph{discretized weak form of the balance of linear momentum} that yields equilibrium of the forces. 
We fulfill them in a weak sense by determining an optimal parametrization $\V \theta\in\Set{P}_\theta$. To this end, we introduce the two associated loss terms 
\begin{subequations}
    \begin{align}
        L_\text{force}(\V \theta) &:=  
        \sum_{\kappa=1}^{\nf}
        \sum_{s=1}^{\nt}
        \frac{1}{
        n_\kappa
        }
        \left(\tvar{F}^\kappa - \!\!\!\!
        \displaystyle \sum_{\lambda \in \Force^\kappa} \!\!\!\! \tvar{f}^{\lambda}_a(\V \theta)  n^\kappa_a
        \right)^2
        \comma
        \label{eq:loss_forces} \\
        L_\text{free}(\V \theta) &:= \frac{1}{2 \nt n_{\text{free}}}
        \sum_{s=1}^{\nt}
        \sum_{\lambda\in\Free}
        \tvar{f}^\lambda_a(\V \theta) \tvar{f}^\lambda_a(\V \theta) \point \label{eq:loss_domain}
    \end{align}
    \label{eq:loss_parts}%
\end{subequations}
Therein, the normalization factor for the force losses is defined as $n_\kappa := \nt \max_{s}\left| \tvar{F}^{\kappa} \right|^2$ and $n_{\text{free}}=|\Free|$. With the two introduced loss terms we arrive at the total loss given by $L_{\textrm{EGM}}(\V \theta) := L_\text{force}(\V \theta) + L_\text{free}(\V \theta)$. Based on this, we formulate the constrained optimization problem 
\begin{align}
    \V{\hat{\theta}} = \underset{\V{\theta} \in\Set{P}_{\theta}}{\textrm{argmin }} L_{\textrm{EGM}}(\V \theta)
    \;\textrm{subject to}\;
    {}^{s}\te{R}^{e} = \te{0} \textrm{ and }
    \tvar{R}^{e} = 0 
    \label{eq:optimization_problem}
\end{align}
for all $s\in\{1,\ldots,\nt\}$, $e\in\{1,\ldots,\nele\}$. The optimization problem given in Eq.~\eqref{eq:optimization_problem} is constrained by the non-negativity bounds of the weights in the PANN and the two residuals resulting from the time-discretized evolution equations and the \emph{plane stress condition}. 

In order to obtain the optimal parametrization $\V{\hat{\theta}}\in\Set{P}_{\theta}$, and hence the calibrated material model, we use the quasi-Newton optimizer SLSQP (sequential least squares programming). The parameter constraints $\Set{P}_{\theta}$ are treated as bounds by the optimizer. The constraints imposed by the evolution equations and the \emph{plane stress condition} are incorporated by determining, at each epoch of the optimizer, the out-of-plane strains $\tvar{\varepsilon}_{33}^e(\V \theta)$ and the internal variables ${}^{s}\te{q}^e(\V \theta)$ at all quadrature points and for all time steps. These constraints are then taken into account when calculating the gradient $\mathrm d L_{\textrm{EGM}}(\V \theta)/ \mathrm{d}\V{\theta}$ via automatic differentiation and the \emph{backward adjoint method}~\citep{Seidl2022,Kumar2025,Akerson2025}. Details on the calculation of this gradient are provided in Appendix~\ref{app:backward_adjoint}.
\begin{remark}
    Although the choice of activation functions and the imposed parameter constraints of the PANNs ensure physical admissibility of the learned potentials, the resulting constitutive equations remain highly nonlinear. As a consequence, numerical difficulties may arise during the solution of the associated evolution equations during training.
    Thus, should convergence not be reached within the prescribed iteration limit when solving the optimization problem~\eqref{eq:optimization_problem}, the iterates ${}^{s,\niter}\te{q}^e$ and ${}^{s,\niter}\varepsilon^{e}_{33}$ are nonetheless accepted as converged, i.e., ${}^{s}\te{q}^e := {}^{s,\niter}\te{q}^e$ and $\tvar{\varepsilon}_{33}^e := {}^{s,\niter}\!\varepsilon_{33}^e$. 
    In this sense, we also set the two residuals to ${}^{s}\te{R}^{e}:= {}^{s,\niter}\te{R}^{e}$ and ${}^{s}\!R^{e}:= {}^{s,\niter}\!R^{e}$.
    This is a deliberate design choice that permits the evolution of the internal variable to proceed even when the parametrization of the potentials described by the PANNs are furnished in a numerically unfavorable form. 
    For instance, for material parameters that preclude convergence. As described in \citep[Remark~10]{kalina_physics-augmented_2026}, such issues can arise in ODE-constrained optimization problems in the context of parameterizing PANNs.
\end{remark}
\paragraph{Regularization for non-ideal data}
The proposed loss contributions in Eq.~\eqref{eq:loss_parts} work very well if the input data is ideal, i.e., obtained from a simulation. In real experiments, however, forces and displacements are subject to measurement noise.
Non-ideal displacement data poses a considerable difficulty for the EGM, since the strains, or more generally the first spatial derivative, must be computed from the displacements, which considerably amplifies the measurement error. This amplified error then propagates through the computation of stresses and nodal forces, rendering the objective pursued by the loss term $L_\text{free}(\V \theta)$ questionable, as local equilibrium cannot be achieved by a physically reasonable constitutive model.\footnote{In theory, there might exist a material model with a huge number of parameters that achieves almost perfect equilibrium of nodal forces, even with noisy input data. This, however, is not a desirable outcome, since it gives rise to overfitting, i.e., learning the noise rather than the material behavior.}
The loss contribution $L_\text{force}(\V \theta)$, on the other hand, does not enforce local equilibrium, but equilibrium with the applied \emph{global reaction forces}. 

Due the summation, errors in the individual \emph{assembled nodal forces} $\tvar{f}^\lambda_a$ belonging to free nodes, i.e., $\lambda\in\Free$, are expected to average out in the loss given in Eq.~\eqref{eq:loss_domain}. Nevertheless, we apply a method introduced by \citet{jailin_noise-bias_2026} to formulate a modified loss contribution $L_\text{free,mod}(\V \theta)$ and thereby make the proposed optimization problem more robust against measurement noise. 
Please note that there is no change to the loss contribution $L_\text{force}(\V \theta)$.
The key idea is to apply a \emph{Gaussian kernel matrix} $\M G\in\Setnum{R}_{\ge 0}^{\nn \times \nn}$ to the \emph{assembled nodal forces}, whose entries $G^{\lambda\mu}\in\Setnum{R}_{\geq0}$ depend on the distance between the initial node positions $x^\lambda_a$. The modified \emph{assembled nodal forces} are thereby obtained as
\begin{align}
    \tvar{f}^\mu_{a,\textrm{mod}}(\V \theta) :\!\!&=
    \sum_{\lambda=1}^{\nn} \tvar{f}^\lambda_a(\V \theta) G^{\lambda\mu} \\
    &=
    \sum_{\lambda=1}^{\nn}
    \tvar{f}^\lambda_a(\V \theta)
    \frac{\exp\left( -\dfrac{(x^\mu_b - x^\lambda_b)(x^\mu_b - x^\lambda_b)} {\delta^2} \right)}
    { \displaystyle\sum\limits_{\rho=1}^{\nn} \exp\left( -\dfrac{(x^\mu_c - x^\rho_c)(x^\mu_c - x^\rho_c)} {\delta^2} \right)}
     \point \nonumber
    \label{eq:modified_forces}
\end{align}
The factor $\delta=\alpha_{\textrm{infl}}\Delta l$ is computed from the average initial distance $\Delta l\in\Setnum{R}_{>0}$ between the nodes of all elements and an influence factor $\alpha_{\textrm{infl}}\in\Setnum{R}_{>0}$.
After that, the modified \emph{assembled nodal forces} belonging to the free nodes with indices $\mu\in\Free$ are used to obtain the modified loss term
\begin{equation}
    L_\text{free,mod}(\V \theta) := \frac{1}{2 \nt n_{\textrm{free}}}
    \sum_{t=1}^{\nt} \!
    \sum_{\mu\in\Free} \!\!\!\!
    \tvar{f}^\mu_{a,\textrm{mod}}(\V \theta) \tvar{f}^\mu_{a,\textrm{mod}}(\V \theta) \point
    \label{eq:loss_domain_modified}
\end{equation}
\section{Numerical examples}
\label{sec:numerical_examples}
To illustrate the performance of the proposed framework for training viscoelastic PANNs in an unsupervised manner via full-field data, we consider several numerical examples, including ideal and noisy data sets, in the following.

In all numerical studies, each of the three element-wise non-decreasing FICNNs with skip connections consisted of a single hidden layer with two neurons and softplus activation functions. To ensure the desired properties (element-wise non-decreasingness and convexity), non-negativity constraints were imposed on the weights in the hidden layer and the skip connections~\citep{Klein2021,amos_input_2017}.
Because the resulting number $n_\theta$ of trainable parameters, $\V{\theta}\in\Set{P}_{\theta}$, was fairly small, we chose the SLSQP optimizer, which performs better than the ADAM optimizer---widely used in machine learning---for small to moderate network sizes, cf.~\citep[Appendix~G]{Kalina2024}.

To improve the efficacy of the training process, calibration was split into two steps: a reduced dataset comprising only the first 50 time steps was used to calibrate the PANN initially, after which the complete dataset was used for training.
In both training steps, the convergence criterion, i.e., the function tolerance for SLSQP, was set to the extremely small value of $10^{-32}$, thus effectively limiting the optimization by the number of function evaluations instead. As a result, calibration was terminated almost exclusively by reaching the maximum number of function evaluations, rather than by satisfying the convergence criterion. For the first training step, using 50 time steps, the number of function evaluations was set to 2000, while for the subsequent step, using all time steps, it was set to 1000.

The two convergence criteria in the combined Newton iteration, employed to compute $\tsvar{\te{q}}^e$ and $\tvar{\varepsilon}_{33}^e$, were defined by $\epsilon^{\te{q}}_{\textrm{tol}}=10^{-6}$ and $\epsilon^\sigma_{\textrm{tol}}=10^{-12}$. The maximum number of allowed Newton iterations was set to $\niter=20$.

For all considered examples, training was conducted for ten different initializations. Subsequently, the model with the least training loss was selected and validated in each case. The implementation of the PANN model and the unsupervised calibration workflow was realized using Python, TensorFlow, and SciPy.\footnote{Once the final version of the article is published, the code will be made publicly available.}
All computations were performed on an HPC cluster at TU Dresden; hardware details are given later.
\begin{remark}
    It is known from the literature that training inelastic neural models via ODE-constrained optimization, as done in this work, can fail whenever an unfavorable set of the trainable parameters causes quantities such as stresses to drift toward infinity, resulting in NaN values, cf.~\citep[p.~20]{Holthusen2026a} or \citep[Remark~10]{kalina_physics-augmented_2026}. The SLSQP optimizer used here is comparatively robust against such scenarios. If the optimizer receives the value NaN, it can switch to a stable region of the parameters $\V \theta \in \Set{P}_{\theta}$.
     Only if NaN values occur in the very first training epoch does SLSQP terminate immediately. One way to prevent NaNs in this initial epoch is to choose the trainable parameters such that the initial relaxation times take on meaningful values~\citep[Remark~9]{kalina_physics-augmented_2026}. For the cases considered here, however, it was sufficient to restrict the initialization of the weights to the interval $[0,0.1]$.
\end{remark}
\subsection{Generation of data}
\label{ssec:generation_of_data}
To calibrate the PANN in an unsupervised manner using the EGM, synthetic training data, namely \emph{in-plane displacements} and \emph{global reaction forces} normal to the boundaries, are generated via finite element simulations. These simulations were performed by using an in-house finite element code.
\subsubsection{Ground truth viscoelastic model}
\label{sssec:gt_model}
\begin{figure*}[p]
  \centering
  \vspace{-2pt}
  \includegraphics[clip]{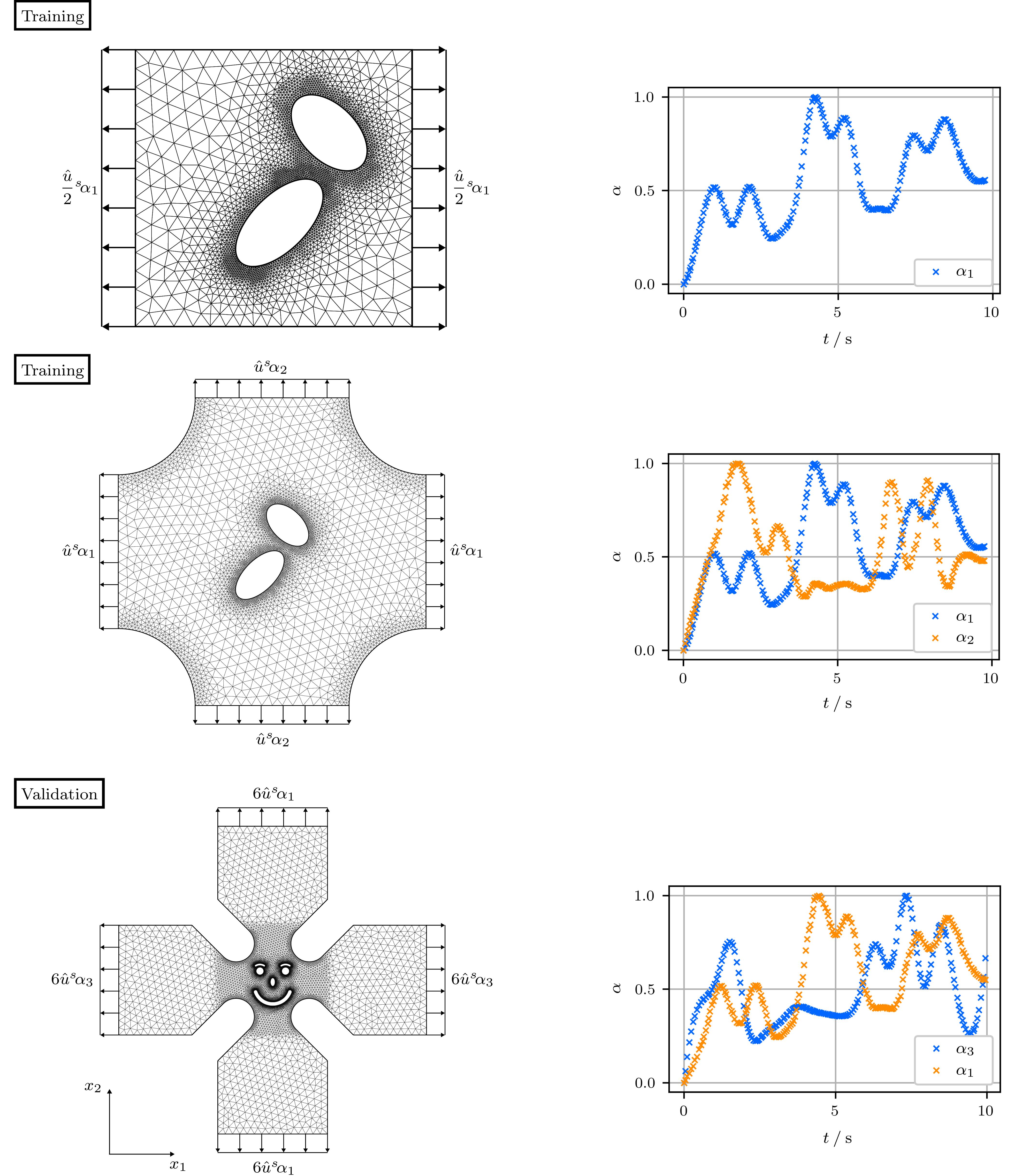}
  \caption{\textbf{Training:} Uniaxial specimen with outer dimensions $\SI{100}{\milli\meter}\times\SI{100}{\milli\meter}$, discretized into 5063 CST elements,
  and biaxial specimen with outer dimensions $\SI{200}{\milli\meter}\times\SI{200}{\milli\meter}$, discretized into 6253 CST elements, from which training data was generated using the linear viscoelastic reference model given in Eqs.~\eqref{eq:gt_1}--\eqref{eq:gt_5}.
  The prescribed displacement $\hat{u}=\SI{1}{\milli\meter}$ on a respective boundary was multiplied by a scaling factor $\tvar{\alpha}_1$ or $\tvar{\alpha}_2$, $s\in\{1,2,\dots,199\}$, as shown on the right of the respective specimen. 
  \textbf{Validation:} The biaxial specimen at the bottom, with outer dimensions $\SI{841.4}{\milli\meter}\times\SI{841.4}{\milli\meter}$, is discretized into 17277 quadratic triangular elements and used for validation. In this case, the scaling factors $\tvar{\alpha}_1, \tvar{\alpha}_3$, $s\in\{1,2,\dots,194\}$ are also shown on the right.}
  \label{fig:synthetic_specimens}
\end{figure*}

The ground truth (GT) material model employed is a classical linear viscoelastic Maxwell model with one internal variable, given by
\begin{subequations}
    \begin{align}
        \varPsi^{\textrm{GT}}(\ve{\upvarepsilon},\te{q}) \!&:= \!\frac{1}{2}\ve{\upvarepsilon}\!:\!\mathbb{C}^{\textrm{eq}}\!:\!\ve{\upvarepsilon} \!
        +\! \frac{1}{2}(\ve{\upvarepsilon} \!-\! \te{q})\!:\!\mathbb{C}^{\textrm{neq}}\!:\!(\ve{\upvarepsilon} \!-\! \te{q}) \label{eq:gt_1}\\
        \mathbb{C}^{\textrm{eq}} &:= 3K\mathbb{I}^{\textrm{sph}} + 2G\mathbb{I}^{\textrm{dev}} \\
        \mathbb{C}^{\textrm{neq}} &:= 3K_0\mathbb{I}^{\textrm{sph}} + 2G_0\mathbb{I}^{\textrm{dev}} \\
        \varPhi^{*\textrm{GT}}(\te{A}) &:= \frac{1}{2}\te{A}:\mathbb{V}^{-1}:\te{A}\\
        \mathbb{V}^{-1} &:= \frac{1}{3\eta^{\textrm{sph}}}\mathbb{I}^{\textrm{sph}} + \frac{1}{2\eta^{\textrm{dev}}}\mathbb{I}^{\textrm{dev}} \label{eq:gt_5}
    \end{align}
\end{subequations}
and material parameters $K = \SI{5}{\mega\pascal}$, $G = \SI{3}{\mega\pascal}$, $K_0 = \SI{10}{\mega\pascal}$, $G_0 = \SI{7}{\mega\pascal}$, $\eta^{\textrm{sph}} = \SI{4}{\mega\pascal\second}$ and $\eta^{\textrm{dev}} = \SI{2}{\mega\pascal\second}$.
\subsubsection{Training data}
\label{ssec:descr_training_data}
The \emph{uniaxial and biaxial specimens} used to generate the training data are shown in Fig.~\ref{fig:synthetic_specimens}; in each specimen, two ellipses were cut out to produce a wide range of strain and stress states~\citep{linden_dual-stage_2025}.
Each specimen has an initial thickness $h=\SI{1}{\milli\meter}$.
For the uniaxial specimen, displacements in the $x_1$-direction were applied at the left and right edges in the form of \emph{random walks}, so as to obtain as many different strain rates as possible during the synthetic experiment~\citep{Asad2023,rosenkranz_viscoelasticty_2024}. These \emph{random walks} were obtained by means of a scaling factor $\tvar{\alpha}_1 \in [0,1]$, $s\in\{1,\ldots,\nt\}$, that was varied in each time step. The temporal course was obtained using cubic splines, following the approach of \citet{rosenkranz_viscoelasticty_2024}; see Fig.~\ref{fig:synthetic_specimens}.
Clamping in a testing machine was then simulated by additionally locking the displacement in the $x_2$-direction at the same edges.
The same principle applied to the biaxial specimen, where clamping was simulated for each of the four arms: the displacement perpendicular to a given edge followed a \emph{random walk}, as in the uniaxial case, while the displacement parallel to it was locked. For the left arm, for instance, this meant a \emph{random walk} in the $x_1$-direction combined with a locked displacement in the $x_2$-direction.
\subsubsection{Validation data}
\label{ssec:generation_validation}
For validation, the synthetic biaxial experiment shown in Fig.~\ref{fig:synthetic_specimens}, which was not part of the training data, was considered. In addition, quadratic triangular elements were used to discretize the validation specimen. Again, the reference thickness is given by $h=\SI{1}{\milli\meter}$. The trained PANN was embedded in a finite element code to solve the corresponding boundary value problem; the resulting stresses are compared against those obtained with the linear viscoelastic reference model in the following sections.
\subsection{Training with ideal data}
\label{ssec:ideal_data}
\begin{figure*}[ht]
  \centering
  \includegraphics[clip, trim={0cm 0.25cm 0cm 0cm}]{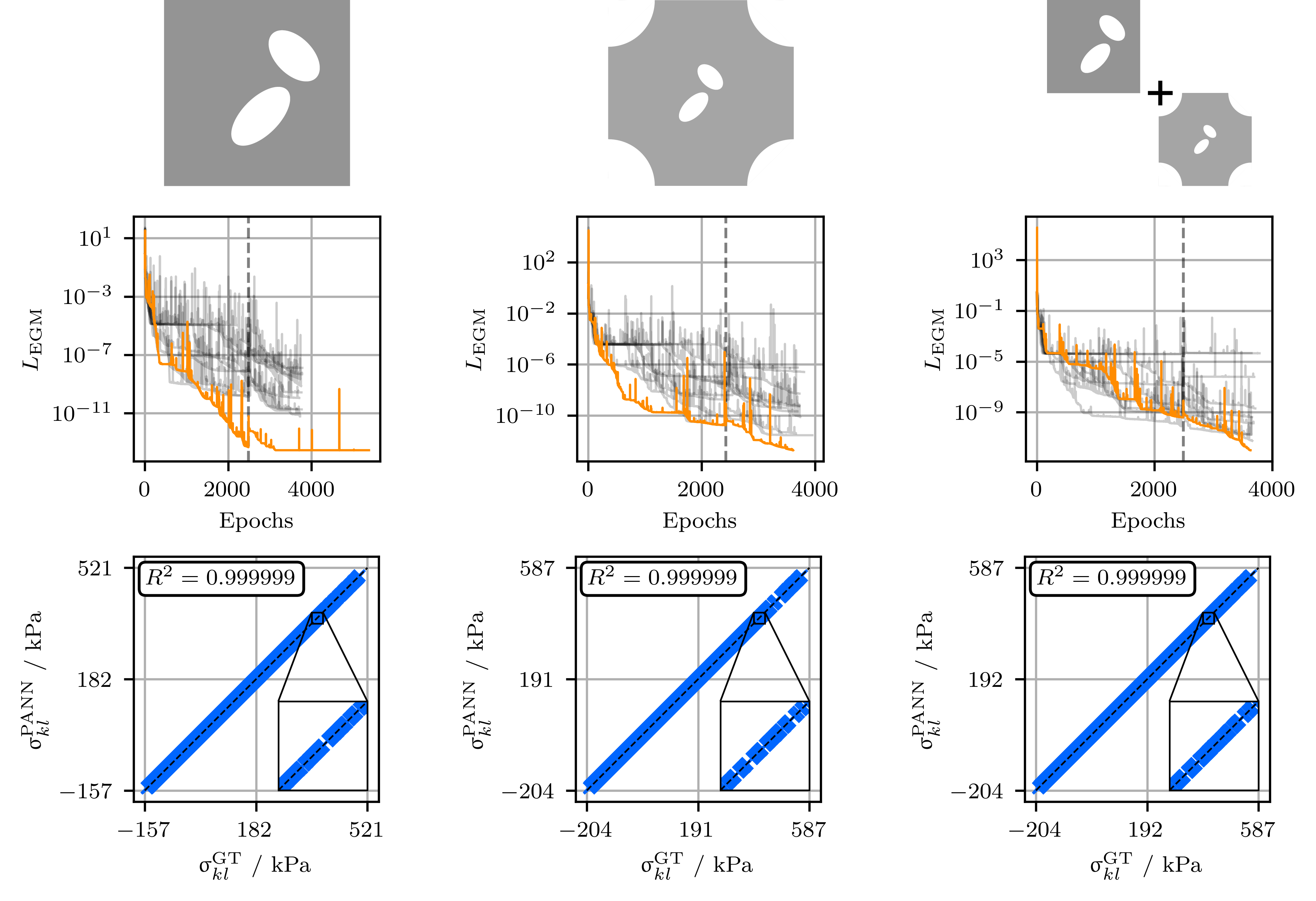}
  \caption{Loss courses during calibration of the PANN, for all ten initializations, using ideal training data generated from uniaxial and biaxial training specimens (see Fig.~\ref{fig:synthetic_specimens}), respectively. The loss with the smallest final value per training data set is visualized in orange. In each column, the prediction of the calibrated PANN with the smallest associated loss for $\sigma_{kl}$ is plotted against the reference values calculated with the GT material model. Additionally, the commonly used $R^2$-value is depicted. For details on the computation of the $R^2$-value please refer to \citet{dammas_when_2025}.}
  \label{fig:training_results_r2_loss}
\end{figure*}

To review the performance of the optimization framework in conjunction with the PANN, ideal training data (without added artificial noise) were used first. Such data, obtained directly from the synthetic experiments, are not subject to the uncertainties present in real experiments; the displacements and global reaction forces computed with the finite element code using the reference model were therefore provided directly for training. The effect of noisy data is addressed in Sect.~\ref{ssec:noisy_data}. Three datasets were considered for training:
\begin{itemize}
    \item data from the uniaxial experiment in Fig.~\ref{fig:synthetic_specimens},
    \item data from the biaxial experiment in Fig.~\ref{fig:synthetic_specimens}, and
    \item combined data from both the uniaxial and biaxial experiments in Fig.~\ref{fig:synthetic_specimens}.
\end{itemize}
Hardware details and approximate computation times per training epoch are listed in Tab.~\ref{tab:hardware_details}.
\begin{table}[ht]
	\centering
	\caption{Number of cores, approximate RAM usage, and approximate computation times per epoch to calibrate the PANN for all of the three training data sets. The latter is specified for the approximate computation times per epoch for the pre-training with only 50 time steps, and the subsequent training with all 199 time steps. All trainings have been done on a machine with Intel Xeon Platinum 8470.}
    \vspace{6pt}
	\label{tab:hardware_details}
	\renewcommand{\arraystretch}{1.}
    \begin{small}
    	\begin{tabularx}{0.48\textwidth}{c|ccc}
        \toprule
            Training data set & Uniaxial & Biaxial & Uniaxial + Biaxial \\
            \midrule
            Number of cores & 40 & 40 & 50 \\[4pt]
            \shortstack[c]{Approx. maximum \\ RAM usage} & $\SI{15}{\giga\byte}$ & $\SI{16}{\giga\byte}$ & $\SI{24}{\giga\byte}$ \\[4pt]
            \shortstack[c]{Approx. $\Delta t_{\textrm{epoch}}$ \\ using 50 time steps}  & $\SI{5}{s}$ &  $\SI{5}{s}$ &  $\SI{8}{s}$ \\[4pt]
            \shortstack[c]{Approx. $\Delta t_{\textrm{epoch}}$ \\ using 199 time steps} & $\SI{16}{s}$ &  $\SI{17}{s}$ &  $\SI{27}{s}$ \\
    		\bottomrule
    	\end{tabularx}
    \end{small}
\end{table}
\begin{remark}
    We elaborate briefly on some technicalities regarding the results in Tab.~\ref{tab:hardware_details} in order to put them into context. First, note that the main computation time per training epoch occurs in the forward pass, due to the Newton iteration. Because of the \emph{adjoint method}, the \emph{backward pass} is comparatively less demanding. In addition, different number of Newton iterations also change the required time per training epoch. A coarser mesh would also drastically reduce the computation time. Furthermore, the values reported in Tab.~\ref{tab:hardware_details} for maximum RAM usage and computation time per training epoch vary between different initializations of the PANN, even for the same training data. These values are therefore marked as approximate. Finally, even for the same initialization of the PANN and identical training data, different training results can be obtained: while the loss is identical during the first training epoch, it gradually drifts apart in subsequent epochs. Since all computations are performed on a machine using 40 to 50 cores, this behavior can be attributed to the non-associativity of floating-point operations combined with the non-deterministic order of parallel execution~\citep{collange2015reproducibility}.
\end{remark}
\subsubsection{Training results}
\label{sssec:training_results}
\begin{figure}[ht]
  \centering
  \includegraphics[clip, trim={0.2cm 0.0cm 0cm 0cm}]{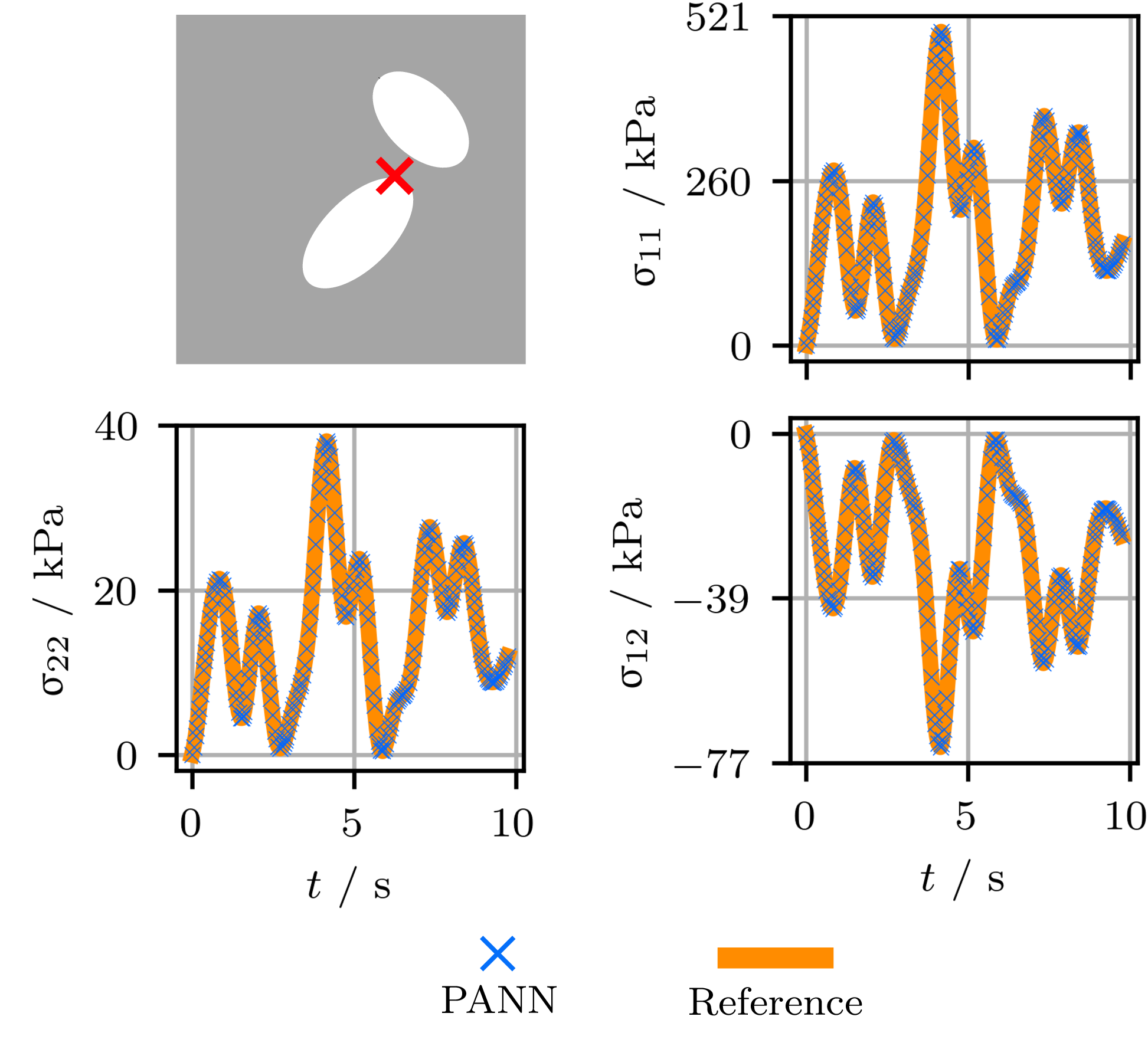}
  \vspace{-8pt}
  \caption{Prediction of stress components by the PANN and reference values of the linear viscoelastic model at the marked quadrature point. The PANN was trained using ideal data obtained from the uniaxial data, see Fig.~\ref{fig:synthetic_specimens}.}
  \label{fig:training_results_sigma}
\end{figure}
\begin{figure}[ht]
  \includegraphics[clip, trim={0.2cm 0.0cm 0cm 0cm}]{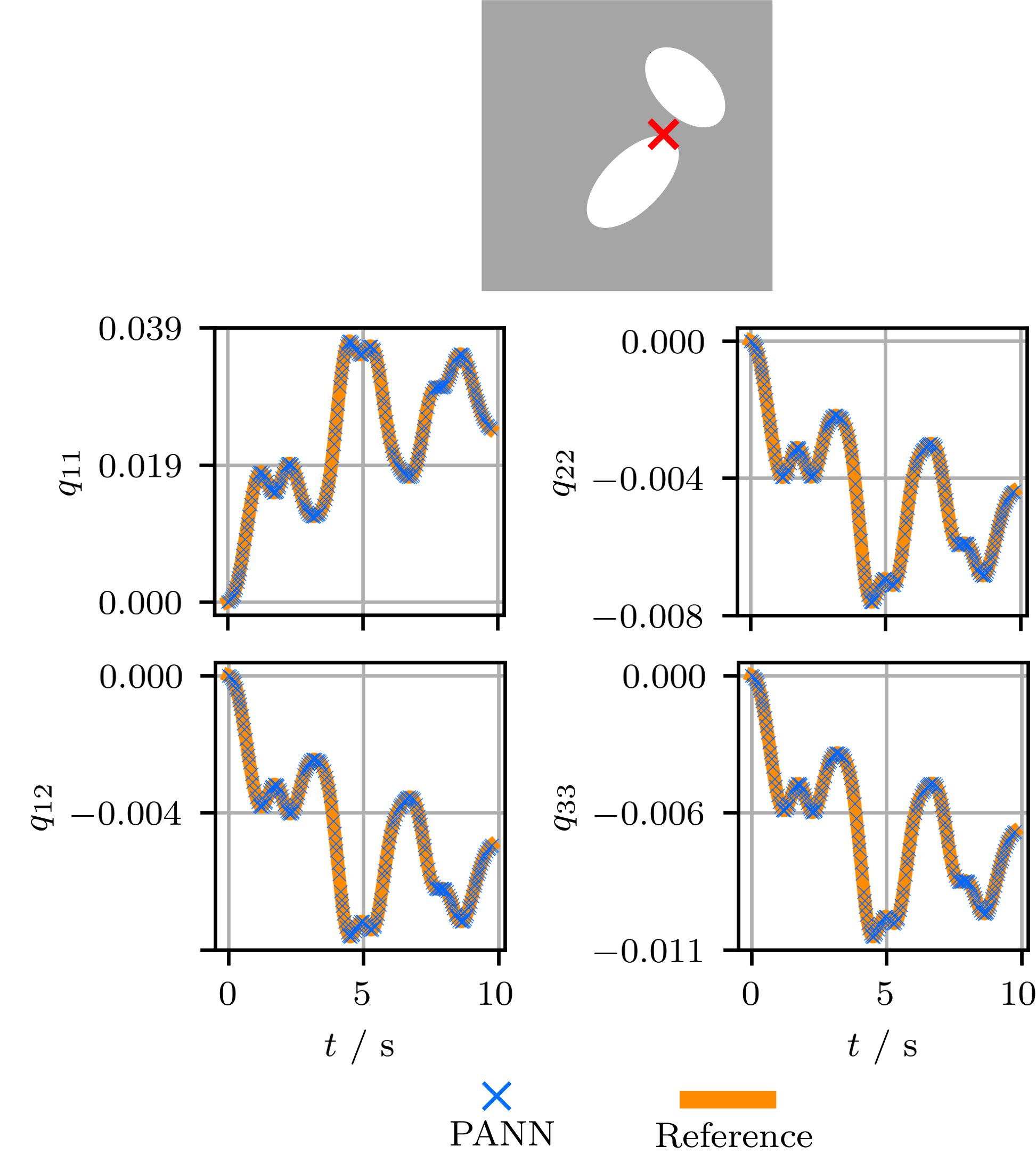}
  \caption{Prediction of internal variables by the PANN and reference values of the linear viscoelastic model at the marked quadrature point. The PANN was trained using ideal data obtained from the uniaxial data, see Fig.~\ref{fig:synthetic_specimens}.}
  \label{fig:training_results_q}
\end{figure}
The SLSQP optimization terminated without failure for all three training data sets and all ten random initializations, yielding ten trained PANNs per data set. However, the quality of the resulting calibration varied between initializations. For each data set, the PANN exhibiting the smallest final loss was therefore selected as the best-performing calibration. The corresponding losses are shown in Fig.~\ref{fig:training_results_r2_loss}.
The stresses predicted by these selected PANNs agree very well with the reference stresses calculated using the GT material model, as shown in the $R^2$ plots provided in Fig.~\ref{fig:training_results_r2_loss}.
In more detail, Fig.~\ref{fig:training_results_sigma} depicts the predicted stress components over time at a representative quadrature point of the uniaxial specimen, all of which are fitted almost perfectly.

Analogously, Fig.~\ref{fig:training_results_q} shows that the coordinates of the internal variables are likewise predicted perfectly by the PANN when compared to the GT values.
All remaining quantities, such as the thermodynamic driving forces, are predicted with comparable accuracy. This level of agreement held for all three training data sets.
\subsubsection{Validation results}
\label{sssec:validation_results}
As discussed in Sect.~\ref{ssec:generation_validation}, we validated a trained PANN by embedding it as the constitutive model in a finite element code and comparing the resulting stress response to that computed with the linear viscoelastic ground truth model. The boundary value problem considered was the validation specimen shown in Fig.~\ref{fig:synthetic_specimens}. We carried out the simulation for the three best-performing PANNs, selected according to their respective loss values, obtained from the three training data sets.

Compared to the linear viscoelastic reference model, which achieves convergence in just a single global Newton iteration for the boundary value problem under consideration, the PANN required around 6 to 10 iterations. This is an inevitable consequence of the nonlinearity introduced by activation functions within the FICNNs used to describe the potentials within the PANN. Aside from this observation, the difference in the predicted stresses between the PANN and the GT model is negligible: Fig.~\ref{fig:validation_results_paraview} illustrates this for a representative time step and the stress component $\sigma^{\textrm{PANN,ideal}}_{11}$, for all three training data sets.
All other quantities, including the remaining stress components, the internal variable, and the displacements, are likewise predicted in very good agreement with the GT model.
\begin{remark}
    Fig.~\ref{fig:validation_results_paraview} shows that the PANN can be trained from all three data sets almost perfectly. Therefore, an experiment with either a uniaxial or biaxial specimen seems sufficient. However, as elucidated by \citet{dammas_when_2025}, this does not apply to the finite strain case. Hence, both uniaxial and biaxial stress states are most likely necessary when calibrating from real experimental data at finite deformations. Note that uniaxial stress states are present in the biaxial specimen at the edges of the holes and in the arms for load application.
\end{remark}

\begin{figure*}[p]
  \centering
  \includegraphics[clip]{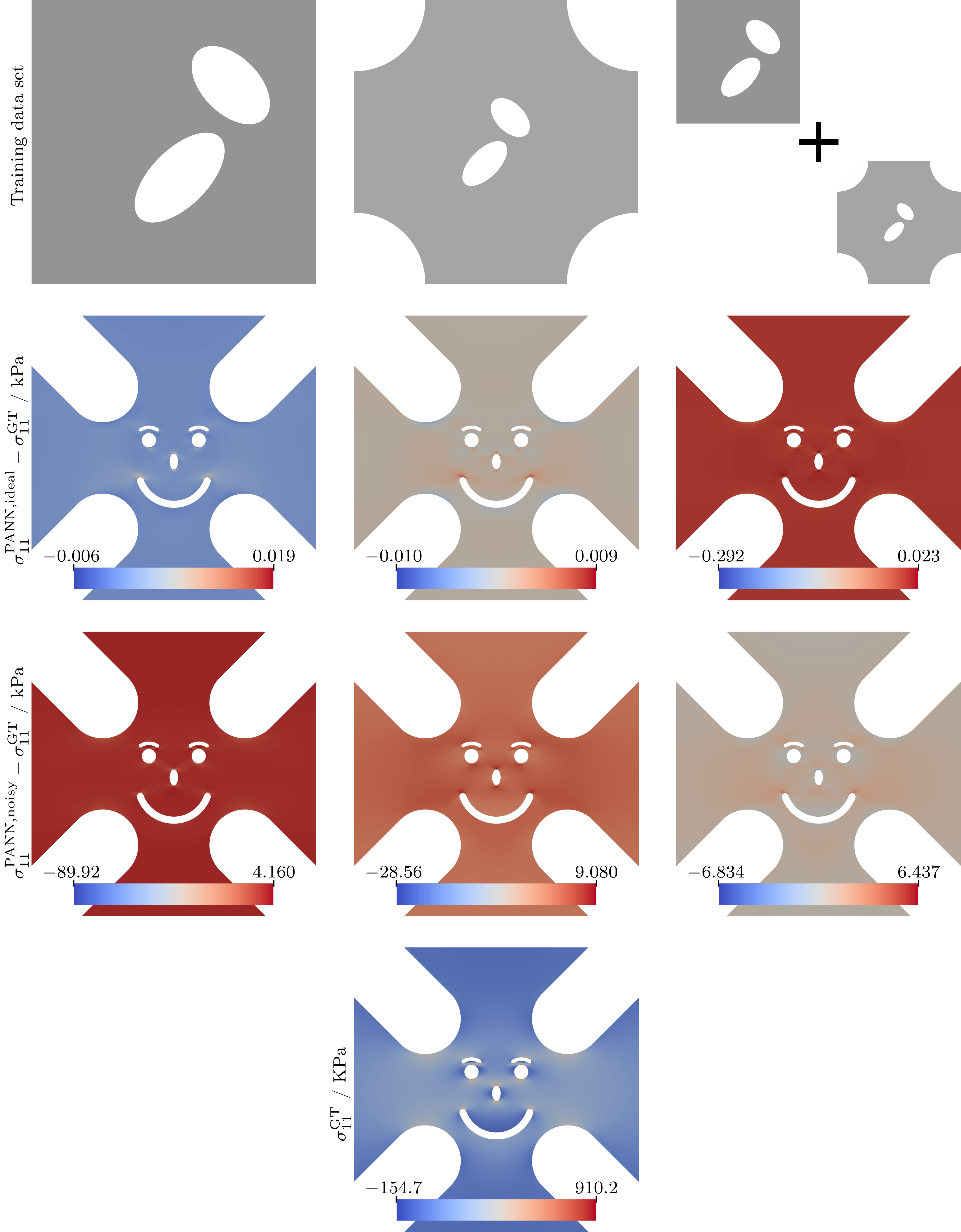}
  \caption{Stress differences $\sigma_{11}^{\textrm{PANN,ideal}} - \sigma_{11}^{\textrm{GT}}$ and $\sigma_{11}^{\textrm{PANN,noisy}} - \sigma_{11}^{\textrm{GT}}$ at time step $s=141$ of the validation boundary value problem in Fig.~\ref{fig:synthetic_specimens} between the calibrated PANN and the linear viscoelastic reference model (GT model). At the bottom the stress $\sigma_{11}^{\textrm{GT}}$ of the GT model is shown. The PANN was trained with both ideal and noisy data obtained from the uniaxial experiment, the biaxial experiment, and the combined data of both synthetic experiments, respectively.}
  \label{fig:validation_results_paraview}
\end{figure*}

\subsection{Training with noisy data}
\label{ssec:noisy_data}
Compared to synthetic experiments, real experiments are always subject to uncertainty, since, for example, measured forces and displacements are corrupted by noise. This scenario is considered in the following.

\subsubsection{Generation of artificial noise}
\label{sssec:generation_of_noise}
 To mimic the situation in a realistic experimental setup, artificial noise was added to the ideal training data, i.e., to the \emph{in-plane displacements} and \emph{global reaction forces} on the respective boundaries. We therefore adopted the method proposed by \citet{linden_dual-stage_2025} to impose artificial noise on full-field data.
\paragraph{Noisy reaction forces}
First, a noisy reaction force $\tvar{\tilde{F}}^\kappa$, $\kappa\in\{1,\dots,n_\kappa\}$ was obtained via
\begin{equation}
    \tvar{\tilde{F}}^\kappa := \tvar{F}^\kappa \tvar{r}^\kappa
    \quad\textrm{with}\quad
    \tvar{r}^\kappa \sim \mathcal{U}^\kappa[1-10^{-4};\,1+10^{-4}] \comma
    \label{eq:noisy_forces}
\end{equation}
where $\mathcal{U}^\kappa[1-10^{-4};1+10^{-4}]$ denotes a uniform distribution.
\begin{remark}
    For the uniaxial specimen, the same sample $\tvar{r}^\kappa$ was chosen for the two reaction forces on the left and right boundary, since both are identical in magnitude. Consequently, in a real experiment it would suffice to measure the reaction force on only one side. This does not hold, however, for the biaxial specimen with clamping boundary conditions, where the reaction forces differ between opposite sides of the specimen, making a measurement on each of the four arms necessary. Accordingly, four different sets of samples $\tvar{r}^\kappa$ were used to impose noise on the \emph{global reaction forces} of the biaxial specimen.
\end{remark}
\paragraph{Noisy in-plane displacements}
The \emph{in-plane nodal displacements} $\tsvar{u}^\lambda_a$ were perturbed using distinct normalized Gaussian random fields (GRFs) for every time step $s$ and coordinate $a$. These GRFs were generated on a $2048\times 2048$ square grid with correlation length $l=\frac{1}{2048}$. This grid was then interpolated onto the physical two-dimensional domain, defined by the minimum and maximum nodal positions $x^\lambda_a + \tsvar{u}^\lambda_a$ in the $x_1$- and $x_2$-direction across all time steps. Subsequently, the respective GRF $\tsvar{g}_a(x^\lambda_a + \tsvar{u}^\lambda_a)$ was evaluated at the current positions of the global nodes. The \emph{noisy in-plane nodal displacements} $\tsvar{\tilde{u}}^\lambda_a$ are then defined by
\begin{equation}
    \tsvar{\tilde{u}}^\lambda_a := \tsvar{u}^\lambda_a + \tsvar{g}_a(x^\lambda_b + \tsvar{u}^\lambda_b) \Delta \ell \eta \comma
    \label{eq:noisy_displacements}
\end{equation}
where $\Delta \ell\in\Setnum{R}_{>0}$ denotes the length scale and $\eta\in\Setnum{R}_{\geq0}$ controls the amplitude of the noise. The length scale was set to $\Delta \ell = \SI{100}{\milli\meter}$ for the uniaxial specimen and to $\Delta \ell = \SI{200}{\milli\meter}$ for the biaxial training specimen, while the amplitude was set to $\eta = 5\times10^{-5}$ in both cases. To illustrate the influence of the noise on the resulting strains, the strain component $\varepsilon_{11}$ of the uniaxial specimen is considered. Fig.~\ref{fig:ideal_and_noisy_eps11} shows this component at a representative time step for both ideal and noisy training data.
\paragraph{Influence length for assembled nodal forces}
Since in the case of non-ideal data, the loss is adopted slightly by spatial regularization of the \emph{assembled nodal forces}, we have to choose a hyper parameter for this method, see Sect.~\ref{ssec:optimization problem}. To this end, we chose the influence factor $\alpha_{\textrm{infl}}=0.8$, which has shown to yield robustness of the optimization procedure.
\begin{figure}[ht]
  \centering
  \includegraphics[clip, trim={0.0cm 0.0cm 0.0cm 0.0cm}]{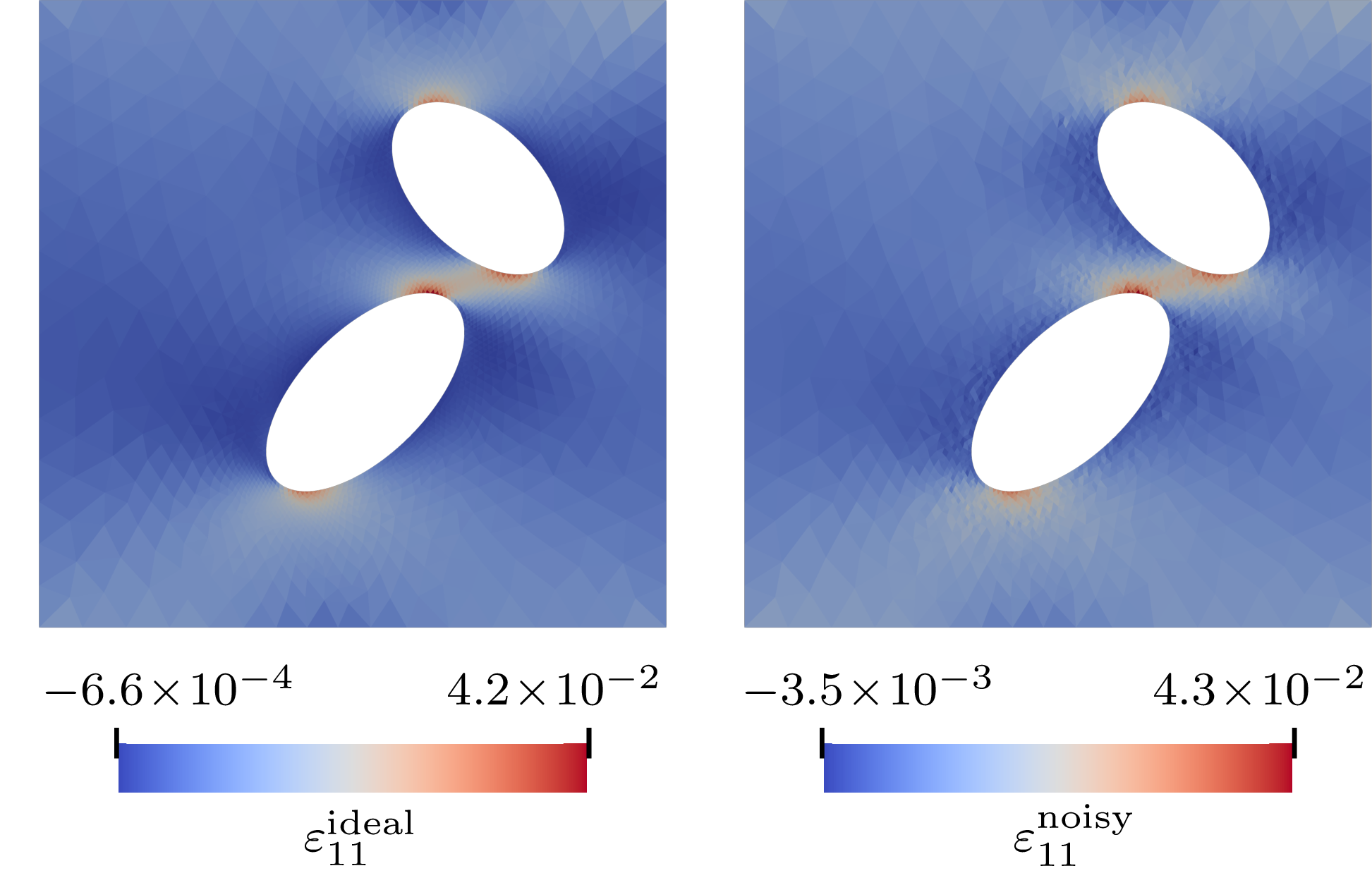}
  \caption{Ideal strain $\varepsilon_{11}^{\textrm{ideal}}$ and noisy strain $\varepsilon_{11}^{\textrm{noisy}}$ for time step $s=89$ calculated by the ideal and noisy \emph{in-plane displacements} of the uniaxial specimen training data.}
  \label{fig:ideal_and_noisy_eps11}
\end{figure}
\subsubsection{Results}
\label{sssec:results_noisy_data}
As in the case of ideal data, the PANN was trained using ten initializations for each of the three datasets: noisy uniaxial, noisy biaxial, and the combination of both. 

For the noisy data, training terminated early for some of these initializations, because the least-squares method used to compute the update in the Newton scheme in the forward pass encountered an ill-conditioned coefficient matrix.
In total, calibration completed successfully for five out of ten initializations in the case of the noisy uniaxial dataset, and for eight out of ten each in the case of the noisy biaxial and the combined noisy datasets. 
As only a limited number of calibrations can be executed, early termination could in principle also occur for the ideal data considered previously, even though it was not observed there.

Among the calibrations that completed successfully, the PANN with the smallest loss was again selected as the best-performing model for each of the three datasets. These trained PANNs were subsequently employed in a finite element simulation of the validation boundary value problem shown in Fig.~\ref{fig:synthetic_specimens}. The resulting stress component $\sigma_{11}^{\textrm{PANN,noisy}}$ is compared in Fig.~\ref{fig:validation_results_paraview} to the GT values $\sigma_{11}^{\textrm{GT}}$ obtained from the linear viscoelastic model.
Compared to the case of ideal data, the absolute stress difference is now noticeably larger for all three trained PANNs. The best agreement is obtained for the PANN trained on both noisy uniaxial and biaxial data, followed by the PANN trained solely on noisy biaxial data, and lastly by the PANN trained solely on noisy uniaxial data. We assume that this ordering is governed more by the specific initialization than by different training data. Furthermore, as expected, the largest stress differences occur near the boundaries, where spatial displacement gradients are high. Aside from this, the overall stress prediction remains reasonably accurate. The remaining stress components predicted by the three PANNs exhibit a similar level of error when compared to the reference values.
\section{Conclusion}
\label{sec:conclusion}

In this work, an unsupervised learning framework for calibrating a PANN for small-strain viscoelasticity from full-field data is proposed. The approach relies solely on quantities accessible in real experiments---\emph{global reaction forces} and \emph{surface displacements}---and formulates an optimization problem based on the EGM that is constrained under a \emph{plane stress assumption} and additionally by the evolution equations of internal variables. The underlying PANN is embedded in the GSM framework, combining \emph{invariant-based} representations of the free energy and the DDP, such that \emph{thermodynamic consistency} and \emph{material symmetry} hold by construction. To reduce the training cost, the gradient of the target loss is evaluated via the \emph{backward adjoint method}. The framework shows excellent agreement with synthetically generated data and reliably recovers the material response even in the presence of added artificial noise.

In summary, the presented framework demonstrates that a \emph{thermodynamically consistent} viscoelastic PANN can be calibrated using only data that is measurable in real experiments, thereby closing the gap between physics-augmented neural inelasticity modeling and full-field experimental characterization. Its key advantage lies in the generality of the applied framework, which can be transferred in a 
straightforward manner to different constitutive settings and thus enable a systematic, largely automated approach to material modeling.

Several extensions are planned for the future. Natural next steps include the calibration with real experimental data \citep{Abbasi2026a,Wiesheier2026}, the extension to finite strains \citep{kalina_physics-augmented_2026,Holthusen2026a}, and the transfer of the approach to further 
inelastic material classes such as elasto-plasticity \citep{Meyer2023a,Fuhg2023}. 

\section*{CRediT authorship contribution statement}
\textbf{Brain M. Riemer:} Conceptualization, Formal analysis, Investigation, Methodology, Software, Validation, Visualization, Writing -- original draft, Writing -- review \& editing.
\textbf{Markus Kästner:} Resources, Writing -- review \& editing, Funding acquisition.
\textbf{Karl A. Kalina:} Supervision, Conceptualization, Formal analysis, Investigation, Methodology, Software, Validation, Writing -- original draft, Writing -- review \& editing, Funding acquisition.

\section*{Usage of AI tools}
In preparing this work, the authors partially used Claude Opus 4.8 and Claude Sonnet 5, generative AI tools, to improve the readability and language of the manuscript. After using these tools, the authors reviewed and revised the content as necessary and take full responsibility for the content of the published article.

\section*{Acknowledgment}
The authors thank the German Research Foundation (DFG) for the support within the Research Training Group GRK 2868 D${}^3$--Project Number 493401063 and the project ANIVIS-NN with project number 571269372, grant KA 5978/2--1. A heartfelt thank also to Jörg Brummund for numerous thought-provoking comments.

\section*{Data availability}
Once the final version of the article is published, the code will be
made publicly available.

\appendix

\section{Notation}\label{app:notation}
Blackboard bold symbols are used for domains of numbers: the integers $\Setnum{Z}$ and the real numbers $\Setnum{R}$. Any domain of number with a condition as a subscript consists only of elements that satisfy that condition. For example, $\Setnum{Z}_{\geq0}:=\left\{z\in\Setnum{Z}\:|\:z \geq0\right\}$.

The space of real hyper-matrices of order $m\in\Setnum{Z}_{\geq1}$ is denoted by $\hms{n_1\times n_2 \times \dots \times n_i}$, $i\in\{1,2,\dots,m\},\:n_1,n_2,\dots,n_i\in\Setnum{Z}_{\geq1}$, while $\ts{n}{}$ is the space of $n$th order real tensors $n\in\Setnum{Z}_{\geq1}$. Fully symmetric or skew-symmetric $n$th order tensors are elements of $\tss{n}\subset\ts{n}{}$ and $\tsa{n}\subset\ts{n}{}$, respectively. Non-deviatoric real-valued 2nd order tensors are element of $\ts{2}{\tr\neq 0}\subset\ts{2}{}$ and positive definite real-valued 2nd order tensors element of $\ts{2}{\lambda>0}\subset\ts{2}{}$.
We denote real-valued 0th, 1st, 2nd and $4$th order tensors by $A$ $\ve{A}$, $\te{A}$ and $\mathbb{A}$, while vectors are represented by $\V{A}$ and matrices by $\M{A}$. 

The functional dependency on both spatial point $\ve{x}\in\ts{1}{}$ and time $t\in\Setnum{R}_{\geq0}$ of tensors and hyper-matrices is often omitted within this work. 
Nevertheless, evaluations at discrete times and iterations are marked by exclusive superscripts $s\in\{1,2,\dots,\nt\}$ and $i\in\{1,2,\dots \niter\}$ on the left of an symbol, where $\nt\in\Setnum{Z}_{\geq1}$ is the number of time steps and $\niter\in\Setnum{Z}_{\geq1}$ the maximum number of iterations. 
Furthermore, we use one more special superscript $e\in\{1,2,\dots,\nele\}$, $\nele\in\Setnum{Z}_{\geq1}$ the number of elements, to show the affiliation to a specific finite element.
For example, a 2nd order tensor evaluated at time step $s$ iteration $i$ for an element $e$ is denoted by $\tivar{\te{A}}^{e}$.

In index notation, if not stated differently, indices $a,b,c,d\in\{1,2\}$ and $k,l,\dots,r\in\{1,2,3\}$. Einstein summation convention applies to these lower case Latin letters. 
For illustration, $\tivar{\mathbb{A}}^{e}:=\tivar{A}^{e}_{klmn}\ve{e}_{k}\ve{e}_{l}\ve{e}_{m}\ve{e}_{n}$, where $\ve{e}_1, \ve{e}_2$ and $\ve{e}_3$ are the standard basis vectors of $\mathbb R^3$.
Greek indices as well as the indices $e$, $i$ and $s$ are excluded from the summation convention. 

The single and double contraction between tensors or hyper-matrices are denoted by $\cdot$ and $:$, e.g., $\te{A}\cdot\te{B}:=A_{kl}B_{lm}\ve{e}_{k}\ve{e}_{m}$ and $\te{A}:\te{B}:=A_{kl}B_{lk}$. 
In addition, $\te{A}^m$ denotes the $m$-fold single contractions between each $\te{A}$, e.g., $\te{A}^3 := \te{A}\cdot\te{A}\cdot\te{A}$. 
The trace and the symmetric part of 2nd order tensors or square matrices are denoted by $\tr(\bullet)$ and $\sym(\bullet)$, respectively, and the transposition by $(\bullet)^\top$. The Nabla operator is defined by $\nabla:=\diffp{}{x_k}\ve{e}_{k}$.
Furthermore, $\delta_{kl}$ denotes the Kronecker-Delta and $\mathbb{I}^{\textrm{sym}} := \frac{1}{2}(\delta_{km}\delta_{ln} + \delta_{kn}\delta_{lm})\ve{e}_k\ve{e}_l\ve{e}_m\ve{e}_n$, $\mathbb{I}^{\textrm{sph}} := \frac{1}{3}\delta_{kl}\delta_{mn}\ve{e}_k\ve{e}_l\ve{e}_m\ve{e}_n$ and $\mathbb{I}^{\textrm{dev}} := \mathbb{I}^{\textrm{sym}} - \mathbb{I}^{\textrm{sph}}$ the symmetric, sphere and deviatoric projector, respectively.
The special orthogonal group and the orthogonal group are represented by $\Group{O}(3):=\Group{O}(3,\Setnum{R})$ and $\Group{SO}(3):=\Group{SO}(3,\Setnum{R})$. The number of global nodes is denoted by $\nn\in\Setnum{Z}_{>0}$.
\section{Convex functional basis for a symmetric non-deviatoric 2nd order tensor and the orthogonal group}
\label{app:functional_basis_trace}
We show that a \emph{functional basis} for a symmetric non-deviatoric 2nd order tensor $\te{s}\in\tss{2} \cap \ts{2}{\tr\neq 0}\subset\tss{2}$ and group $\Group{O}(3)$ is constituted by the convex invariants 
\begin{subequations}
    \begin{align}
        \hspace{-7pt}
        I_1 &: \tss{2} \cap \ts{2}{\tr\neq 0} \to \Setnum{R}, \: \te{s}\mapsto I_1(\te{s})=\tr(\te{s}), \label{eq:s_invar1} \\
        \hspace{-7pt}
        I_2 &: \tss{2} \cap \ts{2}{\tr\neq 0} \to \Setnum{R}, \: \te{s}\mapsto I_2(\te{s})=\tr((\mathbb{I}^{\textrm{dev}}:\te{s})^2) \: \textrm{and} \label{eq:s_invar2} \\
        \hspace{-7pt}
        I_3 &: \tss{2} \cap \ts{2}{\tr\neq 0} \to \Setnum{R}, \: \te{s}\mapsto I_3(\te{s})=\tr(\te{s}^4) \point \label{eq:s_invar3}   
    \end{align}
    \label{eq:functional_basis}%
\end{subequations}
This claim requires the introduction of two more invariants
\begin{subequations}
    \begin{align}
        K &: \tss{2} \cap \ts{2}{\tr\neq 0} \to \Setnum{R}, \: \te{s}\mapsto K(\te{s})=\tr(\te{s}^2) \: \textrm{and} \label{eq:tr2_invariant} \\
        J &: \tss{2} \cap \ts{2}{\tr\neq 0} \to \Setnum{R}, \: \te{s}\mapsto J(\te{s})=\tr(\te{s}^3) \point \label{eq:tr3_invariant}
    \end{align}
    \label{eq:tr2_tr3_invariants}%
\end{subequations}
On the other hand, for the more common case of a symmetric 2nd order tensor $\te{r}\in\tss{2}$ and group $\Group{O}(3)$, the invariants in Eq.~\eqref{eq:functional_basis} do not form a \emph{functional basis}, which is why the domain is restricted for $I_1$, $I_2$ and $I_3$ in Eq.~\eqref{eq:functional_basis}. To note the difference in domain, we formally introduce the invariants
\begin{subequations}
    \begin{align}
        \Bar{I}_1 &: \tss{2} \to \Setnum{R}, \: \te{r}\mapsto \Bar{I}_1(\te{r})=\tr(\te{r}), \label{eq:r_invar1} \\
        \Bar{I}_2 &: \tss{2} \to \Setnum{R}, \: \te{r}\mapsto \Bar{I}_2(\te{r})=\tr((\mathbb{I}^{\textrm{dev}}:\te{r})^2), \label{eq:r_invar2} \\
        \Bar{I}_3 &: \tss{2} \to \Setnum{R}, \: \te{r}\mapsto \Bar{I}_3(\te{r})=\tr(\te{r}^4) \label{eq:r_invar3} \\
        \Bar{K} &: \tss{2} \to \Setnum{R}, \: \te{r}\mapsto \Bar{K}(\te{r})=\tr(\te{r}^2) \: \textrm{and} \label{eq:r_invar4} \\
        \Bar{J} &: \tss{2} \to \Setnum{R}, \: \te{r}\mapsto \Bar{J}(\te{r})=\tr(\te{r}^3) \point \label{eq:r_invar5}
    \end{align}
    \label{eq:r_invars}%
\end{subequations}
Please note that the invariants $\Bar{I}_1$, $\Bar{K}$ and $\Bar{J}$ constitute a well-known \emph{functional basis} for the case of $\te{r}\in\tss{2}$ and $\Group{O}(3)$, see \citet{smith_isotropic_1971, boehler_irreducible_1977, pennisi_irreducibility_1987}.
\begin{proposition}
    \label{prop:not_functional_basis}
    Let the invariants $\Bar{I}_1$, $\Bar{K}$ and $\Bar{J}$, defined in Eq.~\eqref{eq:r_invars}, form a functional basis for $\te{r}\in\tss{2}$ and group $\Group{O}(3)$. Then the invariants $\Bar{I}_1$, $\Bar{K}$ and $\Bar{I}_3$, defined in Eq.~\eqref{eq:r_invars}, do not form a functional basis for $\te{r}\in\tss{2}$ and group $\Group{O}(3)$.
\end{proposition}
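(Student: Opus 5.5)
To show that $\Bar{I}_1$, $\Bar{K}$ and $\Bar{I}_3$ do not form a functional basis, I will exhibit two symmetric tensors $\te{r}_1,\te{r}_2\in\tss{2}$ on which all three invariants agree but which are not related by any $\te{Q}\in\Group{O}(3)$. If these three formed a functional basis, every isotropic invariant would be a function of them. In particular $\Bar{J}$ would be, so $\Bar{J}(\te{r}_1)=\Bar{J}(\te{r}_2)$ would be forced. Hence it suffices to find such a pair with different values of $\Bar{J}$. Since $\Bar{I}_1$, $\Bar{K}$ and $\Bar{J}$ form a functional basis by hypothesis, different $\Bar{J}$ also confirms that the tensors lie in different $\Group{O}(3)$-orbits, i.e.\ they have different spectra.

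Everything is spectral, so I will work with diagonal tensors with eigenvalues $(\lambda_1,\lambda_2,\lambda_3)$ and power sums $p_k=\sum_i\lambda_i^k$. The invariants are $\Bar{I}_1=p_1$, $\Bar{K}=p_2$, $\Bar{J}=p_3$ and $\Bar{I}_3=p_4$; note that $\Bar{I}_2=p_2-p_1^2/3$ is redundant. By Newton's identities in three variables,
\begin{equation*}
p_4=e_1p_3-e_2p_2+e_3p_1 ,
\end{equation*}
where $e_1=p_1$, $e_2=(p_1^2-p_2)/2$ and $e_3=(p_1^3-3p_1p_2+2p_3)/6$. Consequently, for fixed $p_1,p_2$, $p_4$ is an affine function of $p_3$ with slope
\begin{equation*}
\partial p_4/\partial p_3=e_1+p_1/3=\tfrac43 p_1 .
\end{equation*}
When $p_1\neq0$ this slope is nonzero, so $(p_1,p_2,p_4)$ determines $p_3$. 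This is consistent with the later claim in the paper that $I_1,I_2,I_3$ form a basis on the non-deviatoric set. The failure must therefore come from tensors with $p_1=0$.

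So I will take deviatoric tensors, where $p_4=p_2^2/2$ and $p_3$ drops out entirely. A concrete pair is $\te{r}_1=\operatorname{diag}(2,-1,-1)$ and $\te{r}_2=-\te{r}_1=\operatorname{diag}(-2,1,1)$. For both tensors $\Bar{I}_1=0$, $\Bar{K}=6$ and $\Bar{I}_3=18$, while $\Bar{J}(\te{r}_1)=6$ and $\Bar{J}(\te{r}_2)=-6$. The spectra $\{2,-1,-1\}$ and $\{-2,1,1\}$ differ, so no orthogonal $\te{Q}$ maps one tensor to the other. Assuming $\Bar{J}=f(\Bar{I}_1,\Bar{K},\Bar{I}_3)$ then gives $6=-6$, a contradiction.

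There is no real obstacle here; the only step needing care is choosing the counterexample. Checking the four power sums is routine arithmetic, and the Newton identity explains why the counterexample must be deviatoric.
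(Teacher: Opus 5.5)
Your proof is correct and matches the paper's: it uses the same deviatoric pair $\operatorname{diag}(2,-1,-1)$ and $\operatorname{diag}(-2,1,1)$, which agree in $\Bar{I}_1=0$, $\Bar{K}=6$, $\Bar{I}_3=18$ but have $\Bar{J}=\pm 6$, so $\Bar{I}_1,\Bar{K},\Bar{I}_3$ cannot separate the two orbits. Your Newton-identity remark (the slope $\tfrac43 p_1$) is extra motivation for choosing a deviatoric pair, and it agrees with the syzygy the paper later uses in Proposition~\ref{prop:functional_basis_trace_non_zero}.
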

\begin{proof}
    \label{proof:not_functional_basis}
    We use the argumentation by \citet{smith_fundamental_1970} to show that the invariants $\Bar{I}_1$, $\Bar{K}$ and $\Bar{I}_3$ are not separating the group orbits and hence do not form a functional basis~\citep{pipkin_material_1963}. For the diagonal tensors $\te{r}_1=\textrm{diag}(-2,1,1)$ and $\te{r}_2=\textrm{diag}(2,-1,-1)$, which belong to different group orbits, $\Bar{I}_1(\te{r}_1)=\Bar{I}_1(\te{r}_2)=0$, $\Bar{K}(\te{r}_1)=\Bar{K}(\te{r}_2)=6$, $\Bar{I}_3(\te{r}_1)=\Bar{I}_3(\te{r}_2)=18$, but $\Bar{J}(\te{r}_1)=-6$ and $\Bar{J}(\te{r}_2)=6$. Thus, $\Bar{I}_1$, $\Bar{K}$ and $\Bar{I}_3$ do not separate the group orbits.
\end{proof}
\begin{corollary}
    \label{corollary:not_functional_basis_deviatoric}
    The invariants $\Bar{I}_1$, $\Bar{I}_2$ and $\Bar{I}_3$, defined in Eq.~\eqref{eq:r_invars}, do not form a functional basis for $\te{r}\in\tss{2}$ and group $\Group{O}(3)$.
\end{corollary}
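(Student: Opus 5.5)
The corollary follows from Proposition~\ref{prop:not_functional_basis}. The plan is to show that $\Bar{I}_2$ is a function of $\Bar{I}_1$ and $\Bar{K}$ alone. Then the set $\{\Bar{I}_1,\Bar{I}_2,\Bar{I}_3\}$ cannot separate more orbits than $\{\Bar{I}_1,\Bar{K},\Bar{I}_3\}$, and the latter set fails to separate orbits.

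The first step is the deviatoric decomposition $\mathbb{I}^{\textrm{dev}}:\te{r}=\te{r}-\tfrac13\tr(\te{r})\te{I}$. Squaring and taking the trace gives
\begin{equation*}
\Bar{I}_2(\te{r})=\tr(\te{r}^2)-\tfrac{2}{3}\tr(\te{r})^2+\tfrac{1}{3}\tr(\te{r})^2=\Bar{K}(\te{r})-\tfrac13\Bar{I}_1(\te{r})^2 .
\end{equation*}
Consequently, if two tensors share the values of $\Bar{I}_1$, $\Bar{K}$ and $\Bar{I}_3$, they also share $\Bar{I}_2$.

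The second step applies this to the pair from the proof of Proposition~\ref{prop:not_functional_basis}. Take $\te{r}_1=\textrm{diag}(-2,1,1)$ and $\te{r}_2=\textrm{diag}(2,-1,-1)$. For these, $\Bar{I}_1=0$ and $\Bar{K}=6$, so $\Bar{I}_2(\te{r}_1)=\Bar{I}_2(\te{r}_2)=6$. We also have $\Bar{I}_3(\te{r}_1)=\Bar{I}_3(\te{r}_2)=18$.

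The two tensors nevertheless lie in different $\Group{O}(3)$ orbits. Their spectra $\{-2,1,1\}$ and $\{2,-1,-1\}$ differ, and equivalently $\Bar{J}(\te{r}_1)=-6\neq 6=\Bar{J}(\te{r}_2)$. Since $\Bar{J}$ is an isotropic invariant, no function of $(\Bar{I}_1,\Bar{I}_2,\Bar{I}_3)$ can reproduce it. By the orbit-separation criterion already used in the proposition \citep{pipkin_material_1963}, the three invariants therefore do not form a functional basis.

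There is no real obstacle here. The only point needing care is the algebraic identity relating $\Bar{I}_2$ to $\Bar{I}_1$ and $\Bar{K}$. It is also worth noting that the counterexample is necessarily traceless. This is consistent with the restriction to $\ts{2}{\tr\neq 0}$ used for the positive result in this appendix.
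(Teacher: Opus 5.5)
Your proof is correct and matches the paper's argument: the paper likewise reuses the pair $\te{r}_1=\textrm{diag}(-2,1,1)$, $\te{r}_2=\textrm{diag}(2,-1,-1)$ from the proof of Proposition~\ref{prop:not_functional_basis}, noting that because both are deviatoric, $\Bar{I}_2=\Bar{K}$ on them. Your general identity $\Bar{I}_2=\Bar{K}-\tfrac13\Bar{I}_1^2$, which is the syzygy \eqref{eq:syz1_for_proof}, is just a more explicit form of that observation.
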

\begin{proof}
    \label{proof:not_functional_basis_deviatoric}
    This follows directly from the proof of Proposition~\ref{prop:not_functional_basis}, since the tensors $\te{r}_1$ and $\te{r}_2$ are deviatoric. Hence, $\Bar{K}(\te{r}_1)=\Bar{I}_2(\te{r}_1)$ and $\Bar{K}(\te{r}_2)=\Bar{I}_2(\te{r}_2)$. 
\end{proof}
\begin{lemma}
    \label{lemma:functional_bases_for_non_zero_trace}
    The functional basis constituted by $\Bar{I}_1$, $\Bar{K}$ and $\Bar{J}$, defined in Eq.~\eqref{eq:r_invars}, separate the group orbits for $\Group{O}(3)$ and $\te{r}\in\tss{2}$. Thus, the invariants $I_1$, $K$ and $J$, defined in Eq.~\eqref{eq:s_invar1} and Eq.~\eqref{eq:tr2_tr3_invariants}, separate the group orbits for $\Group{O}(3)$ and $\te{s}\in\tss{2} \cap \ts{2}{\tr\neq 0}\subset\tss{2}$.
\end{lemma}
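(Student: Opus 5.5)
The lemma has two parts. The first says that $\Bar{I}_1$, $\Bar{K}$, $\Bar{J}$ separate the $\Group{O}(3)$-orbits on $\tss{2}$. The second transfers this to the restricted domain $\tss{2}\cap\ts{2}{\tr\neq 0}$.

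For the first part, I will use the classical characterization of orbits via the spectral theorem. Every $\te{r}\in\tss{2}$ can be written as $\te{Q}\cdot\textrm{diag}(\lambda_1,\lambda_2,\lambda_3)\cdot\te{Q}^\top$ with $\te{Q}\in\Group{O}(3)$. Hence two symmetric tensors lie in the same orbit if and only if they have the same multiset of eigenvalues.

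Now suppose $\Bar{I}_1$, $\Bar{K}$, $\Bar{J}$ agree on $\te{r}_1$ and $\te{r}_2$. These are the power sums $p_k=\sum_i\lambda_i^k$ for $k=1,2,3$. By Newton's identities they determine the elementary symmetric polynomials:
\begin{equation*}
e_1=p_1,\qquad e_2=\tfrac12(p_1^2-p_2),\qquad e_3=\tfrac13\left(p_3-e_1p_2+e_2p_1\right).
\end{equation*}
So the characteristic polynomials of $\te{r}_1$ and $\te{r}_2$ coincide, and with them the eigenvalue multisets. By the spectral theorem, $\te{r}_2=\te{Q}\cdot\te{r}_1\cdot\te{Q}^\top$ for some $\te{Q}$, which is the orbit-separation property.

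Because the statement says ``Let \dots{} form a functional basis'', one could alternatively just cite \citet{smith_isotropic_1971, boehler_irreducible_1977}. In that case I still need the link \citep{pipkin_material_1963} that, for a finite list of polynomial invariants of a compact group, being a functional basis is equivalent to separating orbits. I would give the spectral argument above, since it is self-contained.

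For the second part, I observe that $I_1$, $K$, $J$ are the restrictions of $\Bar{I}_1$, $\Bar{K}$, $\Bar{J}$ to the subset $\Set{S}:=\tss{2}\cap\ts{2}{\tr\neq 0}$. I then need to check that $\Set{S}$ is $\Group{O}(3)$-invariant: $\tr(\te{Q}\cdot\te{s}\cdot\te{Q}^\top)=\tr(\te{s})\neq 0$, and symmetry is preserved by conjugation. Hence the orbits in $\Set{S}$ are exactly the orbits in $\tss{2}$ that meet $\Set{S}$. Take two elements of $\Set{S}$ on which $I_1$, $K$, $J$ agree. Then $\Bar{I}_1$, $\Bar{K}$, $\Bar{J}$ agree on them, so by the first part they lie in the same $\tss{2}$-orbit, and by invariance this orbit lies in $\Set{S}$.

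The only point needing care is this invariance of the restricted domain. Without it, restriction could split orbits or produce a group action that does not preserve the domain. Since the trace is conjugation-invariant, I expect no real obstacle; the main work is in the first part, carried by Newton's identities.
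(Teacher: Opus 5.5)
Your proof is correct and has the same structure as the paper's. The paper simply cites the classical literature for orbit separation of $\tr(\te{r}),\tr(\te{r}^2),\tr(\te{r}^3)$ on $\tss{2}$, then observes that the orbits with nonzero trace make up $\tss{2}\cap\ts{2}{\tr\neq 0}$, which is exactly your restriction step. Your spectral-theorem and Newton-identity argument makes the cited fact self-contained, and your check $\tr(\te{Q}\cdot\te{s}\cdot\te{Q}^\top)=\tr(\te{s})$ spells out the $\Group{O}(3)$-invariance of the restricted domain, which the paper uses only implicitly.
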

\begin{proof}
    The fact that $\Bar{I}_1$, $\Bar{K}$ and $\Bar{J}$ separate the group orbits for $\Group{O}(3)$ and $\te{r}\in\tss{2}$ is well-known~\citep{smith_isotropic_1971, pennisi_irreducibility_1987, boehler_irreducible_1977}. Hence, they separate all group orbits where $\Bar{I}_1(\te{r})\neq0$. The elements of theses group orbits with $\Bar{I}_1(\te{r})\neq0$ build the proper subset $\tss{2} \cap \ts{2}{\tr\neq 0}\subset\tss{2}$. Thus, $I_1$, $K$ and $J$, which are defined analogous to $\Bar{I}_1$, $\Bar{K}$ and $\Bar{J}$, separate the group orbits for $\Group{O}(3)$ and $\te{s}\in\tss{2} \cap \ts{2}{\tr\neq 0}\subset\tss{2}$.
\end{proof}
\begin{proposition}
    \label{prop:functional_basis_trace_non_zero}
    The invariants $I_1$, $I_2$ and $I_3$, defined in Eq.~\eqref{eq:functional_basis}, constitute a functional basis for $\te{s}\in\tss{2} \cap \ts{2}{\tr\neq 0}\subset\tss{2}$ and group $\Group{O}(3)$.
\end{proposition}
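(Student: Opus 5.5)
The plan is to reduce the claim to Lemma~\ref{lemma:functional_bases_for_non_zero_trace}. By that lemma, $I_1$, $K$ and $J$ separate the $\Group{O}(3)$-orbits on $\tss{2}\cap\ts{2}{\tr\neq0}$. Following the orbit-separation characterization of a functional basis \citep{pipkin_material_1963,smith_fundamental_1970}, it therefore suffices to show that $K$ and $J$ are single-valued functions of $(I_1,I_2,I_3)$ on this domain. Then two tensors with equal $I_1,I_2,I_3$ also have equal $I_1,K,J$ and lie on the same orbit. Consequently every isotropic scalar function of $\te s$ factors through $(I_1,I_2,I_3)$.

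First, I will express $K$. Since $\mathbb{I}^{\textrm{dev}}:\te s=\te s-\tfrac13 I_1\te 1$, a direct expansion gives $I_2=K-\tfrac13 I_1^2$. Hence $K=I_2+\tfrac13 I_1^2$, which holds on all of $\tss{2}$.

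Second, I will express $J$ by relating the power sums $p_m=\tr(\te s^m)$ to the principal invariants $e_1,e_2,e_3$ through Cayley--Hamilton, i.e.\ Newton's identities. These read $e_1=p_1$, $e_2=\tfrac12(p_1^2-p_2)$ and $3e_3=p_3-e_1p_2+e_2p_1$, together with $p_4=e_1p_3-e_2p_2+e_3p_1$. Substituting $e_3$ into the last identity, with $p_1=I_1$, $p_2=K$, $p_3=J$ and $p_4=I_3$, yields
\begin{equation*}
I_3=\tfrac43\, I_1 J + P(I_1,K),
\end{equation*}
where $P$ is an explicit polynomial in $I_1$ and $K$. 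Since $I_1\neq0$ on the domain, this can be solved as $J=\tfrac{3}{4I_1}\bigl(I_3-P(I_1,K)\bigr)$. Inserting the expression for $K$ from the first step then writes $J$ as a function of $(I_1,I_2,I_3)$ alone.

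The main obstacle is this second step. One must verify that the coefficient of $J$ in the expansion of $I_3$ is exactly $\tfrac43 I_1$, and in particular does not vanish identically. This is precisely where the restriction $\tr(\te s)\neq0$ is needed, and it matches the counterexample in Proposition~\ref{prop:not_functional_basis}: there $I_1=0$ and $J$ is lost. With $J$ recovered, Lemma~\ref{lemma:functional_bases_for_non_zero_trace} gives orbit separation, and the functional-basis property follows.
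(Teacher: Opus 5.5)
Your proposal is correct and follows essentially the same route as the paper. Your relation $K=I_2+\tfrac13 I_1^2$ and your Newton-identity relation $I_3=\tfrac43 I_1J+\tfrac12K^2-I_1^2K+\tfrac16 I_1^4$ are, after substituting for $K$, exactly the paper's two syzygies. Solving the second for $J$ using $I_1\neq0$ and then invoking Lemma~\ref{lemma:functional_bases_for_non_zero_trace} is the paper's argument, and your derivation via Cayley--Hamilton/Newton's identities, rather than ``by inspection'', makes the coefficient $\tfrac43 I_1$ of $J$ transparent.
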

\begin{proof}
    \label{proof:functional_basis_trace_non_zero}
    By inspection we find the syzygies
    \begin{subequations}
        \begin{align}
            0 &= 3K - I_1^2 - 3I_2 \label{eq:syz1_for_proof} \\
            0 &= 24JI_1 - 2I_1^4 - 12I_1^2I_2 + 9I_2^2 - 18I_3 \label{eq:syz2_for_proof} \point
        \end{align}
        \label{eq:syzygies_for_proof}%
    \end{subequations}
    We note that $\{I_1,K,J\}$ represents a functional basis for $\Group{O}(3)$ and $\te{s}\in\tss{2} \cap \ts{2}{\tr\neq 0}\subset\tss{2}$, see Lemma~\ref{lemma:functional_bases_for_non_zero_trace}. 
    From Eq.~\eqref{eq:syzygies_for_proof} follows that $K$ is expressible for arbitrary values of $I_1$ and $I_2$, while $J$ is only expressible for $I_1(\te{s})\neq0$.
    Since $\nexists\:\te{s}\in\tss{2} \cap \ts{2}{\tr\neq 0}:I_1(\te{s)}=0$, there exists a functional dependency between every element of $\{I_1,I_2,I_3\}$ and $\{I_1,K,J\}$. Hence, $\{I_1,I_2,I_3\}$ is a functional basis.
\end{proof}
\begin{corollary}
    \label{corollary:functional_basis_positive_definit}
    The invariants $I_1$, $I_2$ and $I_3$, defined in Eq.~\eqref{eq:functional_basis}, form a functional basis for $\Group{O}(3)$ and a positive definite symmetric 2nd order tensor $\te{s}\in\ts{2}{\lambda>0}\subset\tss{2}$.
\end{corollary}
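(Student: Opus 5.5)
The corollary will follow from Proposition~\ref{prop:functional_basis_trace_non_zero} by a set inclusion: I will show that $\ts{2}{\lambda>0}\cap\tss{2}$ is contained in $\tss{2}\cap\ts{2}{\tr\neq 0}$ and that it is closed under the $\Group{O}(3)$ action. Then the functional-basis property, which is the property of separating orbits and generating all invariant functions through the syzygies~\eqref{eq:syzygies_for_proof}, restricts directly to the smaller domain.

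\emph{Inclusion.} Let $\te{s}\in\ts{2}{\lambda>0}$ be symmetric, with real eigenvalues $\lambda_1,\lambda_2,\lambda_3>0$. By the spectral decomposition, $\tr(\te{s})=\lambda_1+\lambda_2+\lambda_3>0$. Hence $\tr(\te{s})\neq 0$, so $\te{s}\in\tss{2}\cap\ts{2}{\tr\neq 0}$.

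\emph{Invariance of the subset.} For every $\te{Q}\in\Group{O}(3)$, the tensor $\te{Q}\cdot\te{s}\cdot\te{Q}^\top$ is symmetric and has the same eigenvalues as $\te{s}$, so it is again positive definite. The positive definite symmetric tensors are therefore a union of full $\Group{O}(3)$ orbits of $\tss{2}\cap\ts{2}{\tr\neq 0}$.

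\emph{Restriction.} Two elements of the subset that lie in different orbits also lie in different orbits of the larger set. By Proposition~\ref{prop:functional_basis_trace_non_zero} (with Lemma~\ref{lemma:functional_bases_for_non_zero_trace}), $I_1$, $I_2$, $I_3$ take different values on them. Equivalently, every isotropic scalar function on the subset is the restriction of one on $\tss{2}\cap\ts{2}{\tr\neq 0}$. It can therefore be written through $I_1,K,J$, and by the syzygies~\eqref{eq:syzygies_for_proof} with $I_1>0$, through $I_1,I_2,I_3$.

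The only delicate point is the invariance of the subset under the group action. Without it, restricting a functional basis to an arbitrary subset would not be automatic, because orbits could leave the subset. Here it is immediate from the fact that similarity transformations preserve the spectrum, so I expect no real obstacle.
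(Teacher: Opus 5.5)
Your proposal is correct and follows essentially the same route as the paper. The paper's proof only observes that $I_1(\te{s})\neq 0$ for every positive definite $\te{s}$, so the argument of Proposition~\ref{prop:functional_basis_trace_non_zero} (solving the second syzygy for $J$) applies unchanged. Your extra check that the positive definite tensors form an $\Group{O}(3)$-invariant subset is left implicit in the paper; it is harmless, but orbit separation would restrict to any subset even without it.
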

\begin{proof}
    Since $I_1(\te{s})\neq0 \:\forall\:\te{s}\in\ts{2}{\lambda>0}$, this follows directly from the proof of Proposition~\ref{prop:functional_basis_trace_non_zero}.
\end{proof}
\section{Backward adjoint method}
\label{app:backward_adjoint}
To enable efficient computation of the gradient $\diff{L_{\textrm{EGM}}}{\V{\theta}}$, or $\diff{L_{\textrm{EGM}}}{\theta_\gamma}$ in index notation, with respect to the trainable parameters $\theta_\gamma$, $\gamma\in\{1,2,\dots,n_\theta\}$, we use the \emph{backward adjoint method}. This approach avoids back-propagating through all Newton iteration steps, cf.~\citep{Seidl2022,Kumar2025,Akerson2025}. 
Instead it relies solely on information from the last Newton iteration of each time step $s\in\{1,\dots,\nt\}$, namely the residuals $\tvar{R}^e_\omega(\tsvar{y}^e_\rho(\theta_\gamma), {}^{s-1}\!y^e_\rho(\theta_\gamma), \theta_\gamma)$, where $\omega,\rho\in\{ 1,2,\dots n_\textrm{u} \}$, $n_\textrm{u}\in\Setnum{Z}_{>0}$ the number of unknowns $\tsvar{y}^e_\rho$ that are computed iteratively in every time step. 
In this work, $n_{\textrm{u}}=5$ and $\tsvar{y}^e_\rho$ stores $\tvar{\varepsilon}_{33}^e$, $\tvar{q}_{11}^e$, $\tvar{q}_{22}^e$, $\tvar{q}_{33}^e$, and $\tvar{q}_{12}^e$.

Furthermore, the \emph{backward adjoint method} relies on the implicit function theorem together with the fact that $\diff{\tvar{R}^e_\omega}{\theta_\gamma}\approx0$.\footnote{For $\diff{\tvar{R}^e_\omega}{\theta_\gamma}\approx0$ to hold, the Newton iteration must have converged. This cannot be guaranteed for arbitrary $\theta_\gamma$ during training, however; to allow the calibration process to continue, we nevertheless treat the result of the last iteration as converged.}
An alternative is the \emph{forward sensitivity method}, which likewise avoids back-propagating through all Newton iterations. The advantage of the \emph{backward adjoint method} over the \emph{forward sensitivity method}, however, is that it introduces a new intermediate quantity, $\tvar{\lambda}^e_\omega\in\Setnum{R}$, which we refer to as adjoint parameters. As a result, the \emph{backward adjoint method} somewhat scales with the size of the vector of unknowns $\tvar{y}^e_\rho$ computed through the Newton iteration, rather than with the size of $\theta_\gamma$, as is the case for the \emph{forward sensitivity method}. The \emph{backward adjoint method} is therefore less demanding when the number of trainable parameters $\theta_\gamma$ is large, as is typically the case for NNs.

In order to calculate the gradient $\diff{L_{\textrm{EGM}}}{\theta_\gamma}$, first the linear systems of equations
\vspace{-4pt}
\begin{align}
    s=\nt &: \sum_{\rho=1}^{n_{\textrm{u}}} 
    \diffp{\:{}^{\nt}\!R^e_\omega} {\:{}^{\nt}y^e_\rho} {}^{\nt}\!\lambda^e_\rho 
    = - \diffp{L_{\textrm{EGM}}} {\:{}^{\nt}y^e_\omega} \label{eq:adjoint_method1} \\
    1 < s < \nt &: \sum_{\rho=1}^{n_{\textrm{u}}} \diffp{\:\tvar{R}^e_\omega} {\:\tsvar{y}^e_\rho} \tvar{\lambda}^e_\rho 
    = - \left[ \sum_{\rho=1}^{n_{\textrm{u}}}{}^{s+1} \!\lambda^e_\rho \diffp{\:{}^{s+1}\!R^e_\rho} {\:\tsvar{y}^e_\omega}  
    \right. \nonumber \\
    &\phantom{\hspace{3.0cm}}
    + \left.\diffp{L_{\textrm{EGM}}} {\:\tsvar{y}^e_\omega} \right] \nonumber
\end{align}
must be solved for the adjoint parameters $\tvar{\lambda}^e_\rho$. Because of robustness, we solve the systems of equations in Eq.~\eqref{eq:adjoint_method1} using a least squares method, cf. footnote~\ref{foot:least_squares}. 
Subsequently the loss gradient w.r.t. $\theta_\gamma$ is computed by
\begin{equation}
    \hspace{-0.5cm}
    \diff{L_{\textrm{EGM}}}{\theta_\gamma} =
    \sum_{s=2}^{\nt} \sum_{e=1}^{\nele} 
    \left[
    \diffp{L_{\textrm{EGM}}}{\:\tvar{\sigma}^e_{ab}} \diffp{\:\tvar{\sigma}^e_{ab}}{\theta_\gamma} + 
    \sum_{\rho=1}^{n_{\textrm{u}}}
    \tvar{\lambda}^e_\rho  \diffp{\:\tvar{R}^e_\rho}{\theta_\gamma}
    \right] \comma
    \label{eq:adjoint_method2}
\end{equation}
where $\diffp{L_{\textrm{EGM}}}{\:\tvar{\sigma}^e_{ab}}$ denotes the partial derivative of the loss w.r.t. the \emph{two-dimensional stresses} $\tvar{\sigma}^e_{ab}$ and $\diffp{L_{\textrm{EGM}}}{\:\tsvar{y}^e_\omega} = \diffp{L_{\textrm{EGM}}}{\:\tvar{\sigma}^e_{ab}} \diffp{\:\tvar{\sigma}^e_{ab}}{\:\tsvar{y}^e_\omega}$. 
The dependence on the various \emph{assembled} and \emph{local nodal forces} is thus already accounted for in these two gradients.
In summary, one \emph{forward and backward pass} consists of the following steps:
\begin{enumerate}
    \item Compute $\tsvar{y}^e_\omega$ and $L_{\textrm{EGM}}$.
    \item Calculate $\diffp{L_{\textrm{EGM}}}{\:\tvar{\sigma}^e_{ab}}$, 
    $\diffp{L_{\textrm{EGM}}}{\:\tsvar{y}^e_\omega}$,  
    $\diffp{\:\tvar{R}^e_\omega} {\:\tsvar{y}^e_\rho}$, 
    $\diffp{\:{}^{s+1}\!R^e_\rho} {\:\tsvar{y}^e_\omega}$ and 
    $\diffp{\:\tvar{\sigma}^e_{ab}}{\theta_\gamma}$ using automatic differentiation, for fixed $\tsvar{y}^e_\omega$.
    \item Determine $\tvar{\lambda}^e_\rho$ from the linear systems of equations in Eq.~\eqref{eq:adjoint_method1} sequentially, starting from the last time step $\nt$.
    \item Compute the loss gradient in Eq.~\eqref{eq:adjoint_method2}.
\end{enumerate} 
%
\begin{small}
    \bibliographystyle{apalike-ejor}
    \bibliography{ff_small_strain.bib}

@article{jailin_noise-bias_2026,
	title = {Noise-bias compensation for the unsupervised learning of constitutive laws},
	volume = {354},
	issn = {1631-0721, 1873-7234},
	url = {https://comptes-rendus.academie-sciences.fr/mecanique/articles/10.5802/crmeca.342/},
	doi = {10.5802/crmeca.342},
	language = {en},
	number = {G1},
	urldate = {2026-02-03},
	journal = {Comptes Rendus. Mécanique},
	author = {Jailin, Clément and Roux, Stéphane and Benady, Antoine and Baranger, Emmanuel},
	month = jan,
	year = {2026},
	pages = {1--24},
}

@article{pipkin_material_1963,
	title = {Material symmetry restrictions on non-polynomial constitutive equations},
	volume = {12},
	issn = {1432-0673},
	url = {https://doi.org/10.1007/BF00281238},
	doi = {10.1007/BF00281238},
	language = {en},
	number = {1},
	urldate = {2025-11-08},
	journal = {Archive for Rational Mechanics and Analysis},
	author = {Pipkin, A. C. and Wineman, A. S.},
	month = jan,
	year = {1963},
	pages = {420--426},
}

@article{linden_dual-stage_2025,
	title = {A dual-stage constitutive modeling framework based on finite strain data-driven identification and physics-augmented neural networks},
	volume = {447},
	issn = {0045-7825},
	url = {https://www.sciencedirect.com/science/article/pii/S0045782525005614},
	doi = {10.1016/j.cma.2025.118289},
	urldate = {2025-10-02},
	journal = {Computer Methods in Applied Mechanics and Engineering},
	author = {Linden, Lennart and Kalina, Karl A. and Brummund, Jörg and Riemer, Brain and Kästner, Markus},
	month = dec,
	year = {2025},
	pages = {118289},
}

@book{silhavy_mechanics_1997,
	address = {Berlin, Heidelberg},
	title = {The {Mechanics} and {Thermodynamics} of {Continuous} {Media}},
	copyright = {http://www.springer.com/tdm},
	isbn = {978-3-642-08204-7 978-3-662-03389-0},
	url = {http://link.springer.com/10.1007/978-3-662-03389-0},
	doi = {10.1007/978-3-662-03389-0},
	urldate = {2025-09-28},
	publisher = {Springer},
	author = {Šilhavý, Miroslav},
	year = {1997},
}

@book{pierron_virtual_2012,
	address = {New York, NY},
	title = {The {Virtual} {Fields} {Method}: {Extracting} {Constitutive} {Mechanical} {Parameters} from {Full}-field {Deformation} {Measurements}},
	copyright = {https://www.springernature.com/gp/researchers/text-and-data-mining},
	isbn = {978-1-4614-1823-8 978-1-4614-1824-5},
	shorttitle = {The {Virtual} {Fields} {Method}},
	url = {https://link.springer.com/10.1007/978-1-4614-1824-5},
	doi = {10.1007/978-1-4614-1824-5},
	language = {en},
	urldate = {2025-09-15},
	publisher = {Springer},
	author = {Pierron, Fabrice and Grédiac, Michel},
	year = {2012},
}

@misc{boyd_convex_2004,
	title = {Convex {Optimization}},
	url = {https://www.cambridge.org/highereducation/books/convex-optimization/17D2FAA54F641A2F62C7CCD01DFA97C4},
	doi = {10.1017/CBO9780511804441},
	language = {en},
	urldate = {2025-09-11},
	journal = {Cambridge Aspire website},
	publisher = {Cambridge University Press},
	author = {Boyd, Stephen and Vandenberghe, Lieven},
	month = mar,
	year = {2004},
	note = {ISBN: 9780511804441},
}

@misc{amos_input_2017,
	title = {Input {Convex} {Neural} {Networks}},
	url = {http://arxiv.org/abs/1609.07152},
	doi = {10.48550/arXiv.1609.07152},
	urldate = {2025-09-03},
	publisher = {arXiv},
	author = {Amos, Brandon and Xu, Lei and Kolter, J. Zico},
	month = jun,
	year = {2017},
	note = {arXiv:1609.07152 [cs]},
}

@misc{dammas_when_2025,
	title = {When invariants matter: {The} role of {I1} and {I2} in neural network models of incompressible hyperelasticity},
	shorttitle = {When invariants matter},
	url = {http://arxiv.org/abs/2503.20598},
	doi = {10.48550/arXiv.2503.20598},
	urldate = {2025-09-02},
	publisher = {arXiv},
	author = {Dammaß, Franz and Kalina, Karl A. and Kästner, Markus},
	month = mar,
	year = {2025},
	note = {arXiv:2503.20598 [cond-mat]},
}

@book{ottosen_mechanics_2005,
	address = {Amsterdam London},
	title = {The mechanics of constitutive modeling},
	isbn = {978-0-08-044606-6},
	language = {eng},
	publisher = {Elsevier},
	author = {Ottosen, Niels Saabye and Ristinmaa, Matti},
	year = {2005},
}

@book{neumann_vorlesungen_1885,
	title = {Vorlesungen über die theorie der elasticität der festen körper und des lichtäthers, gehalten an der {Universität} {Königsberg}},
	url = {http://archive.org/details/vorlesungenberd01neumgoog},
	language = {ger},
	urldate = {2025-07-17},
	publisher = {Leipzig, B. G. Teubner},
	author = {Neumann, Franz Ernst and Meyer, Oskar Emil},
	collaborator = {{unknown library}},
	year = {1885},
}

@article{smith_isotropic_1971,
	title = {On isotropic functions of symmetric tensors, skew-symmetric tensors and vectors},
	volume = {9},
	issn = {0020-7225},
	url = {https://www.sciencedirect.com/science/article/pii/0020722571900231},
	doi = {10.1016/0020-7225(71)90023-1},
	number = {10},
	urldate = {2025-06-25},
	journal = {International Journal of Engineering Science},
	author = {Smith, G. F.},
	month = oct,
	year = {1971},
	pages = {899--916},
}

@article{smith_fundamental_1970,
	title = {On a fundamental error in two papers of {C}.-{C}. {Wang} “{On} representations for isotropic functions, parts {I} and {II}”},
	volume = {36},
	issn = {1432-0673},
	url = {https://doi.org/10.1007/BF00272240},
	doi = {10.1007/BF00272240},
	language = {en},
	number = {3},
	urldate = {2025-06-25},
	journal = {Archive for Rational Mechanics and Analysis},
	author = {Smith, G. F.},
	month = jan,
	year = {1970},
	pages = {161--165},
}

@article{rosenkranz_comparative_2023,
	title = {A comparative study on different neural network architectures to model inelasticity},
	volume = {124},
	issn = {1097-0207},
	url = {https://onlinelibrary.wiley.com/doi/abs/10.1002/nme.7319},
	doi = {10.1002/nme.7319},
	language = {en},
	number = {21},
	urldate = {2025-05-15},
	journal = {International Journal for Numerical Methods in Engineering},
	author = {Rosenkranz, Max and Kalina, Karl A. and Brummund, Jörg and Kästner, Markus},
	year = {2023},
	note = {\_eprint: https://onlinelibrary.wiley.com/doi/pdf/10.1002/nme.7319},
	pages = {4802--4840},
}

@article{rosenkranz_viscoelasticty_2024,
	title = {Viscoelasticty with physics-augmented neural networks: model formulation and training methods without prescribed internal variables},
	volume = {74},
	issn = {1432-0924},
	shorttitle = {Viscoelasticty with physics-augmented neural networks},
	url = {https://doi.org/10.1007/s00466-024-02477-1},
	doi = {10.1007/s00466-024-02477-1},
	language = {en},
	number = {6},
	urldate = {2025-05-15},
	journal = {Computational Mechanics},
	author = {Rosenkranz, Max and Kalina, Karl A. and Brummund, Jörg and Sun, WaiChing and Kästner, Markus},
	month = dec,
	year = {2024},
	pages = {1279--1301},
}

@article{linden_neural_2023,
	title = {Neural networks meet hyperelasticity: {A} guide to enforcing physics},
	volume = {179},
	issn = {0022-5096},
	shorttitle = {Neural networks meet hyperelasticity},
	url = {https://www.sciencedirect.com/science/article/pii/S0022509623001679},
	doi = {10.1016/j.jmps.2023.105363},
	urldate = {2025-03-26},
	journal = {Journal of the Mechanics and Physics of Solids},
	author = {Linden, Lennart and Klein, Dominik K. and Kalina, Karl A. and Brummund, Jörg and Weeger, Oliver and Kästner, Markus},
	month = oct,
	year = {2023},
	pages = {105363},
}

@book{altenbach_kontinuumsmechanik_2018,
	address = {Berlin, Heidelberg},
	title = {Kontinuumsmechanik: {Einführung} in die materialunabhängigen und materialabhängigen {Gleichungen}},
	copyright = {http://www.springer.com/tdm},
	isbn = {978-3-662-57503-1 978-3-662-57504-8},
	shorttitle = {Kontinuumsmechanik},
	url = {http://link.springer.com/10.1007/978-3-662-57504-8},
	doi = {10.1007/978-3-662-57504-8},
	language = {de},
	urldate = {2025-03-20},
	publisher = {Springer},
	author = {Altenbach, Holm},
	year = {2018},
}

@book{holzapfel_nonlinear_2000,
	address = {Chichester ; New York},
	title = {Nonlinear solid mechanics: a continuum approach for engineering},
	isbn = {978-0-471-82304-9 978-0-471-82319-3},
	shorttitle = {Nonlinear solid mechanics},
	publisher = {Wiley},
	author = {Holzapfel, Gerhard A.},
	year = {2000},
}

@article{zheng_tensors_1993,
	title = {Tensors which characterize anisotropies},
	volume = {31},
	issn = {0020-7225},
	url = {https://www.sciencedirect.com/science/article/pii/002072259390118E},
	doi = {10.1016/0020-7225(93)90118-E},
	number = {5},
	urldate = {2025-01-04},
	journal = {International Journal of Engineering Science},
	author = {Zheng, Q. -S. and Spencer, A. J. M.},
	month = may,
	year = {1993},
	pages = {679--693},
}

@article{pennisi_irreducibility_1987,
	title = {On the irreducibility of professor {G}.{F}. {Smith}'s representations for isotropic functions},
	volume = {25},
	issn = {0020-7225},
	url = {https://www.sciencedirect.com/science/article/pii/0020722587900978},
	doi = {10.1016/0020-7225(87)90097-8},
	number = {8},
	urldate = {2025-01-04},
	journal = {International Journal of Engineering Science},
	author = {Pennisi, S. and Trovato, M.},
	month = jan,
	year = {1987},
	pages = {1059--1065},
}

@article{lokhin_nonlinear_1963,
	title = {Nonlinear tensor functions of several tensor arguments},
	volume = {27},
	issn = {0021-8928},
	url = {https://www.sciencedirect.com/science/article/pii/0021892863901497},
	doi = {10.1016/0021-8928(63)90149-7},
	number = {3},
	urldate = {2025-01-04},
	journal = {Journal of Applied Mathematics and Mechanics},
	author = {Lokhin, V. V. and Sedov, L. I.},
	month = jan,
	year = {1963},
	pages = {597--629},
}

@phdthesis{apel_approaches_2004,
	title = {Approaches to the {Description} of {Anisotropic} {Material}  {Behaviour} at {Finite} {Elastic} and {Plastic} {Deformations}  —{Theory} and {Numerics} —},
	school = {Stuttgart},
	author = {Apel, Nikolas},
	year = {2004},
}

@article{xiao_isotropic_1996,
	title = {On {Isotropic} {Extension} of {Anisotropic} {Tensor} {Functions}},
	volume = {76},
	copyright = {Copyright © 1996 WILEY-VCH Verlag GmbH \& Co. KGaA, Weinheim},
	issn = {1521-4001},
	url = {https://onlinelibrary.wiley.com/doi/abs/10.1002/zamm.19960760403},
	doi = {10.1002/zamm.19960760403},
	language = {en},
	number = {4},
	urldate = {2025-01-04},
	journal = {ZAMM - Journal of Applied Mathematics and Mechanics / Zeitschrift für Angewandte Mathematik und Mechanik},
	author = {Xiao, H.},
	year = {1996},
	note = {\_eprint: https://onlinelibrary.wiley.com/doi/pdf/10.1002/zamm.19960760403},
	pages = {205--214},
}

@article{kalina_physics-augmented_2026,
	title = {A physics-augmented neural network framework for finite strain incompressible viscoelasticity},
	volume = {455},
	issn = {0045-7825},
	url = {https://www.sciencedirect.com/science/article/pii/S0045782526001659},
	doi = {10.1016/j.cma.2026.118892},
	urldate = {2026-07-02},
	journal = {Computer Methods in Applied Mechanics and Engineering},
	author = {Kalina, Karl A. and Brummund, Jörg and Kästner, Markus},
	month = jun,
	year = {2026},
	pages = {118892},
}

@article{riemer_construction_2026,
	title = {Construction of minimal integrity bases for anisotropic hyperelasticity via structural tensors},
	volume = {216},
	issn = {0022-5096},
	url = {https://www.sciencedirect.com/science/article/pii/S0022509626002632},
	doi = {10.1016/j.jmps.2026.106763},
	urldate = {2026-07-21},
	journal = {Journal of the Mechanics and Physics of Solids},
	author = {Riemer, Brain M. and Brummund, Jörg and Kalina, Karl A. and Milor, Abel H. G. and Dammaß, Franz and Kästner, Markus},
	month = oct,
	year = {2026},
	pages = {106763},
}

@book{Haupt2000,
  title = {Continuum {{Mechanics}} and {{Theory}} of {{Materials}}},
  author = {Haupt, Peter},
  year = {2000},
  publisher = {Springer Berlin Heidelberg},
  address = {Berlin, Heidelberg},
  urldate = {2020-01-22},
  isbn = {978-3-662-04109-3},
  langid = {english}
}

@article{Dornheim2023,
  title = {Neural {{Networks}} for {{Constitutive Modeling}}: {{From Universal Function Approximators}} to {{Advanced Models}} and the {{Integration}} of {{Physics}}},
  shorttitle = {Neural {{Networks}} for {{Constitutive Modeling}}},
  author = {Dornheim, Johannes and Morand, Lukas and Nallani, Hemanth Janarthanam and Helm, Dirk},
  year = {2023},
  month = oct,
  journal = {Archives of Computational Methods in Engineering},
  issn = {1886-1784},
  doi = {10.1007/s11831-023-10009-y},
  urldate = {2023-12-13},
  langid = {english}
}

@article{Fuhg2024b,
  title = {A {{Review}} on {{Data-Driven Constitutive Laws}} for {{Solids}}},
  author = {Fuhg, Jan N. and Anantha Padmanabha, Govinda and Bouklas, Nikolaos and Bahmani, Bahador and Sun, WaiChing and Vlassis, Nikolaos N. and Flaschel, Moritz and Carrara, Pietro and De Lorenzis, Laura},
  year = {2024},
  month = nov,
  journal = {Archives of Computational Methods in Engineering},
  issn = {1886-1784},
  doi = {10.1007/s11831-024-10196-2},
  urldate = {2024-11-29},
  langid = {english}
}

@article{Ghaboussi1991,
  title = {Knowledge-{{Based Modeling}} of {{Material Behavior}} with {{Neural Networks}}},
  author = {Ghaboussi, J. and Garrett, J. H. and Wu, X.},
  year = {1991},
  journal = {Journal of Engineering Mechanics},
  volume = {117},
  number = {1},
  pages = {132--153},
  issn = {0733-9399, 1943-7889},
  doi = {10.1061/(ASCE)0733-9399(1991)117:1(132)},
  urldate = {2020-09-30},
  langid = {english}
}

@article{Wiesheier2024,
  title = {Versatile Data-Adaptive Hyperelastic Energy Functions for Soft Materials},
  author = {Wiesheier, Simon and {Moreno-Mateos}, Miguel Angel and Steinmann, Paul},
  year = {2024},
  month = oct,
  journal = {Computer Methods in Applied Mechanics and Engineering},
  volume = {430},
  pages = {117208},
  issn = {0045-7825},
  doi = {10.1016/j.cma.2024.117208},
  urldate = {2024-09-02}
}

@article{Wiesheier2026,
  title = {Data-Adaptive Spline-Based Viscoelasticity for Soft Solids},
  author = {Wiesheier, Simon and {Moreno-Mateos}, Miguel Angel and Steinmann, Paul},
  year = {2026},
  month = apr,
  journal = {Computer Methods in Applied Mechanics and Engineering},
  volume = {451},
  pages = {118705},
  publisher = {North-Holland},
  issn = {0045-7825},
  doi = {10.1016/j.cma.2025.118705},
  urldate = {2026-01-21},
  langid = {american}
}

@article{Flaschel2021,
  title = {Unsupervised Discovery of Interpretable Hyperelastic Constitutive Laws},
  author = {Flaschel, Moritz and Kumar, Siddhant and De Lorenzis, Laura},
  year = {2021},
  month = aug,
  journal = {Computer Methods in Applied Mechanics and Engineering},
  volume = {381},
  pages = {113852},
  issn = {00457825},
  doi = {10.1016/j.cma.2021.113852},
  urldate = {2021-10-08},
  langid = {english}
}

@article{Meyer2023a,
  title = {Thermodynamically Consistent Neural Network Plasticity Modeling and Discovery of Evolution Laws},
  author = {Meyer, Knut Andreas and Ekre, Fredrik},
  year = {2023},
  month = nov,
  journal = {Journal of the Mechanics and Physics of Solids},
  volume = {180},
  pages = {105416},
  issn = {0022-5096},
  doi = {10.1016/j.jmps.2023.105416},
  urldate = {2023-12-13}
}

@article{Bock2019,
  title = {A {{Review}} of the {{Application}} of {{Machine Learning}} and {{Data Mining Approaches}} in {{Continuum Materials Mechanics}}},
  author = {Bock, Frederic E. and Aydin, Roland C. and Cyron, Christian J. and Huber, Norbert and Kalidindi, Surya R. and Klusemann, Benjamin},
  year = {2019},
  month = may,
  journal = {Frontiers in Materials},
  volume = {6},
  pages = {110},
  issn = {2296-8016},
  doi = {10.3389/fmats.2019.00110},
  urldate = {2020-09-29},
  langid = {english}
}

@article{Raissi2019,
  title = {Physics-Informed Neural Networks: {{A}} Deep Learning Framework for Solving Forward and Inverse Problems Involving Nonlinear Partial Differential Equations},
  shorttitle = {Physics-Informed Neural Networks},
  author = {Raissi, M. and Perdikaris, P. and Karniadakis, G.E.},
  year = {2019},
  journal = {Journal of Computational Physics},
  volume = {378},
  pages = {686--707},
  issn = {00219991},
  doi = {10.1016/j.jcp.2018.10.045},
  urldate = {2020-10-02},
  langid = {english}
}

@article{Asad2023,
  title = {A Mechanics-Informed Neural Network Framework for Data-Driven Nonlinear Viscoelasticity},
  author = {As'ad, Faisal and Farhat, Charbel},
  year = {2023},
  journal = {AIAA SCITECH 2023 forum},
  eprint = {https://arc.aiaa.org/doi/pdf/10.2514/6.2023-0949},
  doi = {DOI: 10.2514/6.2023-0949}
}

@article{Klein2024,
  title = {Nonlinear Electro-Elastic Finite Element Analysis with Neural Network Constitutive Models},
  author = {Klein, Dominik K. and Ortigosa, Rogelio and {Mart{\'i}nez-Frutos}, Jes{\'u}s and Weeger, Oliver},
  year = {2024},
  month = may,
  journal = {Computer Methods in Applied Mechanics and Engineering},
  volume = {425},
  pages = {116910},
  issn = {0045-7825},
  doi = {10.1016/j.cma.2024.116910},
  urldate = {2024-06-17}
}

@article{Aldakheel2025,
  title = {Physics-Based Machine Learning for Computational Fracture Mechanics},
  author = {Aldakheel, Fadi and Elsayed, Elsayed S. and Heider, Yousef and Weeger, Oliver},
  year = {2025},
  month = apr,
  journal = {Machine Learning for Computational Science and Engineering},
  volume = {1},
  number = {1},
  pages = {18},
  issn = {3005-1436},
  doi = {10.1007/s44379-025-00019-x},
  urldate = {2025-08-04},
  langid = {english}
}

@article{Masi2021,
  title = {Thermodynamics-Based {{Artificial Neural Networks}} for Constitutive Modeling},
  author = {Masi, Filippo and Stefanou, Ioannis and Vannucci, Paolo and {Maffi-Berthier}, Victor},
  year = {2021},
  journal = {Journal of the Mechanics and Physics of Solids},
  volume = {147},
  pages = {104277},
  issn = {0022-5096},
  doi = {10.1016/j.jmps.2020.104277},
  urldate = {2021-07-08},
  langid = {english}
}

@article{Linka2021,
  title = {Constitutive Artificial Neural Networks: {{A}} Fast and General Approach to Predictive Data-Driven Constitutive Modeling by Deep Learning},
  shorttitle = {Constitutive Artificial Neural Networks},
  author = {Linka, Kevin and Hillg{\"a}rtner, Markus and Abdolazizi, Kian P. and Aydin, Roland C. and Itskov, Mikhail and Cyron, Christian J.},
  year = {2021},
  journal = {Journal of Computational Physics},
  volume = {429},
  pages = {110010},
  issn = {00219991},
  doi = {10.1016/j.jcp.2020.110010},
  urldate = {2021-03-25},
  langid = {english}
}

@article{Kalina2023,
  title = {{{FE}}{{{\textsuperscript{ANN}}}}: An Efficient Data-Driven Multiscale Approach Based on Physics-Constrained Neural Networks and Automated Data Mining},
  shorttitle = {{{FE}}{{{\textsuperscript{ANN}}}}},
  author = {Kalina, Karl A. and Linden, Lennart and Brummund, J{\"o}rg and K{\"a}stner, Markus},
  year = {2023},
  journal = {Computational Mechanics},
  volume = {71},
  pages = {827},
  issn = {1432-0924},
  doi = {10.1007/s00466-022-02260-0},
  urldate = {2023-02-08},
  langid = {english}
}

@article{Kalina2022a,
  title = {Automated Constitutive Modeling of Isotropic Hyperelasticity Based on Artificial Neural Networks},
  author = {Kalina, Karl A. and Linden, Lennart and Brummund, J{\"o}rg and Metsch, Philipp and K{\"a}stner, Markus},
  year = {2022},
  journal = {Computational Mechanics},
  volume = {69},
  number = {1},
  pages = {213--232},
  issn = {1432-0924},
  doi = {10.1007/s00466-021-02090-6},
  urldate = {2023-07-18},
  langid = {english}
}

@article{Fuhg2023,
  title = {Modular Machine Learning-Based Elastoplasticity: {{Generalization}} in the Context of Limited Data},
  shorttitle = {Modular Machine Learning-Based Elastoplasticity},
  author = {Fuhg, Jan Niklas and Hamel, Craig M. and Johnson, Kyle and Jones, Reese and Bouklas, Nikolaos},
  year = {2023},
  journal = {Computer Methods in Applied Mechanics and Engineering},
  volume = {407},
  pages = {115930},
  issn = {0045-7825},
  doi = {10.1016/j.cma.2023.115930},
  urldate = {2023-04-19},
  langid = {english}
}

@article{Klein2021,
  title = {Polyconvex Anisotropic Hyperelasticity with Neural Networks},
  author = {Klein, Dominik K. and Fern{\'a}ndez, Mauricio and Martin, Robert J. and Neff, Patrizio and Weeger, Oliver},
  year = {2021},
  journal = {Journal of the Mechanics and Physics of Solids},
  pages = {104703},
  issn = {00225096},
  doi = {10.1016/j.jmps.2021.104703},
  urldate = {2021-11-23},
  langid = {english}
}

@article{Thakolkaran2022,
  title = {{{NN-EUCLID}}: {{Deep-learning}} Hyperelasticity without Stress Data},
  shorttitle = {{{NN-EUCLID}}},
  author = {Thakolkaran, Prakash and Joshi, Akshay and Zheng, Yiwen and Flaschel, Moritz and De Lorenzis, Laura and Kumar, Siddhant},
  year = {2022},
  journal = {Journal of the Mechanics and Physics of Solids},
  volume = {169},
  pages = {105076},
  issn = {0022-5096},
  doi = {10.1016/j.jmps.2022.105076},
  urldate = {2022-10-15},
  langid = {english}
}

@article{Fuhg2022b,
  title = {Learning Hyperelastic Anisotropy from Data via a Tensor Basis Neural Network},
  author = {Fuhg, Jan N. and Bouklas, Nikolaos and Jones, Reese E.},
  year = {2022},
  journal = {Journal of the Mechanics and Physics of Solids},
  volume = {168},
  pages = {105022},
  issn = {00225096},
  doi = {10.1016/j.jmps.2022.105022},
  urldate = {2022-10-14}
}

@article{Tac2024a,
  title = {Benchmarking Physics-Informed Frameworks for Data-Driven Hyperelasticity},
  author = {Ta{\c c}, Vahidullah and Linka, Kevin and {Sahli-Costabal}, Francisco and Kuhl, Ellen and Tepole, Adrian Buganza},
  year = {2024},
  month = jan,
  journal = {Computational Mechanics},
  volume = {73},
  number = {1},
  pages = {49--65},
  issn = {1432-0924},
  doi = {10.1007/s00466-023-02355-2},
  urldate = {2024-09-12},
  langid = {english}
}

@article{Bahmani2024,
  title = {Physics-Constrained Symbolic Model Discovery for Polyconvex Incompressible Hyperelastic Materials},
  author = {Bahmani, Bahador and Sun, WaiChing},
  year = {2024},
  journal = {International Journal for Numerical Methods in Engineering},
  volume = {n/a},
  number = {n/a},
  pages = {e7473},
  issn = {1097-0207},
  doi = {10.1002/nme.7473},
  urldate = {2024-06-07},
  copyright = {{\copyright} 2024 John Wiley \& Sons, Ltd.},
  langid = {english}
}

@article{Benady2024,
  title = {{{NN-mCRE}}: {{A}} Modified Constitutive Relation Error Framework for Unsupervised Learning of Nonlinear State Laws with Physics-Augmented Neural Networks},
  shorttitle = {{{NN-mCRE}}},
  author = {Benady, Antoine and Baranger, Emmanuel and Chamoin, Ludovic},
  year = {2024},
  journal = {International Journal for Numerical Methods in Engineering},
  volume = {125},
  number = {8},
  pages = {e7439},
  issn = {1097-0207},
  doi = {10.1002/nme.7439},
  urldate = {2024-06-17},
  langid = {english}
}

@article{Peirlinck2024,
  title = {On Automated Model Discovery and a Universal Material Subroutine for Hyperelastic Materials},
  author = {Peirlinck, Mathias and Linka, Kevin and Hurtado, Juan A. and Kuhl, Ellen},
  year = {2024},
  month = jan,
  journal = {Computer Methods in Applied Mechanics and Engineering},
  volume = {418},
  pages = {116534},
  issn = {0045-7825},
  doi = {10.1016/j.cma.2023.116534},
  urldate = {2024-07-26}
}

@inproceedings{Czarnecki2017,
  title = {Sobolev {{Training}} for {{Neural Networks}}},
  booktitle = {Advances in {{Neural Information Processing Systems}}},
  author = {Czarnecki, Wojciech M and Osindero, Simon and Jaderberg, Max and Swirszcz, Grzegorz and Pascanu, Razvan},
  year = {2017},
  pages = {4278--4287},
  langid = {english}
}

@article{Vlassis2020,
  title = {Geometric Deep Learning for Computational Mechanics {{Part I}}: Anisotropic Hyperelasticity},
  shorttitle = {Geometric Deep Learning for Computational Mechanics {{Part I}}},
  author = {Vlassis, Nikolaos N. and Ma, Ran and Sun, WaiChing},
  year = {2020},
  journal = {Computer Methods in Applied Mechanics and Engineering},
  volume = {371},
  pages = {113299},
  issn = {0045-7825},
  doi = {10.1016/j.cma.2020.113299},
  urldate = {2021-07-08},
  langid = {english}
}

@article{Tac2022a,
  title = {Data-Driven Tissue Mechanics with Polyconvex Neural Ordinary Differential Equations},
  author = {Tac, Vahidullah and Sahli Costabal, Francisco and Tepole, Adrian B.},
  year = {2022},
  journal = {Computer Methods in Applied Mechanics and Engineering},
  volume = {398},
  pages = {115248},
  issn = {0045-7825},
  doi = {10.1016/j.cma.2022.115248},
  urldate = {2022-10-14},
  langid = {english}
}

@article{Chen2022,
  title = {Polyconvex Neural Networks for Hyperelastic Constitutive Models: {{A}} Rectification Approach},
  shorttitle = {Polyconvex Neural Networks for Hyperelastic Constitutive Models},
  author = {Chen, Peiyi and Guilleminot, Johann},
  year = {2022},
  journal = {Mechanics Research Communications},
  volume = {125},
  pages = {103993},
  issn = {00936413},
  doi = {10.1016/j.mechrescom.2022.103993},
  urldate = {2022-12-22},
  langid = {english}
}

@article{Jadoon2025a,
  title = {Inverse Design of Anisotropic Microstructures Using Physics-Augmented Neural Networks},
  author = {Jadoon, Asghar A. and Kalina, Karl A. and Rausch, Manuel K. and Jones, Reese and Fuhg, Jan Niklas},
  year = {2025},
  month = oct,
  journal = {Journal of the Mechanics and Physics of Solids},
  volume = {203},
  pages = {106161},
  issn = {0022-5096},
  doi = {10.1016/j.jmps.2025.106161},
  urldate = {2025-10-26}
}

@article{Dammass2025b,
  title = {Neural Networks Meet Phase-Field: {{A}} Hybrid Fracture Model},
  shorttitle = {Neural Networks Meet Phase-Field},
  author = {Damma{\ss}, Franz and Kalina, Karl A. and K{\"a}stner, Markus},
  year = {2025},
  month = may,
  journal = {Computer Methods in Applied Mechanics and Engineering},
  volume = {440},
  pages = {117937},
  issn = {0045-7825},
  doi = {10.1016/j.cma.2025.117937},
  urldate = {2025-04-02}
}

@article{Vijayakumaran2025,
  title = {Consistent Machine Learning for Topology Optimization with Microstructure-Dependent Neural Network Material Models},
  author = {Vijayakumaran, Harikrishnan and Russ, Jonathan B. and Paulino, Glaucio H. and Bessa, Miguel A.},
  year = {2025},
  month = mar,
  journal = {Journal of the Mechanics and Physics of Solids},
  volume = {196},
  pages = {106015},
  issn = {0022-5096},
  doi = {10.1016/j.jmps.2024.106015},
  urldate = {2025-02-03}
}

@article{Geuken2025a,
  title = {A Novel Neural Network for Isotropic Polyconvex Hyperelasticity Satisfying the Universal Approximation Theorem},
  author = {Geuken, Gian-Luca and Kurzeja, Patrick and Wiedemann, David and Mosler, J{\"o}rn},
  year = {2025},
  month = oct,
  journal = {Journal of the Mechanics and Physics of Solids},
  volume = {203},
  pages = {106209},
  issn = {0022-5096},
  doi = {10.1016/j.jmps.2025.106209},
  urldate = {2025-10-26}
}

@article{Kalina2024,
  title = {Neural Network-Based Multiscale Modeling of Finite Strain Magneto-Elasticity with Relaxed Convexity Criteria},
  author = {Kalina, Karl A. and Gebhart, Philipp and Brummund, J{\"o}rg and Linden, Lennart and Sun, WaiChing and K{\"a}stner, Markus},
  year = {2024},
  month = mar,
  journal = {Computer Methods in Applied Mechanics and Engineering},
  volume = {421},
  pages = {116739},
  issn = {00457825},
  doi = {10.1016/j.cma.2023.116739},
  urldate = {2024-02-21},
  langid = {english}
}

@book{Schroder2010,
	address = {Vienna},
	series = {{CISM} {International} {Centre} for {Mechanical} {Sciences}},
	title = {Poly-, {Quasi}- and {Rank}-{One} {Convexity} in {Applied} {Mechanics}},
	volume = {516},
	copyright = {http://www.springer.com/tdm},
	isbn = {978-3-7091-0173-5 978-3-7091-0174-2},
	url = {http://link.springer.com/10.1007/978-3-7091-0174-2},
	doi = {10.1007/978-3-7091-0174-2},
	urldate = {2025-08-03},
	publisher = {Springer},
	editor = {Schröder, Jörg and Neff, Patrizio and Maier, Giulio and Rammerstorfer, Franz G. and Salençon, Jean and Schrefler, Bernhard and Serafini, Paolo},
	year = {2010},
}

@article{Masi2024,
  title = {Neural Integration for Constitutive Equations Using Small Data},
  author = {Masi, Filippo and Einav, Itai},
  year = {2024},
  month = feb,
  journal = {Computer Methods in Applied Mechanics and Engineering},
  volume = {420},
  pages = {116698},
  issn = {0045-7825},
  doi = {10.1016/j.cma.2023.116698},
  urldate = {2024-06-06}
}

@article{He2022,
  title = {Thermodynamically Consistent Machine-Learned Internal State Variable Approach for Data-Driven Modeling of Path-Dependent Materials},
  author = {He, Xiaolong and Chen, Jiun-Shyan},
  year = {2022},
  month = jul,
  journal = {Computer Methods in Applied Mechanics and Engineering},
  pages = {115348},
  issn = {00457825},
  doi = {10.1016/j.cma.2022.115348},
  urldate = {2022-08-22},
  langid = {english}
}

@article{Settgast2020,
  title = {A Hybrid Approach to Simulate the Homogenized Irreversible Elastic-Plastic Deformations and Damage of Foams by Neural Networks},
  author = {Settgast, Christoph and H{\"u}tter, Geralf and Kuna, Meinhard and Abendroth, Martin},
  year = {2020},
  month = mar,
  journal = {International Journal of Plasticity},
  volume = {126},
  pages = {102624},
  issn = {07496419},
  doi = {10.1016/j.ijplas.2019.11.003},
  urldate = {2020-08-28},
  langid = {english}
}

@article{Malik2021,
  title = {A {{Hybrid Approach Employing Neural Networks}} to {{Simulate}} the {{Elasto-Plastic Deformation Behavior}} of {{3D-Foam Structures}}},
  author = {Malik, Alexander and Abendroth, Martin and H{\"u}tter, Geralf and Kiefer, Bjoern},
  year = {2021},
  journal = {Advanced Engineering Materials},
  volume = {n/a},
  number = {n/a},
  pages = {2100641},
  issn = {1527-2648},
  doi = {10.1002/adem.202100641},
  urldate = {2022-01-11},
  langid = {english}
}

@article{Vlassis2021,
  title = {Sobolev Training of Thermodynamic-Informed Neural Networks for Interpretable Elasto-Plasticity Models with Level Set Hardening},
  author = {Vlassis, Nikolaos N. and Sun, WaiChing},
  year = {2021},
  journal = {Computer Methods in Applied Mechanics and Engineering},
  volume = {377},
  pages = {113695},
  issn = {00457825},
  doi = {10.1016/j.cma.2021.113695},
  urldate = {2021-10-12},
  langid = {english}
}

@article{Vlassis2021b,
  title = {Component-Based Machine Learning Paradigm for Discovering Rate-Dependent and Pressure-Sensitive Level-Set Plasticity Models},
  author = {Vlassis, Nikolaos Napoleon and Sun, Waiching},
  year = {2021},
  month = oct,
  journal = {Journal of Applied Mechanics},
  pages = {1--13},
  issn = {0021-8936, 1528-9036},
  doi = {10.1115/1.4052684},
  urldate = {2021-11-18},
  langid = {english}
}

@misc{Boes2024,
  title = {Accounting for Plasticity: {{An}} Extension of Inelastic {{Constitutive Artificial Neural Networks}}},
  shorttitle = {Accounting for Plasticity},
  author = {Boes, Birte and Simon, Jaan-Willem and Holthusen, Hagen},
  year = {2024},
  month = jul,
  number = {arXiv:2407.19326},
  eprint = {2407.19326},
  primaryclass = {cs},
  publisher = {arXiv},
  doi = {10.48550/arXiv.2407.19326},
  urldate = {2024-09-12},
  archiveprefix = {arXiv}
}

@article{Jadoon2025,
  title = {Automated Model Discovery of Finite Strain Elastoplasticity from Uniaxial Experiments},
  author = {Jadoon, Asghar Arshad and Meyer, Knut Andreas and Fuhg, Jan Niklas},
  year = {2025},
  month = feb,
  journal = {Computer Methods in Applied Mechanics and Engineering},
  volume = {435},
  pages = {117653},
  issn = {0045-7825},
  doi = {10.1016/j.cma.2024.117653},
  urldate = {2025-03-28}
}

@article{vanderVelden2026,
  title = {A Note on Constitutive Artificial Neural Networks for Finite Strain Plasticity},
  author = {{van der Velden}, Tim and Boes, Birte and Brepols, Tim and Kuhl, Ellen and Holthusen, Hagen},
  year = {2026},
  month = aug,
  journal = {Mechanics Research Communications},
  volume = {155},
  pages = {104707},
  issn = {0093-6413},
  doi = {10.1016/j.mechrescom.2026.104707},
  urldate = {2026-05-24}
}

@article{Huang2022,
  title = {Variational {{Onsager Neural Networks}} ({{VONNs}}): {{A}} Thermodynamics-Based Variational Learning Strategy for Non-Equilibrium {{PDEs}}},
  shorttitle = {Variational {{Onsager Neural Networks}} ({{VONNs}})},
  author = {Huang, Shenglin and He, Zequn and Chem, Bryan and Reina, Celia},
  year = {2022},
  month = jun,
  journal = {Journal of the Mechanics and Physics of Solids},
  volume = {163},
  pages = {104856},
  issn = {0022-5096},
  doi = {10.1016/j.jmps.2022.104856},
  urldate = {2022-08-22},
  langid = {english}
}

@article{Halphen1975,
  title = {On {{Generalized Standard Materials}}.},
  shorttitle = {{{SUR LES MATERIAUX STANDARDS GENERALISES}}.},
  author = {Halphen, Bernard and Nguyen, Quoc Son},
  year = {1975},
  journal = {Journal de M{\'e}canique},
  volume = {14},
  number = {1},
  pages = {39--63},
  langid = {english}
}

@article{Coleman1963,
  title = {The Thermodynamics of Elastic Materials with Heat Conduction and Viscosity},
  author = {Coleman, Bernard D. and Noll, Walter},
  year = {1963},
  journal = {Archive for Rational Mechanics and Analysis},
  volume = {13},
  number = {1},
  pages = {167--178},
  issn = {0003-9527}
}

@article{Coleman1967,
  title = {Thermodynamics with {{Internal State Variables}}},
  author = {Coleman, Bernard D. and Gurtin, Morton E.},
  year = {1967},
  month = jul,
  journal = {The Journal of Chemical Physics},
  volume = {47},
  number = {2},
  pages = {597--613},
  issn = {0021-9606, 1089-7690},
  doi = {10.1063/1.1711937},
  urldate = {2020-01-22},
  langid = {english}
}

@book{Biot1965,
  title = {Mechanics of {{Incremental Deformations}}},
  author = {Biot, M. A.},
  year = {1965},
  publisher = {John Wiley \& Sons},
  address = {New York}
}

@article{Tac2023,
  title = {Data-Driven Anisotropic Finite Viscoelasticity Using Neural Ordinary Differential Equations},
  author = {Ta{\c c}, Vahidullah and Rausch, Manuel K. and Sahli Costabal, Francisco and Tepole, Adrian Buganza},
  year = {2023},
  month = jun,
  journal = {Computer Methods in Applied Mechanics and Engineering},
  volume = {411},
  pages = {116046},
  issn = {0045-7825},
  doi = {10.1016/j.cma.2023.116046},
  urldate = {2023-04-28},
  langid = {english}
}

@article{Asad2026,
  title = {A Staggered Training Framework for Mechanics-Informed Neural Networks in Tractable Multiscale Homogenization with Application to Woven Fabrics},
  author = {As'ad, Faisal and Farhat, Charbel},
  year = {2026},
  month = apr,
  journal = {Computer Methods in Applied Mechanics and Engineering},
  volume = {452},
  pages = {118666},
  issn = {00457825},
  doi = {10.1016/j.cma.2025.118666},
  urldate = {2026-01-23},
  langid = {english}
}

@article{Upadhyay2026,
  title = {Physics-Informed Data-Driven Discovery of Constitutive Models with Application to Strain-Rate-Sensitive Soft Materials},
  author = {Upadhyay, Kshitiz and Fuhg, Jan N. and Bouklas, Nikolaos and Ramesh, K. T.},
  year = {2026},
  month = feb,
  journal = {Computational Mechanics},
  volume = {77},
  number = {2},
  pages = {357--386},
  issn = {1432-0924},
  doi = {10.1007/s00466-024-02497-x},
  urldate = {2026-06-02},
  langid = {english}
}

@article{Abdolazizi2026,
  title = {Thermodynamically Consistent Viscoelastic Constitutive Artificial Neural Networks: {{Automating}} the Pipeline from Experimental Data to Finite Element Simulations},
  shorttitle = {Thermodynamically Consistent Viscoelastic Constitutive Artificial Neural Networks},
  author = {Abdolazizi, Kian P. and Aydin, Roland C. and Cyron, Christian J. and Linka, Kevin},
  year = {2026},
  month = oct,
  journal = {Computer Methods in Applied Mechanics and Engineering},
  volume = {460},
  pages = {119080},
  issn = {0045-7825},
  doi = {10.1016/j.cma.2026.119080},
  urldate = {2026-06-02}
}

@article{Abdolazizi2023a,
  title = {Viscoelastic {{Constitutive Artificial Neural Networks}} ({{vCANNs}}) -- a Framework for Data-Driven Anisotropic Nonlinear Finite Viscoelasticity},
  author = {Abdolazizi, Kian P. and Linka, Kevin and Cyron, Christian J.},
  year = {2023},
  month = dec,
  journal = {Journal of Computational Physics},
  pages = {112704},
  issn = {0021-9991},
  doi = {10.1016/j.jcp.2023.112704},
  urldate = {2023-12-13}
}

@article{Liu2021b,
  title = {A Continuum and Computational Framework for Viscoelastodynamics: {{I}}. {{Finite}} Deformation Linear Models},
  shorttitle = {A Continuum and Computational Framework for Viscoelastodynamics},
  author = {Liu, Ju and Latorre, Marcos and Marsden, Alison L.},
  year = {2021},
  month = nov,
  journal = {Computer Methods in Applied Mechanics and Engineering},
  volume = {385},
  pages = {114059},
  issn = {0045-7825},
  doi = {10.1016/j.cma.2021.114059},
  urldate = {2026-06-02}
}

@article{Califano2026,
  title = {Enhancing Nonlinear Viscoelastic Modeling of Elastomers through Neural Networks: {{A}} Deep Rheological Element},
  shorttitle = {Enhancing Nonlinear Viscoelastic Modeling of Elastomers through Neural Networks},
  author = {Califano, Federico and Ciambella, Jacopo},
  year = {2026},
  month = jan,
  journal = {Mechanics of Materials},
  volume = {212},
  pages = {105525},
  issn = {0167-6636},
  doi = {10.1016/j.mechmat.2025.105525},
  urldate = {2025-10-28}
}

@article{Geiger2025,
  title = {Multiscale Modeling of Viscoelastic Shell Structures with Artificial Neural Networks},
  author = {Geiger, Jeremy and Wagner, Werner and Freitag, Steffen},
  year = {2025},
  month = mar,
  journal = {Computational Mechanics},
  issn = {1432-0924},
  doi = {10.1007/s00466-025-02613-5},
  urldate = {2025-08-03},
  langid = {english}
}

@article{Friedrichs2026,
  title = {Precise, Efficient and Flexible Modeling of Crystallizing Elastomers Based on Physics-Augmented Neural Networks},
  author = {Friedrichs, Konrad and Damma{\ss}, Franz and Kalina, Karl A. and K{\"a}stner, Markus},
  year = {2026},
  month = jun,
  journal = {Computer Methods in Applied Mechanics and Engineering},
  volume = {455},
  pages = {118852},
  issn = {0045-7825},
  doi = {10.1016/j.cma.2026.118852},
  urldate = {2026-04-08}
}

@article{Pierron2023,
  title = {Material {{Testing}} 2.0: {{A}} Brief Review},
  shorttitle = {Material {{Testing}} 2.0},
  author = {Pierron, Fabrice},
  year = {2023},
  journal = {Strain},
  volume = {59},
  number = {3},
  pages = {e12434},
  issn = {1475-1305},
  doi = {10.1111/str.12434},
  urldate = {2026-07-15},
  copyright = {{\copyright} 2023 The Author. Strain published by John Wiley \& Sons Ltd.},
  langid = {english}
}

@article{Pierron2020,
  title = {Towards {{Material Testing}} 2.0. {{A}} Review of Test Design for Identification of Constitutive Parameters from Full-field Measurements},
  author = {Pierron, F. and Gr{\'e}diac, M.},
  year = {2020},
  month = sep,
  journal = {Strain},
  issn = {0039-2103, 1475-1305},
  doi = {10.1111/str.12370},
  urldate = {2020-11-27},
  langid = {english}
}

@article{Lenoir2007,
  title = {Volumetric {{Digital Image Correlation Applied}} to {{X-ray Microtomography Images}} from {{Triaxial Compression Tests}} on {{Argillaceous Rock}}},
  author = {Lenoir, N. and Bornert, M. and Desrues, J. and B{\'e}suelle, P. and Viggiani, G.},
  year = {2007},
  month = aug,
  journal = {Strain},
  volume = {43},
  number = {3},
  pages = {193--205},
  issn = {00392103, 14751305},
  doi = {10.1111/j.1475-1305.2007.00348.x},
  urldate = {2022-01-15},
  langid = {english}
}

@article{Avril2008,
  title = {Overview of {{Identification Methods}} of {{Mechanical Parameters Based}} on {{Full-field Measurements}}},
  author = {Avril, St{\'e}phane and Bonnet, Marc and Bretelle, Anne-Sophie and Gr{\'e}diac, Michel and Hild, Fran{\c c}ois and Ienny, Patrick and Latourte, F{\'e}lix and Lemosse, Didier and Pagano, St{\'e}phane and Pagnacco, Emmanuel and Pierron, Fabrice},
  year = {2008},
  month = aug,
  journal = {Experimental Mechanics},
  volume = {48},
  number = {4},
  pages = {381--402},
  issn = {0014-4851, 1741-2765},
  doi = {10.1007/s11340-008-9148-y},
  urldate = {2022-01-15},
  langid = {english}
}

@article{Roux2020,
  title = {Optimal Procedure for the Identification of Constitutive Parameters from Experimentally Measured Displacement Fields},
  author = {Roux, St{\'e}phane and Hild, Fran{\c c}ois},
  year = {2020},
  month = feb,
  journal = {International Journal of Solids and Structures},
  series = {Physics and {{Mechanics}} of {{Random Structures}}: {{From Morphology}} to {{Material Properties}}},
  volume = {184},
  pages = {14--23},
  issn = {0020-7683},
  doi = {10.1016/j.ijsolstr.2018.11.008},
  urldate = {2025-02-21}
}

@article{Romer2024,
  title = {Reduced and {{All-at-Once Approaches}} for {{Model Calibration}} and {{Discovery}} in {{Computational Solid Mechanics}}},
  author = {R{\"o}mer, Ulrich and Hartmann, Stefan and Tr{\"o}ger, Jendrik-Alexander and Anton, David and Wessels, Henning and Flaschel, Moritz and De Lorenzis, Laura},
  year = {2024},
  month = aug,
  journal = {Applied Mechanics Reviews},
  pages = {1--51},
  issn = {0003-6900},
  doi = {10.1115/1.4066118},
  urldate = {2024-12-20}
}

@article{Chen2025,
  title = {Finite {{Element Model Updating}} for {{Material Model Calibration}}: {{A Review}} and {{Guide}} to {{Practice}}},
  shorttitle = {Finite {{Element Model Updating}} for {{Material Model Calibration}}},
  author = {Chen, Bin and Starman, Bojan and Halilovi{\v c}, Miroslav and Berglund, Lars A. and Coppieters, Sam},
  year = {2025},
  month = may,
  journal = {Archives of Computational Methods in Engineering},
  volume = {32},
  number = {4},
  pages = {2035--2112},
  issn = {1886-1784},
  doi = {10.1007/s11831-024-10200-9},
  urldate = {2026-07-15},
  langid = {english}
}

@article{Mahnken1996,
  title = {A Unified Approach for Parameter Identification of Inelastic Material Models in the Frame of the Finite Element Method},
  author = {Mahnken, Rolf and Stein, Erwin},
  year = {1996},
  month = sep,
  journal = {Computer Methods in Applied Mechanics and Engineering},
  volume = {136},
  number = {3},
  pages = {225--258},
  issn = {0045-7825},
  doi = {10.1016/0045-7825(96)00991-7},
  urldate = {2024-12-20}
}

@article{Sakaridis2024,
  title = {Post-{{Necking}} Full-Field {{FEMU}} Identification of Anisotropic Plasticity from Flat Notched Tension Experiments},
  author = {Sakaridis, Emmanouil and Roth, Christian C. and Jordan, Benoit and Mohr, Dirk},
  year = {2024},
  month = dec,
  journal = {International Journal of Solids and Structures},
  volume = {305},
  pages = {113076},
  issn = {0020-7683},
  doi = {10.1016/j.ijsolstr.2024.113076},
  urldate = {2026-07-15}
}

@article{Grediac1989,
  title = {Principe Des Travaux Virtuels et Identification},
  author = {Gr{\'e}diac, Michel},
  year = {1989},
  journal = {Comptes rendus de l'Acad{\'e}mie des sciences. S{\'e}rie 2, M{\'e}canique, Physique, Chimie, Sciences de l'univers, Sciences de la Terre},
  volume = {309},
  number = {1},
  pages = {1--5}
}

@article{Claire2004,
  title = {A Finite Element Formulation to Identify Damage Fields: The Equilibrium Gap Method},
  shorttitle = {A Finite Element Formulation to Identify Damage Fields},
  author = {Claire, D. and Hild, F. and Roux, S.},
  year = {2004},
  journal = {International Journal for Numerical Methods in Engineering},
  volume = {61},
  number = {2},
  pages = {189--208},
  issn = {1097-0207},
  doi = {10.1002/nme.1057},
  urldate = {2026-04-10},
  copyright = {Copyright {\copyright} 2004 John Wiley \& Sons, Ltd.},
  langid = {english}
}

@article{Flaschel2023,
  title = {Automated Discovery of Generalized Standard Material Models with {{EUCLID}}},
  author = {Flaschel, Moritz and Kumar, Siddhant and De Lorenzis, Laura},
  year = {2023},
  month = feb,
  journal = {Computer Methods in Applied Mechanics and Engineering},
  volume = {405},
  pages = {115867},
  issn = {0045-7825},
  doi = {10.1016/j.cma.2022.115867},
  urldate = {2023-04-28},
  langid = {english}
}

@article{Abbasi2026a,
  title = {Discovery of {{Hyperelastic Constitutive Laws}} from {{Experimental Data}} with {{EUCLID}}},
  author = {Abbasi, A. and Ricci, M. and Carrara, P. and Flaschel, M. and Kumar, S. and Marfia, S. and De Lorenzis, L.},
  year = {2026},
  month = jun,
  journal = {Experimental Mechanics},
  volume = {66},
  number = {5},
  pages = {877--908},
  issn = {1741-2765},
  doi = {10.1007/s11340-026-01290-6},
  urldate = {2026-07-15},
  langid = {english}
}

@misc{Alheit2026,
  title = {{{CANN-EUCLID}}: Unsupervised Constitutive Artificial Neural Network Model Discovery from Full-Field Data},
  shorttitle = {{{CANN-EUCLID}}},
  author = {Alheit, Benjamin and Kumar, Siddhant and Peirlinck, Mathias},
  year = {2026},
  month = jun,
  number = {arXiv:2606.14565},
  eprint = {2606.14565},
  primaryclass = {cs.CE},
  publisher = {arXiv},
  doi = {10.48550/arXiv.2606.14565},
  urldate = {2026-07-08},
  archiveprefix = {arXiv}
}

@article{Linka2023,
  title = {A New Family of {{Constitutive Artificial Neural Networks}} towards Automated Model Discovery},
  author = {Linka, Kevin and Kuhl, Ellen},
  year = {2023},
  journal = {Computer Methods in Applied Mechanics and Engineering},
  volume = {403},
  pages = {115731},
  issn = {00457825},
  doi = {10.1016/j.cma.2022.115731},
  urldate = {2023-08-01},
  langid = {english}
}

@misc{Moon2026,
  title = {Physics-{{Informed Discovery}} of {{Yield Functions}} in {{Plasticity}} via {{Convex Neural Representations}}},
  author = {Moon, Hyeonbin and Cho, Donghyuk and Yu, Jecheon and Yoon, Jeong Whan and Ryu, Seunghwa},
  year = {2026},
  month = jun,
  number = {arXiv:2606.19375},
  eprint = {2606.19375},
  primaryclass = {cs.LG},
  publisher = {arXiv},
  doi = {10.48550/arXiv.2606.19375},
  urldate = {2026-06-19},
  archiveprefix = {arXiv}
}

@article{Thakolkaran2025,
  title = {Can {{KAN CANs}}? {{Input-convex Kolmogorov-Arnold Networks}} ({{KANs}}) as Hyperelastic Constitutive Artificial Neural Networks ({{CANs}})},
  shorttitle = {Can {{KAN CANs}}?},
  author = {Thakolkaran, Prakash and Guo, Yaqi and Saini, Shivam and Peirlinck, Mathias and Alheit, Benjamin and Kumar, Siddhant},
  year = {2025},
  month = aug,
  journal = {Computer Methods in Applied Mechanics and Engineering},
  volume = {443},
  pages = {118089},
  issn = {00457825},
  doi = {10.1016/j.cma.2025.118089},
  urldate = {2026-04-02},
  langid = {english}
}

@article{Meng2025,
  title = {Machine-Learning-Based Virtual Fields Method: {{Application}} to Anisotropic Hyperelasticity},
  shorttitle = {Machine-Learning-Based Virtual Fields Method},
  author = {Meng, Shuangshuang and Yousefi, Ali Akbar Karkhaneh and Avril, St{\'e}phane},
  year = {2025},
  month = feb,
  journal = {Computer Methods in Applied Mechanics and Engineering},
  volume = {434},
  pages = {117580},
  issn = {0045-7825},
  doi = {10.1016/j.cma.2024.117580},
  urldate = {2025-01-23}
}

@article{Bourdyot2026,
  title = {Learning a Hyperelastic Constitutive Model from {{3D}} Experimental Data},
  author = {Bourdyot, M. and Compans, M. and Langlois, R. and Smaniotto, B. and Baranger, E. and Jailin, C.},
  year = {2026},
  month = mar,
  journal = {Computer Methods in Applied Mechanics and Engineering},
  volume = {450},
  pages = {118592},
  issn = {0045-7825},
  doi = {10.1016/j.cma.2025.118592},
  urldate = {2026-04-08}
}

@article{Ferreira2026,
  title = {Automatically {{Differentiable Model Updating}} ({{ADiMU}}): {{Conventional}}, Hybrid, and Neural Network Material Model Discovery Including History-Dependency},
  shorttitle = {Automatically {{Differentiable Model Updating}} ({{ADiMU}})},
  author = {Ferreira, Bernardo P. and Bessa, Miguel A.},
  year = {2026},
  month = jan,
  journal = {Journal of the Mechanics and Physics of Solids},
  volume = {206},
  pages = {106408},
  issn = {0022-5096},
  doi = {10.1016/j.jmps.2025.106408},
  urldate = {2026-04-08}
}

@article{Lourenco2024,
  title = {An Indirect Training Approach for Implicit Constitutive Modelling Using Recurrent Neural Networks and the Virtual Fields Method},
  author = {Louren{\c c}o, R{\'u}ben and Georgieva, Petia and Cueto, Elias and {Andrade-Campos}, A.},
  year = {2024},
  month = may,
  journal = {Computer Methods in Applied Mechanics and Engineering},
  volume = {425},
  pages = {116961},
  issn = {0045-7825},
  doi = {10.1016/j.cma.2024.116961},
  urldate = {2024-04-04}
}

@article{Li2026,
  title = {Learning Anisotropic Hyperelasticity with an Unsupervised Symmetry-Aware Equilibrium-Based Neural Network},
  author = {Li, Shun and Li, Lingfeng and Chen, Chang Qing},
  year = {2026},
  month = jun,
  journal = {Computer Methods in Applied Mechanics and Engineering},
  volume = {455},
  pages = {118911},
  issn = {0045-7825},
  doi = {10.1016/j.cma.2026.118911},
  urldate = {2026-04-11}
}

@article{Shi2025a,
  title = {Deep Learning without Stress Data on the Discovery of Multi-Regional Hyperelastic Properties},
  author = {Shi, Ruike and Yang, Haitian and Chen, Jianxu and Hackl, Klaus and Avril, St{\'e}phane and He, Yiqian},
  year = {2025},
  month = jul,
  journal = {Computational Mechanics},
  volume = {76},
  number = {1},
  pages = {117--146},
  issn = {1432-0924},
  doi = {10.1007/s00466-024-02591-0},
  urldate = {2026-05-27},
  langid = {english}
}

@article{Tac2026,
  title = {Fully Data-Driven Inverse Characterization of Heterogeneous Materials with Hyper-Network Neural {{ODEs}}},
  author = {Ta{\c c}, Vahidullah and {Amiri-Hezaveh}, Amirhossein and Bechtel, Grace N. and Loftin, Titus and Rausch, Manuel K. and Sahli Costabal, Francisco and Tepole, Adrian Buganza},
  year = {2026},
  month = mar,
  journal = {npj Computational Materials},
  publisher = {Nature Publishing Group},
  issn = {2057-3960},
  doi = {10.1038/s41524-026-02027-8},
  urldate = {2026-04-11},
  copyright = {2026 The Author(s)},
  langid = {english}
}

@article{Chaurasiya2026,
  title = {Hetero-{{EUCLID}}: {{Interpretable}} Model Discovery for Heterogeneous Hyperelastic Materials Using Stress-Unsupervised Learning},
  shorttitle = {Hetero-{{EUCLID}}},
  author = {Chaurasiya, Kanhaiya Lal and Dutta, Saurav and Kumar, Siddhant and Joshi, Akshay},
  year = {2026},
  month = apr,
  journal = {Computer Methods in Applied Mechanics and Engineering},
  volume = {452},
  pages = {118729},
  issn = {0045-7825},
  doi = {10.1016/j.cma.2026.118729},
  urldate = {2026-05-05}
}

@article{Benady2024a,
  title = {Unsupervised Learning of History-Dependent Constitutive Material Laws with Thermodynamically-Consistent Neural Networks in the Modified {{Constitutive Relation Error}} Framework},
  author = {Benady, Antoine and Baranger, Emmanuel and Chamoin, Ludovic},
  year = {2024},
  month = may,
  journal = {Computer Methods in Applied Mechanics and Engineering},
  volume = {425},
  pages = {116967},
  issn = {0045-7825},
  doi = {10.1016/j.cma.2024.116967},
  urldate = {2024-12-20}
}

@article{Wu2025,
  title = {Learning the Physics-Consistent Material Behavior from Measurable Data via {{PDE-constrained}} Optimization},
  author = {Wu, Xinxin and Zhang, Yin and Mao, Sheng},
  year = {2025},
  month = mar,
  journal = {Computer Methods in Applied Mechanics and Engineering},
  volume = {437},
  pages = {117748},
  issn = {0045-7825},
  doi = {10.1016/j.cma.2025.117748},
  urldate = {2025-01-23}
}

@article{Gavris2025,
  title = {Discovering Neural Elastoplasticity from Kinematic Observations},
  author = {Gavris, Georgios Barkoulis and Sun, WaiChing},
  year = {2025},
  month = sep,
  journal = {Proceedings of the National Academy of Sciences},
  volume = {122},
  number = {38},
  pages = {e2508732122},
  publisher = {Proceedings of the National Academy of Sciences},
  doi = {10.1073/pnas.2508732122},
  urldate = {2026-04-08}
}

@article{Gavris2026,
  title = {Discovering Neural Cohesive Zone Laws from Displacement Fields},
  author = {Gavris, Georgios Barkoulis and Sun, WaiChing},
  year = {2026},
  month = apr,
  journal = {Computer Methods in Applied Mechanics and Engineering},
  volume = {452},
  pages = {118733},
  issn = {0045-7825},
  doi = {10.1016/j.cma.2026.118733},
  urldate = {2026-04-11}
}

@article{Seidl2022,
  title = {Calibration of Elastoplastic Constitutive Model Parameters from Full-Field Data with Automatic Differentiation-Based Sensitivities},
  author = {Seidl, D. Thomas and Granzow, Brian N.},
  year = {2022},
  journal = {International Journal for Numerical Methods in Engineering},
  volume = {123},
  number = {1},
  pages = {69--100},
  issn = {1097-0207},
  doi = {10.1002/nme.6843},
  urldate = {2026-04-08},
  copyright = {{\copyright} 2021 John Wiley \& Sons Ltd.},
  langid = {english}
}

@article{Kumar2025,
  title = {A Comparative Study of Calibration Techniques for Finite Strain Elastoplasticity: {{Numerically-exact}} Sensitivities for {{FEMU}} and {{VFM}}},
  shorttitle = {A Comparative Study of Calibration Techniques for Finite Strain Elastoplasticity},
  author = {Kumar, Sanjeev and Seidl, D. Thomas and Granzow, Brian N. and Yang, Jin and Fuhg, Jan Niklas},
  year = {2025},
  month = sep,
  journal = {Computer Methods in Applied Mechanics and Engineering},
  volume = {444},
  pages = {118159},
  issn = {0045-7825},
  doi = {10.1016/j.cma.2025.118159},
  urldate = {2026-04-08}
}

@article{Akerson2025,
  title = {Learning Constitutive Relations from Experiments: 1. {{PDE}} Constrained Optimization},
  shorttitle = {Learning Constitutive Relations from Experiments},
  author = {Akerson, Andrew and Rajan, Aakila and Bhattacharya, Kaushik},
  year = {2025},
  month = aug,
  journal = {Journal of the Mechanics and Physics of Solids},
  volume = {201},
  pages = {106128},
  issn = {0022-5096},
  doi = {10.1016/j.jmps.2025.106128},
  urldate = {2026-06-02}
}

@article{Flaschel2025,
  title = {Convex Neural Networks Learn Generalized Standard Material Models},
  author = {Flaschel, Moritz and Steinmann, Paul and De Lorenzis, Laura and Kuhl, Ellen},
  year = {2025},
  month = jul,
  journal = {Journal of the Mechanics and Physics of Solids},
  volume = {200},
  pages = {106103},
  issn = {0022-5096},
  doi = {10.1016/j.jmps.2025.106103},
  urldate = {2025-03-28}
}

@article{Shi2026,
  title = {A Thermodynamically Consistent Multi-Internal-Variable Neural Network Framework for Viscoelastic Constitutive Modeling},
  author = {Shi, Ziqiang and Luo, Xue},
  year = {2026},
  month = may,
  journal = {Computational Mechanics},
  issn = {1432-0924},
  doi = {10.1007/s00466-026-02793-8},
  urldate = {2026-05-16},
  langid = {english}
}

@article{Holthusen2024,
  title = {Theory and Implementation of Inelastic {{Constitutive Artificial Neural Networks}}},
  author = {Holthusen, Hagen and Lamm, Lukas and Brepols, Tim and Reese, Stefanie and Kuhl, Ellen},
  year = {2024},
  month = aug,
  journal = {Computer Methods in Applied Mechanics and Engineering},
  volume = {428},
  pages = {117063},
  issn = {0045-7825},
  doi = {10.1016/j.cma.2024.117063},
  urldate = {2024-06-17}
}

@article{Holthusen2024a,
  title = {Polyconvex Inelastic Constitutive Artificial Neural Networks},
  author = {Holthusen, Hagen and Lamm, Lukas and Brepols, Tim and Reese, Stefanie and Kuhl, Ellen},
  year = {2024},
  journal = {PAMM},
  volume = {24},
  number = {3},
  pages = {e202400032},
  issn = {1617-7061},
  doi = {10.1002/pamm.202400032},
  urldate = {2025-04-15},
  copyright = {{\copyright} 2024 The Author(s). Proceedings in Applied Mathematics \& Mechanics published by Wiley-VCH GmbH.},
  langid = {english}
}

@article{Holthusen2026,
  title = {A Generalized Dual Potential for Inelastic {{Constitutive Artificial Neural Networks}}: {{A JAX}} Implementation at Finite Strains},
  shorttitle = {A Generalized Dual Potential for Inelastic {{Constitutive Artificial Neural Networks}}},
  author = {Holthusen, Hagen and Linka, Kevin and Kuhl, Ellen and Brepols, Tim},
  year = {2026},
  month = jan,
  journal = {Journal of the Mechanics and Physics of Solids},
  volume = {206},
  pages = {106337},
  issn = {0022-5096},
  doi = {10.1016/j.jmps.2025.106337},
  urldate = {2025-10-26}
}

@article{Holthusen2026a,
  title = {A Complement to Neural Networks for Anisotropic Inelasticity at Finite Strains},
  author = {Holthusen, Hagen and Kuhl, Ellen},
  year = {2026},
  journal = {Computer Methods in Applied Mechanics and Engineering},
  volume = {450},
  pages = {118612},
  doi = {10.1016/j.cma.2025.118612},
  langid = {english}
}

@misc{Knipper2026,
  title = {Finite {{Element-Based Material Learning}} via {{Automatic Differentiation}}: {{Learning}} Constitutive Neural Network Models from Full-Field Deformation Data},
  shorttitle = {Finite {{Element-Based Material Learning}} via {{Automatic Differentiation}}},
  author = {Knipper, Matthias and Ji, Chenyi and Brand, Malte and Linka, Kevin},
  year = {2026},
  month = may,
  number = {arXiv:2606.05199},
  eprint = {2606.05199},
  primaryclass = {physics.comp-ph},
  publisher = {arXiv},
  doi = {10.48550/arXiv.2606.05199},
  urldate = {2026-07-08},
  archiveprefix = {arXiv}
}

@article{collange2015reproducibility,
  title   = {Numerical reproducibility for the parallel reduction on multi- and many-core architectures},
  author  = {Collange, Sylvain and Defour, David and Graillat, Stef and Iakymchuk, Roman},
  journal = {Parallel Computing},
  volume  = {49},
  pages   = {83--97},
  year    = {2015},
  publisher = {Elsevier}
}

@article{boehler_irreducible_1977,
	title = {On {Irreducible} {Representations} for {Isotropic} {Scalar} {Functions}},
	volume = {57},
	copyright = {Copyright © 1977 WILEY-VCH Verlag GmbH \& Co. KGaA, Weinheim},
	issn = {1521-4001},
	url = {https://onlinelibrary.wiley.com/doi/abs/10.1002/zamm.19770570608},
	doi = {10.1002/zamm.19770570608},
	language = {en},
	number = {6},
	urldate = {2025-01-04},
	journal = {ZAMM - Journal of Applied Mathematics and Mechanics / Zeitschrift für Angewandte Mathematik und Mechanik},
	author = {Boehler, J. P.},
	year = {1977},
	note = {\_eprint: https://onlinelibrary.wiley.com/doi/pdf/10.1002/zamm.19770570608},
	pages = {323--327},
}

@article{jailin_experimental_2024,
	title = {Experimental {Learning} of a {Hyperelastic} {Behavior} with a {Physics}-{Augmented} {Neural} {Network}},
	volume = {64},
	issn = {1741-2765},
	url = {https://doi.org/10.1007/s11340-024-01106-5},
	doi = {10.1007/s11340-024-01106-5},
	language = {en},
	number = {9},
	urldate = {2026-08-28},
	journal = {Experimental Mechanics},
	author = {Jailin, C. and Benady, A. and Legroux, R. and Baranger, E.},
	month = nov,
	year = {2024},
	pages = {1465--1481},
}

@article{knauf_narouie_unsupervised_2026,
	title = {Unsupervised {Constitutive} {Model} {Discovery} from {Sparse} and {Noisy} {Data}},
	volume = {452},
	issn = {0045-7825},
	url = {https://www.sciencedirect.com/science/article/pii/S0045782525009946},
	doi = {10.1016/j.cma.2025.118722},
	urldate = {2026-08-28},
	journal = {Computer Methods in Applied Mechanics and Engineering},
	author = {Knauf Narouie, Vahab and Urrea-Quintero, Jorge-Humberto and Cirak, Fehmi and Wessels, Henning},
	month = apr,
	year = {2026},
	pages = {118722},
}
\end{small}
\end{document}